%
\documentclass[12pt, titlepage]{article}
\usepackage{amsmath, amsthm, amssymb}


%
%
%
\usepackage{subfigure}
\usepackage{graphicx}

\usepackage{dsfont}
\usepackage{mathrsfs}
\usepackage{ifsym, wasysym}
\hyphenation{op-tical net-works semi-conduc-tor IEEEtran}
\newcommand{\argmax}{\operatornamewithlimits{argmax}}
\newcommand{\argmin}{\operatornamewithlimits{argmin}}
\def\eqd{\,{\buildrel \mathscr{D} \over =}\,}
\def\endremark{\hfill$\square$}

\newtheorem{thm}{Theorem}
\newtheorem{cor}{Corollary}
\newtheorem{lem}{Lemma}
\newtheorem{alg}{Algorithm}
\newtheorem{lemA}{Lemma}[subsection]
\newtheorem{rem}{Remark}

\begin{document}

\title{Optimal   Multi-Server  Allocation to Parallel Queues With Independent Random Queue-Server Connectivity}

\author{Hussein~Al-Zubaidy\thanks{H. Al-Zubaidy is with the Edward S. Rogers Sr. Department of Electrical and Computer Engineering, University of Toronto, Toronto, ON,  M5S 1A1 Canada,  e-mail: hzubaidy@comm.utoronto.ca.},
        Ioannis~Lambadaris\thanks{I. Lambadaris is with the Department of Systems and Computer Engineering, Carleton University, Ottawa, ON, K1S 5B6 Canada,  e-mail:   ioannis.lambadaris@sce.carleton.ca.},
       Yannis~Viniotis\thanks{Y. Viniotis is with the Electrical and Computer Engineering Department, North Carolina State University, Raleigh, NC, USA, e-mail:  candice@ncsu.edu. \newline
       This work was supported by the Natural Sciences and Engineering Research Council of Canada (NSERC).}
}

%

%

\maketitle




\begin{abstract}
We investigate an optimal scheduling problem in a discrete-time system of $L$ parallel queues that are served by $K$ identical, randomly connected servers.
Each queue may be connected to a subset of the $K$ servers during any given time slot.
This model has been widely used in studies of emerging 3G/4G wireless systems. We introduce the class of Most Balancing (MB) policies and provide their mathematical characterization. We prove that MB policies are optimal; we define optimality as minimization, in stochastic ordering sense, of a range of cost functions of the queue lengths, including the process of total number of packets in the system. We use stochastic coupling arguments for our proof.  We  introduce the Least Connected Server First/Longest Connected Queue (LCSF/LCQ) policy as an  easy-to-implement approximation of MB policies. 
We conduct a simulation study to compare the performance of several policies. The simulation results show that: (a) in all cases, LCSF/LCQ approximations to the MB policies outperform the other policies, (b) randomized policies perform fairly close to the optimal one, and, (c) the performance advantage of the optimal policy over the other simulated policies increases as the channel connectivity probability decreases and as the number of servers in the system increases.
\end{abstract}


\section{Introduction, Model Description and Prior Research}\label{YV:intro_section}
Emerging 3G/4G wireless networks can be categorized as high-speed, IP-based, packet access networks. They utilize the channel variability, using data rate adaptation, and user diversity to increase their channel capacity. These systems usually employ a mixture of Time and Code Division Multiple Access (TDMA/CDMA) schemes. Time is divided into equal size slots, each of which can be allocated to one or more users. To optimize the use of the enhanced data rate, these systems allow several users to share the wireless channel simultaneously using CDMA. This will minimize the wasted capacity resulting from the allocation of the whole channel capacity to one user at a time even when that user is unable to utilize all of that capacity. Another reason for sharing system capacity between several users, at the same time slot, is that some of the user equipments at the receiving side might have design limitations on the amount of data they can receive and process at a given time.

The connectivity of users to the base station in any wireless system is varying with time and can be best modeled as a random process. The application of stochastic modeling and queuing theory to model wireless systems is well vetted in the literature. Modeling wireless systems using parallel queues with random queue/server connectivity was used  by Tassiulas and Ephremides \cite{Tassiulas}, Ganti, Modiano and Tsitsiklis \cite{ganti} and many others to study scheduler optimization in wireless systems. In the following subsection, we provide a more formal model description and motivation for the problem at hand.
\subsection{Model Description}\label{Formulation}
In this work, we assume that time is slotted into equal length deterministic intervals. We model the wireless system under investigation as a set of $L$ parallel queues with infinite capacity (see Figure \ref{fig_1}); the queues correspond to the different users in the system. We define $X_i(n)$ to represent the number of packets in the $i^{th}$ queue at the beginning of time slot $n$. The queues share a set of $K$ identical servers, each server representing a network resource, e.g.,  transmission channel. We make no assumption regarding the number of servers relative to the number of queues, i.e., $K$ can be less, equal or greater than $L$. The packets in this system are assumed to have constant length, and require one time slot to complete service. A server can serve one packet only at any given time slot. A server can only serve connected, non-empty queues. Therefore, the system can serve up to $K$ packets during each time slot. Those packets may belong to one or several queues.

The channel connectivity between a queue  and any server is random. The state of the channel connecting the $i^{th}$ queue to the $j^{th}$ server during the $n^{th}$ time slot is denoted by $G_{i,j}(n)$ and can be either connected ($G_{i,j}(n)=1$) or not connected ($G_{i,j}(n)=0$). Therefore, in a real system $G_{i,j}(n)$ will determine if  transmission channel $j$ can be used by user $i$ or not. We assume that $G_{i,j}(n)$, for all $i=1,2,\ldots,L$, $j=1,2,\ldots,K$ and $n$, are independent, Bernoulli random variables with parameter $p$.

The number of arrivals to the $i^{th}$ queue during time slot $n$ is denoted by $Z_i(n)$. 
The random variables $Z_i(n), \forall i,n$ is assumed to have Bernoulli distribution. 
We  require that arrival processes to different queues be independent of each other; we further require that the random processes $\{Z_i(n)\}$ be independent of the processes $\{G_{i,j}(n)\}$ for $i=1,2,\ldots,L$, $j=1,2,\ldots,K$.
The symmetry and independence assumptions are necessary for the coupling arguments we use in our optimality proofs. The rest are simplifying assumptions that can be relaxed at the price of a more complex and maybe less intuitive proof.

A scheduling policy  (or server allocation policy or scheduler) decides, at the beginning of each time slot, what servers will be assigned to which queue during that time slot. The objective of this work is to identify and analyze the optimal scheduling policy that minimizes, in a stochastic ordering sense, a range of cost functions of the system queue sizes, including the total number of queued packets, in the aforementioned system. The choice of the class of cost functions and the minimization process are discussed in detail in Section \ref{YV:main-result}.


\begin{figure}
\centering
\includegraphics [width=3.0in]{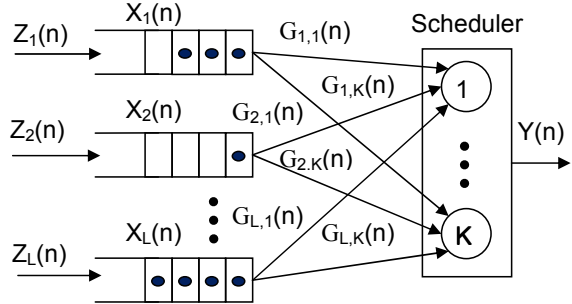}
\caption{Abstraction of downlink scheduler in a multi-server wireless network.}
\label{fig_1}
\vspace{-2mm}
\end{figure}


\subsection{Previous Work and Our Contributions}
In the literature, there is substantial research effort focusing on the subject of optimal scheduling in wireless networks with random connectivity.
Tassiulas and Ephremides \cite{Tassiulas} studied the problem of allocating a single, randomly connected server to a set of parallel queues.
They proved, using stochastic coupling arguments, that a LCQ (Longest Connected Queue) policy is optimal. 
In our work we investigate a more general model that studies the optimal allocation of $K>1$ randomly connected servers to parallel queues. We show that LCQ is not always optimal in a multi-server system where  multiple servers can be allocated to each queue at any given time slot. 
Bambos and Michailidis \cite{Bambos} worked on a similar model (a continuous time version of \cite{Tassiulas} with finite buffer capacity) and proved that under stationary ergodic input job flow and modulation processes, both `Maximum Connected Workload' and LCQ dynamic allocation policies maximize the stability region for this system. Furthermore, they proved that  a policy that allocates the server to the connected queue with the fewest empty spaces, stochastically minimizes the loss flow and maximizes the throughput \cite{Bambos2}.

Another relevant result is that reported by Ganti, Modiano and Tsitsiklis \cite{ganti}. They presented a model for a satellite node that has $K$ transmitters. The system was modeled by a set of parallel queues with symmetrical statistics competing for $K$ identical, randomly connected servers. At each time slot, no more than one server is allocated to each scheduled queue. They proved, using stochastic coupling arguments, that  a policy that allocates the $K$ servers to the $K$ longest connected queues at each time slot, is optimal. This model is similar to the one we consider in this work, except that in our model one or more servers can be allocated to each queue in the system. A further, stronger difference between the two models is that we consider the case where each queue has \emph{independent connectivities} to different servers. We make these assumptions for a more suitable representation of the 3G/4G wireless systems described earlier. These differences make it substantially harder to identify (and even describe) the optimal policy (see Section \ref{sec:MBPolicies}). A more recent  result that has relevance to our work is the one reported by Kittipiyakul and Javidi in \cite{Kittipiyakul}. They proved, using dynamic programming, that a `maximum-throughput and load-balancing'  policy minimizes the expected average cost for a two-queue, multi-server system with random connectivity. In our research work, we prove optimality of the most balancing policies in the more general problem of a multi-queue (more than two queues) and multi-server system with random channel connectivity. A stronger distinction of our work is that we proved the optimality in a stochastic ordering sense which is a stronger notion of optimality compared to the expected average cost criterion that was used in \cite{Kittipiyakul}.
Lott and Teneketzis \cite{lott} investigated a multi-class system of $N$ weighted cost parallel queues and $M$ servers with random connectivity. They also used the same restriction of one server per queue used in \cite{ganti}. They showed that an index rule is optimal and provided conditions sufficient, but not necessary, to guarantee its optimality.

Koole et al \cite{Koole} studied a model similar to that of \cite{Tassiulas} and \cite{Bambos2}. They found that the `Best User'  policy maximizes the expected discounted number of successful transmissions. Liu et al \cite{xin_liu}, \cite{xin_liu1} studied the optimality of opportunistic schedulers (e.g., Proportional Fair (PF) scheduler). They presented the characteristics and optimality conditions for such schedulers. However, Andrews \cite{Andrews} showed that there are six different implementation algorithms of a PF scheduler, none of which is stable. For more information on resource allocation and optimization in wireless networks the reader may consult \cite{Agrawal}, \cite{Andrews1}, \cite{Eryilmaz}, \cite{Stolyar}, \cite{Lu}, and \cite{Shakkottai}.

The model we present in this work can be applied to many of the previous work described above. In section \ref{sec:final-remarks} we discuss this applicability for three key publications, namely \cite{Tassiulas}, \cite{ganti} and  \cite{Kittipiyakul}, that are strongly related to our own. We also show how our model can be reduced to their models and used to describe the problems they investigated.

In summary, the main contributions of our work are the following:
\begin{enumerate}
\item We introduce and show the existence of the class of Most Balancing (MB) scheduling policies  in the model of Figure \ref{fig_1} (see Equations (\ref{eq:kappa}) and (\ref{eq:MBpolicy})). Intuitively, an MB policy attempts to balance all queue sizes at every time slot, so that the total sum of queue size differences will be minimized. 

\item We prove the optimality of MB policies for minimizing, in stochastic ordering sense, a set of functionals of the queue lengths (see Theorem \ref{thm:3}).

\item We provide low-overhead, heuristic approximations for an MB policy. At any time slot,  such policies allocate the ``least connected servers first'' to their ``longest connected queues'' (LCSF/LCQ). These  policies have  $O(L \times K)$ complexity and thus can be easily implemented.
 We evaluate the performance of these approximations via simulations. 

\end{enumerate}

The rest of the article is organized as follows. In section II, we introduce notation and define the scheduling policies. In section III, we introduce and provide a detailed description of the MB  policies.
In section IV, we introduce and characterize \textit{balancing interchanges}, that we will use in  the proof of MB optimality. In section V, we present the main result, i.e., the optimality of MB  policies. In section VI, we present the Least Balancing (LB) policies, and show that these policies perform the worst among all work conserving policies. MB and LB policies provide upper and lower performance bounds. In section VII, we  introduce practical, low-overhead approximations for such policies, namely the LCSF/LCQ policy and the MCSF/SCQ policy, with their implementation algorithms. In section VIII, we present simulation results for different scheduling policies. In section IX, we give some final remarks that show the applicability of our model to problems studied in previous work.  We present proofs for some of our results in the Appendix.

\section{Scheduling Policies} \label{Policies-section}
Recall that $L$ and $K$ denote the number of queues and servers respectively in the model introduced in Figure \ref{fig_1}. We will use \textbf{bold face}, UPPER CASE and lower case letters to represent vector/matrix quantities, random variables and sample values respectively.
In order to represent the policy action that corresponds to ``idling'' a server, we introduce a special, ``dummy'' queue which is denoted as queue 0. Allocating a server to this queue is equivalent to idling that server.
By default, queue 0 is permanently connected to all servers and contains only ``dummy'' packets.
Let $\mathds{1}_{\{A\}}$ denote the indicator function for condition $A$.
Throughout this article, we will use the following notation:
\begin{itemize}

  \item $\mathbf{G}(n)$ is an ${(L+1) \times K}$ matrix, where $G_{i,j}(n)$ for $i>0$ is the channel connectivity random variable as defined in Section \ref{YV:intro_section}. By assumption,  $G_{0,j}(n)=1$ for all $j,n$.

  \item $\mathbf{X}(n) = (X_0(n), X_1(n),X_2(n),\ldots, X_L(n))^T$ is the vector of queue lengths at the beginning of time slot $n$, measured in number of packets. We assume $X_0(1)=0$. 

  \item $\mathbf{Y}(n) = (Y_0(n), Y_1(n),Y_2(n),\ldots, Y_L(n))^T$ is the \emph{withdrawal control}.
  For any $i$, $Y_i(n)\in \{0,1,\ldots,K \}$ denotes the number of packets withdrawn from queue $i$ (and assigned to servers) during time slot $n$.

  \item $\mathbf{Z}(n) = (Z_0(n), Z_1(n),Z_2(n),\ldots, Z_L(n))^T$ is the  vector of the number of exogenous arrivals during time slot $n$.  Arrivals to queue $i\neq 0$ are as defined in Section \ref{YV:intro_section}.

  \item 
For ease of reference, we call the tuple $(\mathbf{X}(n),\mathbf{G}(n))$  the ``state'' of the system at the beginning of time slot $n$.
\end{itemize}

For any (feasible) control ($\mathbf{Y}(n)$), the system described previously evolves according to
\begin{equation}\label{sys_evol}
    \mathbf{X}(n+1)=\mathbf{X}(n)- \mathbf{Y}(n) +\mathbf{Z}(n), \qquad n=1,2,\ldots
\end{equation}

We assume that arrivals during time slot $n$ are added after removing served packets. Therefore, packets that arrive during time slot $n$ have no effect on the controller decision at that time slot and may only be withdrawn during $t=n+1$ or later.
For convenience and in order to ensure that $X_0(n)=0$ for all $n$, we define  $Z_0(n) = Y_0(n)$. We define controller policies more formally next.


\subsection{Feasible  Scheduling and Withdrawal Controls}
The withdrawal control defined earlier does not provide any information regarding server allocation. Such information is necessary for our optimality proof. To capture such information, we define the vector
$\mathbf{Q}(n)$, where $Q_j(n)\in \{0,1,\ldots, L \}$ denotes the index of the queue that is selected (according to some rule) to be served by server $j$ during time slot $n$. Note that serving the ``dummy" queue, i.e., setting $Q_j(n)=0$ indicates that server $j$ is idling during time slot $n$. For future reference, we will call $\mathbf{Q}(n)$ the \emph{scheduling (or server allocation) control}.

Using the previous notation and given a scheduling control vector $\mathbf{Q}(n)$ we can compute the withdrawal control vector as:
\begin{equation}\label{Yi_qi}
    Y_i(n) = \sum_{j=1}^K \mathds{1}_{\{ i=Q_j(n)\}}, \quad i=0,1,2,\ldots,L.
\end{equation}

We say that a given vector $\mathbf{Q}(n)\in \{0,1,\ldots, L \}^K$  is a \textsl{feasible} scheduling control (during time slot $n$) if: (a) a server is allocated to a connected queue, and, (b) the number of servers allocated to a queue (dummy queue excluded) cannot exceed the size of the queue at time $n$.
Similarly, we say that a  vector $\mathbf{Y}(n)\in \{0,1,\ldots,K \}^{L+1}$ is a \textsl{feasible} withdrawal control (during time slot $n$) if  there exists a feasible scheduling control $\mathbf{Q}(n)$ that satisfies  Equation  (\ref{Yi_qi}).

Conditions (a) and (b) above are also necessary for feasibility of a scheduling control vector $\mathbf{Q}(n)$. From Equation  (\ref{Yi_qi}), a feasible withdrawal control $\mathbf{Y}(n)$ satisfies the following necessary  conditions:
\begin{eqnarray}
    0  \leq & \hspace{-3mm} Y_i(n) \hspace{-3mm} &\leq \min\left(X_i(n),\sum_{j=1}^K G_{i,j}(n)\right), \, \forall  n,i\neq 0, \label{YVeq:cons1} \\
	 &&  \sum_{i=0}^L Y_i(n)  =  K , \quad \forall   n.  \label{YVeq:cons2}
\end{eqnarray}

For the rest of this article, we will refer to $\mathbf{Q}(n)$ as an \emph{implementation} of the given feasible control $\mathbf{Y}(n)$. We denote the set of all feasible withdrawal controls while in state $(\mathbf{x},\mathbf{g})$ by $\mathcal{Y}(\mathbf{x},\mathbf{g})$.

Note from Equation  (\ref{Yi_qi}) that, given a feasible scheduling control $\mathbf{Q}(n)$, a feasible withdrawal control $\mathbf{Y}(n)$ can be readily constructed.
Note, however, that, for any feasible $\mathbf{Y}(n)$, the feasible scheduling control $\mathbf{Q}(n)$ may not be unique. Furthermore,
given a feasible $ \mathbf{Y}(n)$, the construction of the scheduling control $\mathbf{Q}(n)$  may not be straightforward\footnote{Given a state $(\mathbf{X}(n),\mathbf{G}(n))$ and a feasible withdrawal vector      $\mathbf{Y}(n)$, one can determine the feasible scheduling control by performing a brute-force search over all feasible vectors $\mathbf{Q}(n)$.} and will not be examined in this article.

%

\subsection{Definition of Scheduling Policies} \label{Model Description}

A \emph{scheduling policy} $\pi$ (or policy $\pi$ for simplicity) is a rule that determines feasible withdrawal vectors $\mathbf{Y}(n)$ for all $n$, as a function of the past history and current state of the system $\mathbf{H}(n)$. The state history is given by the sequence of random variables
\begin{eqnarray}
	\mathbf{H}(1)\!\!&=&\!\!(\mathbf{X}(1)), \quad \text{and} \nonumber \\
    \mathbf{H}(n)\!\!\!&=&\!\!\!(\mathbf{X}(1),\mathbf{G}(1),\mathbf{Z}(1),\ldots, \mathbf{G}(n\!-\!1), \mathbf{Z}(n\!-\! 1), \mathbf{G}(n)), \nonumber \\ && \,\, n=2,3,\ldots
\end{eqnarray}

Let $\mathcal{H}_n$ be the set of all state histories up to time slot $n$. Then a policy $\pi$ can be formally defined as the sequence of measurable functions
\begin{eqnarray} \label{eq:policy}
&& u_n: \mathcal{H}_n \longmapsto \mathcal{Z}_+^{L+1}, \nonumber \\
\text{s.t.} &&  u_n(\mathbf{H}(n))\in \mathcal{Y}(\mathbf{X}(n),\mathbf{G}(n)), \quad  n=1,2,\ldots
\end{eqnarray}
where $\mathcal{Z}_+$ is the set of non-negative integers and $\mathcal{Z}_+^{L+1}=\mathcal{Z}_+ \times \cdots \times \mathcal{Z}_+$, where the Cartesian product is taken $L+1$ times.

At each time slot, the following sequence of events happens: first, the connectivities $\mathbf{G}(n)$ and the queue lengths $\mathbf{X}(n)$ are observed. Second, the packet withdrawal vector $\mathbf{Y}(n)$ is determined according to a given policy. Finally, the new arrivals $\mathbf{Z}(n)$ are added to determine the next queue length vector $\mathbf{X}(n+1)$.

We denote the set of all scheduling policies described by Equation (\ref{eq:policy}) by $\Pi$. We introduce next a subset of $\Pi$, namely the class of \emph{Most Balancing} (MB) policies.   The goal of this work is to prove that MB policies are optimal (in a stochastic ordering sense).

\section{The Class of MB Policies}
\label{sec:MBPolicies}
In this section, we provide a  description and mathematical characterization of the class of MB policies.
Intuitively, the MB policies ``attempt to minimize the queue length differences in the system  at every time slot $n$".
For a more formal characterization of MB policies, we first define the following:

Given a state $(\mathbf{x}(n),\mathbf{g}(n))$ and a policy $\pi$ that chooses the feasible control $\mathbf{y}(n) \in \mathcal{Y}(\mathbf x, \mathbf g)$ at time slot $n$, define the ``updated queue size'' $\hat{x}_i(n)= x_i(n)-y_i(n)$ as the size of queue $i, i=0, 1,\ldots, L$,  after applying the control $y_i(n)$ and just before adding the arrivals during time slot $n$. Note that because we let $z_0(n)= y_0(n)$, we have $\hat x_0(n) \in \mathcal Z$ where $\mathcal Z$ is the set of all integers, i.e., we allow $\hat x_0(n)$ to be negative. 

We define  $\kappa_n(\pi)$, the \textit{``imbalance index''} of policy $\pi$  at time slot $n$, as the following sum of differences:  

\begin{equation} \label{eq:kappa}
   \kappa_n(\pi) = \sum_{i=1}^{L} \sum_{j=i+1}^{L+1} (\hat x_{[i]}(n) -\hat x_{[j]}(n) ),
\end{equation}

\noindent where $[k]$ denotes the index of the $k^{th}$ longest queue after applying the control $\mathbf y(n)$ and before adding the arrivals at time slot $n$. By convention, queue `$0$' (the ``dummy queue") will always have order $L+1$ (i.e., the queue with the minimum length). This definition ensures that the differences are nonnegative and a pair of queues is accounted for in the summation only once; moreover, as we shall see in 
Lemma \ref{lem:alg1c}, this definition allows for a straightforward calculation and comparison of various policies\footnote{We have experimented with alternatives to Equation (\ref{eq:kappa})  that use lexicographic ordering of queues and ``Min-Max'' definitions (e.g., minimize the length of the largest queue). However, we were not able to derive results equivalent to Lemma \ref{lem:alg1c}.}. 

It follows from Equation (\ref{eq:kappa}) that the minimum possible value of the imbalance index is equal to $L \cdot \hat x_{[L]}$ (i.e., all $L$ queues have the same length which is equal to the shortest queue length) which is indicative of a fully balanced system. It also follows that the maximum such value is equal to $ 2(L-1) \hat x_{[1]} - (L-2)\hat x_{[L]}$. This value is attained when the $L-1$ longest queues have the same size.

Let $ \Pi^{MB}$ denote the set of all MB policies, then we define the elements of this set as follows:

\noindent\textbf{Definition:}
A \textit{Most Balancing} (MB) policy is a policy $\pi \in \Pi^{MB}$ that, at every $n=1,2, \ldots$, chooses feasible withdrawal vector $\mathbf y(n) \in \mathcal{Y}(\mathbf x, \mathbf g)$ such that the imbalance index  at that time slot is minimized, i.e.,
\begin{equation} \label{eq:MBpolicy}
   \Pi^{MB} = \big\{\pi\in \Pi:\argmin_{\mathbf y(n) \in \mathcal{Y}(\mathbf x, \mathbf g)}  \kappa_n(\pi), \quad \forall n \big \}
\end{equation}

The  set $\Pi^{MB}$ in Equation (\ref{eq:MBpolicy}) is well-defined and non-empty, since the minimization is over a finite set.
Note that the set of MB policies may have more than one element. This could happen, for example, when at a given time slot $n$, a server  $k$ is connected to two or more queues of equal size, which happen to be the longest queues connected to this server after allocating all the other servers. 
To illustrate this case, consider a two-queue system with a single, fully-connected server at time slot $n$. Let $\mathbf x(n) =(5,5)$. Assume that policy $\pi_1$ (respectively $\pi_2$) chooses a withdrawal vector $\mathbf y(n)=(1,0)$ (respectively $\mathbf y^*(n)=(0,1)$). Then both policies minimize the imbalance index, and  $\kappa_n(\pi_1) = \kappa_n(\pi_2) =10$. 

Given $\mathbf{X}(t)$ and $\mathbf{G}(t)$, one can  construct an MB policy using a direct search over all possible server allocations. For large $L$ and $K$, this can be a challenging computational task and is not the focus of this work. In Section \ref {heurestic_implem}, we provide a low-complexity heuristic algorithm (LCSF/LCQ) to approximate MB policies.  

\begin{rem}
Note that the LCQ policy in  \cite{Tassiulas} is a most balancing (MB) policy for $K=1$ (i.e., the one server system presented in \cite{Tassiulas}). Extension of LCQ to $K>1$ (i.e., allocating all the servers to the longest queue in the multiserver model) may not result in a MB policy, as the following example demonstrates.

Consider a system of three queues with three fully-connected servers during time slot $n$. Let $\mathbf x(n)=(6,5,4)$. An LCQ policy in the spirit of \cite{Tassiulas} that allocates all servers to the longest connected queue results in queue size vector $ \mathbf {\hat{x}} (n)=(3,5,4)$. 
Moreover, an LCQ policy in the spirit of \cite{ganti} that allocates the three servers to the three longest connected queues   results in queue size vector $\mathbf {\hat{x}}(n)=(5,4,3)$. Both policies have  $\kappa_n(\pi)=16$.
An MB policy results in queue size vector $\mathbf {\hat{x}}(n)=(4,4,4)$ and $\kappa_n(\pi)=16$. 
\endremark

\end{rem}

\subsection{Comparing arbitrary policies to an MB policy}
When comparing various policies to an MB policy, the definition in Equation (\ref{eq:MBpolicy}) is cumbersome since it involves all time instants $n$.  
The subsets $\Pi_n$ we introduce next define policies that are related to MB policies and  allow us to perform comparisons \textit{one single instant at a time}. 

Consider any fixed $n\geq 1$; we say that a policy $\pi\in\Pi$ ``has the MB property'' at time $n$, if  $\pi$ achieves the minimum value of the index $\kappa_n(\pi)$. 

\textbf{Definition:} For any given time  $n\ge 0$, $\Pi_n$ denotes the set of policies that have the MB property at all  time slots $t\le n$ (and are arbitrary for $t>n$). 

We have that $\Pi=\Pi_0$. Note that the set $\Pi_n$ is not empty, since MB policies are elements of it. We can easily see that these sets form a monotone sequence, with 

\begin{equation}\label{Pin-monotone}
\Pi_n \subseteq  \Pi_{n-1}
\end{equation}

\noindent Then the set $\Pi^{MB}$ in Equation (\ref{eq:MBpolicy}) can be defined as 

\[
\Pi^{MB}= \bigcap_{n=1}^{\infty} \Pi_n.
\]


The vector $\mathbf D$ defined in Equation (\ref{eq:I1}) is a measure of how much an arbitrary policy $\pi$ differs from a given MB policy during a given time slot $n$.

\textbf{Definition:} Consider a given state $(\mathbf x(n), \mathbf g(n))$  and a policy $\pi $ that chooses the feasible withdrawal vector $\mathbf y(n)$ during time slot $n$. Let $\mathbf y^{MB}(n)$ be a withdrawal vector chosen by an MB policy during the same time slot $n$. We define the ($L+1)\times 1$-dimensional vector $\mathbf D \in \mathcal Z^{L+1}$ as
\begin{equation}\label{eq:I1}
	\mathbf D= \mathbf y^{MB}(n) - \mathbf y(n).
\end{equation}

\noindent Note that, for notational simplicity, we omit the dependence of $\mathbf D$ on the policies and the time index $n$. Intuitively, a negative element $D_i$ of vector  $\mathbf D$  indicates that more packets than necessary (compared to a policy that has the MB property) have been removed from queue $i$ under policy $\pi$. 
 
The following lemma quantifies the difference between an arbitrary policy and an MB policy (at time $n$). Its proof is given in Appendix \ref{appendixB1iii}. 

\begin{lem} \label{lem:I1}
Consider a given state $(\mathbf x(n), \mathbf g(n))$ and a policy $\pi\in \Pi$. Then, (a) if $\mathbf D=0$, the policy $\pi$ has the MB property at time $n$, and, (b) if $\pi$ has the MB property at time $n$, the vector $\mathbf D$ has components that are $0, +1, $ or $-1$ only.
\end{lem}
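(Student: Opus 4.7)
Part (a) is direct. If $\mathbf D = \mathbf 0$, then $\mathbf y^{\pi}(n) = \mathbf y^{MB}(n)$, so the updated queue sizes $\hat{\mathbf x}(n)$ agree under $\pi$ and $\pi^{MB}$ and therefore $\kappa_n(\pi) = \kappa_n(\pi^{MB})$. Since $\pi^{MB}$ attains $\min_{\mathbf y \in \mathcal Y(\mathbf x,\mathbf g)} \kappa_n$ by definition, so does $\pi$; that is, $\pi$ has the MB property at time $n$.

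For part (b) I would argue by contrapositive: assume $\pi$ uses a feasible $\mathbf y^{\pi}(n)$ with some component of $\mathbf D$ of absolute value at least $2$, and exhibit a feasible withdrawal with strictly smaller imbalance than $\pi$, contradicting its MB property. The key constraint is $\sum_i D_i = \sum_i y^{MB}_i - \sum_i y^{\pi}_i = K - K = 0$. Combined with some $|D_i| \geq 2$, this forces the existence of indices $a,b$ with $D_a \geq 1$, $D_b \leq -1$, and at least one of $|D_a|,|D_b|$ equal to $2$ or more. After possibly relabeling, take $D_a \geq 2$ and $D_b \leq -1$, equivalently $y^{MB}_a \geq y^{\pi}_a + 2$ and $y^{\pi}_b \geq y^{MB}_b + 1$.

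The plan is to consider the single-exchange modification $\tilde y_a = y^{\pi}_a + 1$, $\tilde y_b = y^{\pi}_b - 1$, $\tilde y_k = y^{\pi}_k$ otherwise. Pointwise, the cap in (\ref{YVeq:cons1}) is preserved at $a$ (since $y^{MB}_a \geq \tilde y_a$) and $\tilde y_b \geq y^{MB}_b \geq 0$; the sum in (\ref{YVeq:cons2}) is preserved. To show $\tilde{\mathbf y} \in \mathcal Y(\mathbf x,\mathbf g)$ one must produce a feasible scheduling control $\tilde{\mathbf Q}$ realizing $\tilde{\mathbf y}$. I would do this by an alternating-path argument on the symmetric difference of the scheduling controls underlying $\mathbf y^{\pi}$ and $\mathbf y^{MB}$: because $y^{MB}_a > y^{\pi}_a$, there is a server assigned to $a$ in $\mathbf Q^{MB}$ but not in $\mathbf Q^{\pi}$; following the alternating chain induced by the two assignments, which cannot close on any queue whose $\pi$-count matches its MB-count, one eventually reaches a server assigned to $b$ in $\mathbf Q^{\pi}$ but not in $\mathbf Q^{MB}$. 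Applying this chain of reassignments to $\mathbf Q^{\pi}$ yields the desired $\tilde{\mathbf Q}$.

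For the strict decrease, I would use the identity $\kappa_n = \sum_{i<j} |\hat x_i - \hat x_j|$ (which follows from (\ref{eq:kappa}) by re-summing the sorted differences) and analyze the change under $\hat x^{\pi}_a \mapsto \hat x^{\pi}_a - 1$, $\hat x^{\pi}_b \mapsto \hat x^{\pi}_b + 1$. A pairwise case split on the position of each $\hat x^{\pi}_k$ ($k \neq a,b$) relative to $\hat x^{\pi}_a$ and $\hat x^{\pi}_b$ yields $\Delta \kappa_n \leq -2$ whenever $\hat x^{\pi}_a \geq \hat x^{\pi}_b + 2$. In the remaining regime $\hat x^{\pi}_a \leq \hat x^{\pi}_b + 1$, the assumption $D_a \geq 2$, $D_b \leq -1$ forces $\hat x^{MB}_b \geq \hat x^{MB}_a + 2$, and the mirror exchange applied to $\mathbf y^{MB}$ (feasible by the same alternating-path argument with the roles of $\pi$ and $\pi^{MB}$ swapped) reduces $\kappa_n(\pi^{MB})$ by at least $2$, contradicting the optimality of $\mathbf y^{MB}$ instead. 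Either way we reach a contradiction, completing the proof. The primary obstacle is the feasibility step: verifying that the alternating chain in the symmetric difference of $\mathbf Q^{\pi}$ and $\mathbf Q^{MB}$ terminates at the right queue and that each intermediate swap respects the connectivity constraints. The strict-decrease analysis, though needing a short case split, is routine once the identity $\kappa_n = \sum_{i<j}|\hat x_i - \hat x_j|$ and the ordering information on $\hat x^{\pi}$ versus $\hat x^{MB}$ are in hand.
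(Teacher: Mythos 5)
Part (a) of your proposal coincides with the paper's argument and is fine. For part (b), your route is genuinely different from the paper's (the paper deduces the claim directly from its Lemma \ref{lem:alg1c}, which characterizes when two withdrawal vectors can have equal imbalance index; you argue by contrapositive via an explicit exchange whose feasibility you certify with an alternating-path construction, in effect re-deriving the machinery of Lemmas \ref{lem:002}, \ref{lem:I11} and \ref{lem:I4}). The strategy is viable, but as written it has two concrete unjustified steps. First, the alternating chain started from a server assigned to $a$ under $\mathbf Q^{MB}$ but not under $\mathbf Q^{\pi}$ is only guaranteed to terminate at \emph{some} queue where $\pi$ withdraws more than the MB policy (some $t$ with $D_t\le -1$); nothing forces it to reach your pre-selected $b$. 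This matters because your Case-B arithmetic needs the feasible exchange to involve a pair whose combined discrepancy is at least $3$ ($D_a-D_b\ge 3$), so you must take $b$ to be the chain's endpoint; and then your ``after possibly relabeling, take $D_a\ge 2$'' does not cover the situation where every positive component of $\mathbf D$ equals $+1$ and only a negative component has magnitude $\ge 2$ — in that case the chain has to be grown from the deficit side, which is not a relabeling of your argument as stated.

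Second, and more seriously, the feasibility of the ``mirror exchange'' $\mathbf y^{MB}+\mathbf I(b,a)$ is not delivered by ``the same alternating-path argument with the roles of $\pi$ and $\pi^{MB}$ swapped'': rerunning the swapped argument only produces a feasible exchange between \emph{some} pair of surplus/deficit queues for $\mathbf y^{MB}$, not necessarily between $b$ and $a$, and for an arbitrary such pair you do not have the two-packet gap $\hat x^{MB}_{t}\ge \hat x^{MB}_{s}+2$ needed to contradict the optimality of $\mathbf y^{MB}$. The repair is to reverse the \emph{specific} chain you already built from $a$ to $b$: each server on it is connected to both its $\pi$- and MB-assigned queues, the head step is feasible because $\hat x^{MB}_b\ge \hat x^{MB}_a+2\ge 2\ge 1$, and each intermediate queue regains a packet before one is taken from it; this reversed chain implements $\mathbf y^{MB}+\mathbf I(b,a)$ and, combined with Lemma \ref{lem:alg1c}, yields the desired strict decrease and the contradiction. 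With these two points fixed (endpoint of the chain defining the pair, symmetric treatment of a large negative component, and chain reversal in place of the swapped-roles claim), your argument closes; as submitted, however, the two cited steps do not hold as stated.
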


Consider a policy $\pi\in \Pi$; let $h_{\pi} = \sum_{i=0}^L |D_i|/2 $.  As we show in the Appendix (see Lemma \ref{lem:I2}), $h_{\pi} $ is integer-valued and $0 \le h_{\pi}  \le K$.  In view of Lemma \ref{lem:I1}, $h_{\pi} $ can be seen as a measure of ``how close'' the policy $\pi$ is to having the MB property at time $n$.   

\textbf{Definition:} For any given time  $n$ and integer $ h$, where $0 \leq h \leq K $, define the set $\Pi_n^h $ as the set that contains all policies $\pi \in \Pi_{n-1}$, such that $ h_{\pi}  \leq h$. 


From Lemma \ref{lem:I1}, we can see that $\Pi_n^0 = \Pi_n$. 
We can  easily check that $ \Pi_n^K = \Pi_{n-1}$, so $\pi \in \Pi_n^K$ by default. 
$\{\Pi_n^{h}  \}_{h=0}^K$ forms a monotone sequence, with   

\begin{equation}\label{Pinh-sequence}
\Pi_n = \Pi_n^0 \subseteq \cdots \subseteq  \Pi_n^{h}  \subseteq \cdots \subseteq\Pi_n^K
= \Pi_{n-1}
\end{equation} 

\noindent We exploit the monotonicity property of the $\Pi_n^{h}$ sets in the next section, when we show how balancing interchanges reduce the imbalance index of a given policy. 

 Note that the set $\Pi$ of all policies can be denoted as 

\begin{equation}\label{Pinh-sequence1}
 \Pi = \Pi_0 =\cup_{h=0}^K \Pi_1^h.
\end{equation} 
 
It follows from the last two equations that an arbitrary policy $\pi\in\Pi$ will also belong to a set $\Pi_{n-1}$, for some $n\ge 1$. The proof of optimality in Section \ref{YV:main-result} is based on comparisons of $\pi$ to a series of policies that belong to the subsets $\Pi_n^{h}$ (see Lemma \ref{lem:6}). 

\section{Balancing Interchanges}\label{YV:Balancing-Interchanges}

In this section, we introduce the notion of ``balancing interchanges''. Intuitively, an \emph{interchange} $\mathbf I(f,t)$ between two queues, $f$ and $t$, describes the action of withdrawing a packet from queue $f$ instead of queue $t$ (see Equations (\ref{eq:30002}) and (\ref{eq:30003})).  Such interchanges are used to relate the imbalance indices of various policies (see Equation (\ref{eq:I001})); \emph{balancing} interchanges are special in two ways: (a) they do not increase the imbalance index (see Lemma \ref{lem:YV1}) and thus provide a means to describe how a policy can be modified to obtain the MB property at time $n$, and, (b) they preserve the queue size ordering we define in the next section (see relations R1-R3 in Section \ref{PreferredOrder-section}). This ordering is crucial in proving optimality.

Interchanges can be implemented via server reallocation. Since there are $K$ servers, it is intuitive that at most $K$ interchanges suffice to convert any arbitrary policy to a policy that has the MB property at time $n$. The crux of Lemma \ref{lem:I6}, the main result of this section, is that such interchanges are \textit{balancing}.  

\subsection{Interchanges between two queues}
Let $f \in \{0,1,\dots,L\}, ~t \in \{0,1,\dots,L\}$ represent the indices of two queues that we refer to as the `from' and `to' queues. Define the $(L+1)\times 1$-dimensional  vector $\mathbf I(f,t)$, whose $j$-th element is given by:
\begin{eqnarray}\label{eq:newI4}
I_j(f,t) &=&
 \left\{
  \begin{array}{cc}
    0, & \hbox{$ t = f $;} \\
    +1, & \hbox{$ j = f,  ~ f\ne t$;} \\
    -1, & \hbox{$ j = t,  ~t\ne f$;} \\
    0, & \hbox{otherwise.}
  \end{array}
\right.
\end{eqnarray}


Fix an initial state $(\mathbf{x}(n),\mathbf{g}(n)) $ at time slot $n$; consider a policy $\pi$ with a (feasible) withdrawal vector $\mathbf{y}(n)$. Let
\begin{equation}\label{eq:I30000}
 \mathbf{y^*}(n) = \mathbf{y}(n) + \mathbf I(f,t), ~~f\ne t,
\end{equation}
be another withdrawal vector. The two vectors $\mathbf{y}(n),\mathbf{y^*}(n)$ differ only in the two components $t,f$; under the withdrawal vector $\mathbf{y^*}(n)$, an additional packet is removed from queue $f$, while one  packet less is removed from queue $t$. Note that either $t$ or $f$ can be the dummy queue. In other words,
\begin{eqnarray}\label{eq:30002}
  y^*_f(n) &=& y_f(n) + 1  \\
  y^*_t(n) &=& y_t(n) - 1 \label{eq:30003}  \\
  y^*_i(n) &=& y_i(n), \quad \forall i\ne f,t. \label{eq:30004}
\end{eqnarray}

In the sequel, we will call $\mathbf I(f,t)$ an \emph{interchange} between queues $f$ and $t$.   We will call $\mathbf I(f,t)$ a \emph{feasible interchange} if it results in a feasible withdrawal vector $\mathbf{y^*}(n)$.
It follows immediately from Equations (\ref{sys_evol})  
and (\ref{eq:I30000}) that the $\mathbf I(f,t)$ interchange will result in a new vector, $\mathbf{\hat x}^*(n)$, of updated queue sizes, such that:
%
%
\begin{equation}\label{eq:I30000yv}
 \mathbf{\hat x^*}(n) = \mathbf{\hat x}(n) - \mathbf I(f,t), ~~f\ne t.
\end{equation}

We are interested next in describing sufficient conditions for ensuring feasible interchanges.
 
\subsection{Feasible Single-Server Reallocation}

Given the state $(\mathbf{x}(n),\mathbf{g}(n)) $, let $\mathbf y(n)$ be any feasible withdrawal vector at time slot $n$ that is implemented via $\mathbf q(n)$. We define a ``feasible, single-server reallocation'' (from queue $t$ to queue $f$) as the reallocation of a single server $k$ from queue $t$ to queue $f$, such that the new scheduling control $\mathbf q^*(n)$ is also \textit{feasible}.  The conditions $g_{f,k}(n)\cdot g_{t,k}(n)\cdot \mathds 1_{\{ q_{k}(n) = t \}} =1$ and $\hat x_f(n) \geq 1$ are sufficient for the reallocation of server $k$  (from queue $t$ to queue $f$) to be feasible. 

A feasible, single-server  reallocation from queue $t$ to queue $f$ results into a feasible interchange $\mathbf I(f,t)$. However, the reverse may not be true, as we detail in the following section. 

\subsection{Sufficient conditions  for a feasible  interchange}

Consider again the state $(\mathbf{x}(n),\mathbf{g}(n))$ and feasible scheduling control $\mathbf q(n)$. 
The feasible interchange $\mathbf I(f,t)$ in Equation (\ref{eq:I30000})  may result from  a \textit{sequence} of $m$ feasible, single-server reallocations among several queues, as demonstrated in Figure \ref{example}, where $1 \leq m \leq K$. 

\begin{figure}
\centering
\includegraphics[width=2.7in]{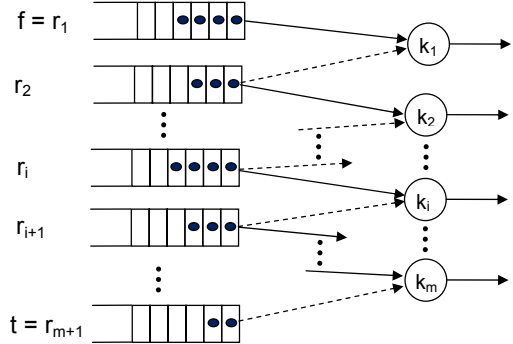}
\caption{A sequence of $m$ single-server reallocations results in a feasible  $\mathbf I(f,t)$. The dotted line denotes original server allocation. The solid line denotes server reallocation that implements $\mathbf I(f,t)$.}
\label{example}
\end{figure}

Let $\mathbf r \in \{0,1, \ldots, L\}^{m+1}$ denote a sequence of queue indices, where $r_1 = f$  and $r_{m+1} =t$. Let $k_i: k_i \in \{1,2,\ldots, K\}$ denote the server reallocated from queue $r_{i+1}$ to queue $r_i$.
Then the following are  sufficient  conditions for the feasibility of the interchange operation of Equation (\ref{eq:I30000}):

\begin{eqnarray}\label{eq:I101}
 &&\sum_{i=1}^{m} g_{r_i,k_i}(n)\cdot g_{r_{i+1},k_i}(n)\cdot \mathds 1_{\{ q_{k_i}(n) = r_{i+1} \}} = m,  \\
 &&\hat x_f(n) \ge 1, \quad \text{if } \, f \in \{1,2,\ldots,L \},    \label{eq:I102.1} 
\end{eqnarray}
for some integer $1\le m\le K$ and $\mathbf r \in  \{0,1, \ldots, L\}^{m+1}$.

Constraint (\ref{eq:I101}) ensures  that connectivity conditions allow for the feasibility of all $m$ intermediate single-server reallocations. The sequence of  server reallocations
starts by  reallocating  server $k_1$ to queue $f= r_1$. In this case,  queue $r_1 $ is reduced by one packet (i.e., an extra packet is withdrawn from queue $f$) and  queue $r_2$ is increased by one packet.  Constraint (\ref{eq:I102.1}) ensures that a
packet can be withdrawn from queue $f$. The reallocation of server $k_1$ insures
that queue $r_2$  contains at least one packet for the second
intermediate single-server reallocation to be feasible even when
$\hat x_{r_2} (n) = 0$. Same is true for any queue $r_i: i \in \{2, 3, \ldots, m\}$.
Therefore,  constraints (\ref{eq:I101}) and (\ref{eq:I102.1}) are also sufficient for the feasibility
of the interchange $\mathbf I(f,t)$.




\subsection{``Balancing'' interchanges} \label{sec:BI}

\textbf{Definition:} A feasible interchange $\mathbf I(f,t)$ is  ``\textit{balancing}'' if
\begin{equation}\label{eq:I001YV}
\hat x_f(n) \geq \hat x_t(n) +1.
\end{equation}

$\mathbf I(f,t)$ is     ``\textit{unbalancing}'' if
\begin{equation}\label{eq:I002YV}
\hat x_f(n) \le \hat x_t(n).
\end{equation}

Balancing interchanges result in policies that may reduce the imbalance index, as the following lemma states.

\begin{lem} \label{lem:YV1}
Consider two policies $\pi^*$ and $\pi$, related via the balancing interchange
\[
\mathbf y^*(n)= \mathbf y(n) + \mathbf I(f,t),
\]
at time slot $n$. Then the imbalance indices for the two policies  are related via
\begin{equation}\label{eq:I001}
	\kappa_n (\pi^*)= \kappa_n (\pi) - 2(s-l) \cdot \mathds 1_{\{ \hat x_{[l]}(n) \geq \hat x_{[s]}(n) +2 \} }
\end{equation}
\noindent where $l$ (respectively $s$) is the order of queue $f$ (respectively $t$) in $\hat x(n)$ when ordered in descending order, such that, $ s>l; x_l> x_a, \forall a>l$ and $x_s < x_b, \forall b<s$.\footnote{These conditions state that when there exist multiple components that have the same value as $x_l$ (respectively $x_s$) only the last  (respectively the first) of the components in order is considered. Intuitively, we use $s$ (respectively $l$) to refer to the order of the ``shorter''  (respectively the ``longer'')  queue of the two queues used in the interchange.}.
\end{lem}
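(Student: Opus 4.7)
The plan is to rewrite $\kappa_n$ as an explicit linear combination of the sorted updated queue sizes $\hat x_{[k]}(n)$, determine how the balancing interchange perturbs that sorted vector, and then compute the difference directly.

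First, I would count, for each sorted position $k\in\{1,\dots,L+1\}$, how many times $\hat x_{[k]}$ appears with a $+$ or $-$ sign in the double sum defining $\kappa_n$: position $k$ contributes $+(L+1-k)$ from the outer index and $-(k-1)$ from the inner index, giving
\[
\kappa_n(\pi) \;=\; \sum_{k=1}^{L+1}(L+2-2k)\,\hat x_{[k]}(n).
\]
By (\ref{eq:I30000yv}), the interchange changes the queue-indexed vector only at coordinates $f$ and $t$, namely $\hat x_f \to \hat x_f-1$ and $\hat x_t \to \hat x_t+1$. Since $\kappa_n$ depends only on the sorted values, the problem reduces to tracking how the sorted multiset changes.

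I would then split into two subcases according to the size of the gap. If $\hat x_f = \hat x_t+1$, then the interchange simply swaps the two values between queues $f$ and $t$: the multiset $\{\hat x_0,\hat x_1,\dots,\hat x_L\}$ is preserved, the sorted vector is unchanged, and hence $\kappa_n(\pi^*)=\kappa_n(\pi)$, which agrees with (\ref{eq:I001}) since the indicator evaluates to $0$. If instead $\hat x_f\ge \hat x_t+2$, I claim that $\hat x^*_{[l]}=\hat x_{[l]}-1$, $\hat x^*_{[s]}=\hat x_{[s]}+1$, and $\hat x^*_{[k]}=\hat x_{[k]}$ for every $k\notin\{l,s\}$. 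The justification uses the tie-breaking conventions for $l$ (last occurrence of $\hat x_f$) and $s$ (first occurrence of $\hat x_t$): every intermediate index $l<k<s$ satisfies $\hat x_t<\hat x_{[k]}<\hat x_f$, so by integrality $\hat x_t+1 \le \hat x_{[k]} \le \hat x_f-1$. Thus the new value $\hat x_f-1$ still dominates every entry at sorted index $\ge l$ and is dominated by entries at index $<l$ (these are $\ge \hat x_f$), so it occupies position $l$; a symmetric argument places $\hat x_t+1$ at position $s$. Substituting into the linear expansion yields
\[
\kappa_n(\pi^*)-\kappa_n(\pi) \;=\; (L+2-2l)(-1) + (L+2-2s)(+1) \;=\; -2(s-l),
\]
matching (\ref{eq:I001}) in this subcase.

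The main obstacle is the sorted-vector bookkeeping in Subcase B, specifically verifying that no positions strictly between $l$ and $s$ shift and that the two modified values re-insert exactly at positions $l$ and $s$ rather than somewhere else; this is where the precise "last/first occurrence" conventions in the definition of $l,s$ are essential. A minor edge case is the dummy queue: when $t=0$, feasibility of the interchange forces $\hat x_f\ge 1$ while $\hat x_0\le -1$ (since $y_0\ge 1$ is needed to reallocate a server away from queue $0$), so the gap is automatically at least $2$ and only Subcase B applies, with queue $0$ remaining pinned to sorted position $L+1$ by convention.
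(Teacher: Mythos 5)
Your proposal is correct, and it reaches Equation (\ref{eq:I001}) by a somewhat different route than the paper. The paper proves this lemma by invoking Lemma \ref{lem:alg1c} (Appendix \ref{appendixB1}), whose proof keeps the imbalance index in its pairwise-difference form and enumerates five mutually exclusive classes of terms involving the interchanged components, counts how many terms fall in each class ($1$, $L-l$, $s-2$, $l-1$, $L-s+1$), and sums the contributions to get $-2(s-l)$, treating the case $\hat x_{[l]}=\hat x_{[s]}+1$ separately as a pure permutation. You instead linearize the index in the sorted coordinates, $\kappa_n=\sum_{k}(L+2-2k)\,\hat x_{[k]}$, after which the whole computation reduces to the single combinatorial fact that the sorted vector changes only at positions $l$ and $s$ (by $-1$ and $+1$ respectively) when $\hat x_{[l]}\ge \hat x_{[s]}+2$, and not at all when $\hat x_{[l]}=\hat x_{[s]}+1$. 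Your localization argument is exactly the role played in the paper by the tie-breaking conventions on $l$ and $s$ (last occurrence of the larger value, first occurrence of the smaller), and your verification that the two modified values re-insert precisely at positions $l$ and $s$ is sound, including the boundary situation $s=l+1$. What your version buys is a shorter, index-free computation and a transparent treatment of the equal-gap (swap) case; what the paper's version buys is that it never needs to argue about re-sorting explicitly, since the new positions $l',s'$ are carried through the case count. Your remark on the dummy queue is also apt: since $\hat x_0\le 0\le \hat x_i$ for all real queues under any feasible control, pinning queue $0$ at position $L+1$ is consistent with descending order, so $\kappa_n$ indeed depends only on the sorted multiset and your coefficient expansion applies verbatim; and when $t=0$ feasibility forces a gap of at least $2$, so only your Subcase B can occur there, as you note.
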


The proof is a direct consequence of Lemma \ref{lem:alg1c} in Appendix \ref{appendixB1} and the fact that, by definition  of the balancing interchange, we have $s>l$.

In words, Equation (\ref{eq:I001}) states that an interchange $\mathbf I(f,t)$, when balancing, results in: either a cost reduction of $2(s-l)$  (when $\hat x_{f}(n)=\hat x_{[l]}(n) \geq \hat x_{[s]}(n) +2 = \hat x_{t}(n)+2$) or an unchanged cost (when  $\hat x_{f}(n)= \hat x_{t}(n)+1$). The latter case agrees with intuition, since the balancing interchange in this case will result in simply permuting the lengths of queues $f$ and $t$; this permutation does not change the total sum of differences (and hence the imbalance index) in the resulting queue length vector.  

We determine next conditions that characterize what interchanges are balancing. We also  describe how balancing interchanges transform an arbitrary policy to an MB policy. 

\subsection{How to determine balancing interchanges}\label{section-lem:I4}




Lemma \ref{lem:I4} provides a selection criterion to systematically select balancing (and hence improving) interchanges. Lemma \ref{lem:I6} provides a bound on the number of  interchanges needed   to convert any policy into one that has the  MB property at time $n$. The proofs of the two lemmas are given in Appendix \ref{appendixB1i} and \ref{appendixC1i} respectively.

\begin{lem} \label{lem:I4}
Consider a given state $(\mathbf x(n), \mathbf g(n))$ and a feasible withdrawal vector $\mathbf y(n)$. Any feasible interchange $\mathbf I(f,t)$ with indices $f$ and $t$ such that $D_f \geq +1$, $D_t \leq -1$  is a balancing interchange.
\end{lem}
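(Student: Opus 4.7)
My plan is to split the task into a short arithmetic reduction and a structural property of MB, the latter established by contradiction through a local perturbation of $\mathbf y^{MB}(n)$. First I would unpack the hypotheses on $\mathbf D$ in terms of updated queue sizes $\hat x_i(n) = x_i(n) - y_i(n)$ and $\hat x_i^{MB}(n) = x_i(n) - y_i^{MB}(n)$. From $D_f \ge +1$ and $D_t \le -1$ I obtain $\hat x_f(n) \ge \hat x_f^{MB}(n) + 1$ and $\hat x_t(n) \le \hat x_t^{MB}(n) - 1$, whence
\begin{equation*}
   \hat x_f(n) - \hat x_t(n) \;\ge\; \hat x_f^{MB}(n) - \hat x_t^{MB}(n) + 2.
\end{equation*}
Consequently, the desired balancing condition $\hat x_f(n) \ge \hat x_t(n) + 1$ will follow as soon as I establish the intrinsic MB inequality $\hat x_f^{MB}(n) \ge \hat x_t^{MB}(n) - 1$.

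To prove this intrinsic inequality I argue by contradiction: assume $\hat x_t^{MB}(n) \ge \hat x_f^{MB}(n) + 2$ and consider the alternative withdrawal vector $\mathbf y' = \mathbf y^{MB}(n) + \mathbf I(t,f)$. The necessary component bounds (\ref{YVeq:cons1})--(\ref{YVeq:cons2}) follow from $y_f^{MB}(n) \ge y_f(n) + 1 \ge 1$ and $y_t^{MB}(n) + 1 \le y_t(n) \le \min\bigl(x_t(n), \sum_j g_{t,j}(n)\bigr)$. Once $\mathbf y'$ is shown to be a feasible withdrawal vector, the interchange $\mathbf I(t,f)$ applied at $\mathbf y^{MB}(n)$ is balancing with strict gap $\hat x_t^{MB}(n) - \hat x_f^{MB}(n) \ge 2$, so Lemma~\ref{lem:YV1} (in its strict branch) gives
\begin{equation*}
   \kappa_n(\mathbf y') \;\le\; \kappa_n(\mathbf y^{MB}(n)) - 2(s-l) \;\le\; \kappa_n(\mathbf y^{MB}(n)) - 2,
\end{equation*}
contradicting the minimality of $\kappa_n$ under $\mathbf y^{MB}(n)$ and forcing the intrinsic inequality.

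The delicate step I expect to spend the most effort on is exhibiting a feasible scheduling control $\mathbf q'$ realising $\mathbf y'$: the chain of single-server reallocations underlying the hypothesised feasibility of $\mathbf I(f,t)$ is tailored to $\mathbf q^{\pi}$, not to $\mathbf q^{MB}$, so it cannot be reused verbatim. I plan to resolve this via an augmenting-path-style argument on the bipartite server--queue connectivity graph $\{g_{i,j}(n)\}$, which is shared by both policies. Concretely, since $D_f \ge +1$ and $D_t \le -1$, the control $\mathbf q^{\pi}$ must allocate at least one more server to $t$ and at least one fewer to $f$ than $\mathbf q^{MB}$, so the symmetric difference of the two scheduling controls contains an alternating chain of servers and queues running from some server currently assigned to $f$ under $\mathbf q^{MB}$ to queue $t$; this chain, together with the connectivity conditions witnessed by the $\pi$-chain, supplies the sequence of feasible single-server reallocations that transforms $\mathbf q^{MB}$ into the sought $\mathbf q'$.
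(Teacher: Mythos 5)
Your overall route is the same as the paper's: you translate $D_f\ge +1$, $D_t\le -1$ into $\hat x_f(n)\ge \hat x^{MB}_f(n)+1$ and $\hat x_t(n)\le \hat x^{MB}_t(n)-1$, reduce the claim to the ``intrinsic'' inequality $\hat x^{MB}_f(n)\ge \hat x^{MB}_t(n)-1$, and prove that inequality by contradiction by applying the reverse interchange $\mathbf I(t,f)$ to $\mathbf y^{MB}(n)$ and invoking the strict decrease of the imbalance index (the paper argues exactly this way in its Case~3, after treating $f=0$ and $t=0$ separately, citing Lemma~\ref{lem:I41} rather than Lemma~\ref{lem:YV1}). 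The only substantive difference is the step you yourself flag as delicate: the feasibility of $\mathbf y^{MB}(n)+\mathbf I(t,f)$ for the \emph{specific} pair $(f,t)$, which the paper asserts in a single sentence and which you try to obtain from an alternating-chain argument on the symmetric difference of $\mathbf q^{\pi}$ and $\mathbf q^{MB}$.

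That step is a genuine gap, and the proposed chain argument does not close it. Decomposing the difference of the two scheduling controls into chains only gives a chain leaving $t$ that ends at \emph{some} queue $f'$ with $D_{f'}\ge +1$, and a chain entering $f$ that starts at \emph{some} $t'$ with $D_{t'}\le -1$; nothing forces these to be the same chain, and with the wrong partner $f'$ the gap $\hat x^{MB}_t-\hat x^{MB}_{f'}\ge 2$ needed for the strict $\kappa$-decrease is unavailable. Concretely, take $L=K=4$, $\mathbf x=(5,3,3,6)$, server $1$ connected to queues $\{1,4\}$, server $2$ to $\{1,2\}$, servers $3,4$ to $\{2,3\}$: the unique $\kappa$-minimizer is $\mathbf y^{MB}=(1,1,1,1)$, while $\pi$ may choose $\mathbf y=(2,0,2,0)$; then $\mathbf D=(-1,+1,-1,+1)$, the interchange $\mathbf I(2,1)$ is feasible with $D_2=+1$, $D_1=-1$, yet $\hat x_2=\hat x_1=3$, so it is not balancing, and $\mathbf y^{MB}+\mathbf I(1,2)=(2,0,1,1)$ is infeasible because queues $1$ and $4$ see only two servers (the necessary conditions (\ref{YVeq:cons1})--(\ref{YVeq:cons2}) hold, but no scheduling control exists). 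Here the control difference splits into the disjoint chains $1\!\leftrightarrow\!4$ and $3\!\leftrightarrow\!2$, so no chain links $f=2$ to $t=1$, your intrinsic inequality fails ($\hat x^{MB}_2=2$, $\hat x^{MB}_1=4$), and the asserted conclusion itself fails for this pair. So the argument cannot be completed for an arbitrary feasible pair with the stated sign pattern: the contradiction step only works for pairs $(f,t)$ linked by the chain construction of Lemma~\ref{lem:I11}, where reversing that chain is precisely what makes $\mathbf y^{MB}(n)+\mathbf I(t,f)$ feasible; this is also the one-line feasibility claim the paper's own Case~3 leaves unjustified, and the hypothesis under which the lemma is actually used later.
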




Recall that $h_{\pi}  =\sum_{i=0}^L |D_i|/2$. Consider a sequence of balancing interchanges, $\mathbf I(f_1,t_1), \mathbf I(f_2,t_2),\dots, \mathbf I(f_{h_{\pi}},t_{h_{\pi}})$. Let 

\[
\mathbf {y}^*(n) = \mathbf {y}(n) + \sum_{i=1}^{ h_{\pi}} \mathbf I(f_i,t_i).
\]
We denote by $\pi^* $ the policy that chooses the withdrawal vector $\mathbf {y}^*(n)$. In other  words, $\pi^* $ denotes the policy that results from applying this sequence of  interchanges.

\begin{lem} \label{lem:I6}
For any policy $\pi \in \Pi_{n-1}$,  $h_{\pi}  $ balancing interchanges 
suffice to determine a  policy $\pi^* $ such that $\pi^* \in \Pi_n$.
\end{lem}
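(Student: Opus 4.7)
The plan is to iteratively modify the withdrawal vector $\mathbf{y}(n)$ of $\pi$ via a sequence of balancing interchanges, each selected to close part of the discrepancy with a fixed reference MB vector $\mathbf{y}^{MB}(n)$. Since each interchange affects only time slot $n$, every intermediate policy still agrees with $\pi$ at all times $t\le n-1$ and hence remains in $\Pi_{n-1}$. Once $\mathbf{D}:=\mathbf{y}^{MB}(n)-\mathbf{y}(n)$ is reduced to $\mathbf{0}$, Lemma~\ref{lem:I1}(a) delivers the MB property at time $n$, and so the final policy $\pi^*$ lies in $\Pi_n$. The heart of the argument is to show that (i) at every stage a feasible balancing interchange $\mathbf{I}(f,t)$ exists, and (ii) every such interchange drops $h=\tfrac{1}{2}\sum_i|D_i|$ by exactly one, so that $h_\pi$ interchanges suffice.

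Step (ii) is the easy half: an interchange with $D_f\ge 1$ and $D_t\le -1$ replaces $\mathbf{D}$ by $\mathbf{D}-\mathbf{I}(f,t)$, which strictly decreases both $|D_f|$ and $|D_t|$ by one while leaving every other coordinate untouched. For (i), note that $\sum_i D_i=0$ (both $\mathbf{y}$ and $\mathbf{y}^{MB}$ use all $K$ servers), so whenever $\mathbf{D}\ne\mathbf{0}$ some index $f$ has $D_f\ge 1$ and some index $t$ has $D_t\le -1$; Lemma~\ref{lem:I4} then automatically makes every feasible $\mathbf{I}(f,t)$ balancing, so only feasibility needs to be demonstrated. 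The length constraint~(\ref{eq:I102.1}) is free: since $\mathbf{y}^{MB}(n)$ is feasible, $y_f^{MB}(n)\le x_f(n)$, hence $\hat x_f(n)=x_f(n)-y_f(n)\ge y_f^{MB}(n)-y_f(n)=D_f\ge 1$.

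The main obstacle is the connectivity constraint~(\ref{eq:I101}), which requires exhibiting a chain of single-server reallocations realising $\mathbf{I}(f,t)$. I plan to handle this via a reachability argument on the auxiliary directed multigraph with vertex set $\{0,1,\ldots,L\}$ carrying one arc $q_k(n)\to q_k^{MB}(n)$ for each server $k$ with $q_k(n)\ne q_k^{MB}(n)$; a direct count shows that the net in-degree at vertex $i$ is exactly $D_i$. Letting $R$ be the set of vertices from which $f$ is reachable, the closure of $R$ under arc-reversal yields $\sum_{v\in R}D_v\le 0$, which combined with $f\in R$ and $D_f\ge 1$ forces some $t\in R$ with $D_t<0$. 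Taking any simple directed walk $t=r_{m+1}\to r_m\to\cdots\to r_1=f$ (with $m\le K$, since the multigraph has at most $K$ arcs), the labelling servers $k_i$ satisfy $g_{r_{i+1},k_i}(n)=g_{r_i,k_i}(n)=1$ and $q_{k_i}(n)=r_{i+1}$ by construction, which is exactly~(\ref{eq:I101}). Performing these $m$ reallocations removes the chosen walk from the multigraph, preserving net in-degrees at interior vertices and matching the updated discrepancy at $f$ and $t$, so the same existence argument reapplies at the next stage. After $h_\pi$ iterations $\mathbf{D}=\mathbf{0}$, and $\pi^*\in\Pi_n$.
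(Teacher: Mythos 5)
Your proof is correct, and its skeleton matches the paper's: fix a reference MB vector $\mathbf y^{MB}(n)$, note that each balancing interchange with $D_f\ge +1$, $D_t\le -1$ lowers $h=\tfrac12\sum_i|D_i|$ by exactly one, invoke Lemma \ref{lem:I4} for the balancing property, and conclude via Lemma \ref{lem:I1}(a) once $\mathbf D=0$. Where you genuinely differ is the existence step, which the paper isolates as Lemma \ref{lem:I11}: there the chain of single-server reallocations is grown iteratively, starting from a surplus queue $r_1$ with $D_{r_1}\ge 1$ and using, at each step, a server assigned to $r_i$ under the MB implementation but to $r_{i+1}$ under $\pi$ (hence connected to both), with termination at a deficit queue within $K$ steps proved by contradiction (otherwise $\mathbf y^K=\mathbf y^{MB}$ while $y^{MB}_{r_{K+1}}\ge y^K_{r_{K+1}}+1$). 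You instead encode the discrepancy as a multigraph with one arc $q_k(n)\to q_k^{MB}(n)$ per disagreeing server, observe that net in-degree equals $D_i$, and use the fact that the set $R$ of vertices that can reach $f$ is closed under predecessors (so every arc with head in $R$ has tail in $R$, giving $\sum_{v\in R}D_v\le 0$) to exhibit a deficit queue $t\in R$; the directed path from $t$ to $f$ then supplies exactly the reallocation chain required by (\ref{eq:I101}), and your bound $\hat x_f(n)\ge D_f\ge 1$ for (\ref{eq:I102.1}) coincides with the paper's Equation (\ref{eq:I1107}). The two arguments establish the same combinatorial fact from the same core observation, but your conservation/cut formulation is shorter than the paper's termination-by-contradiction, and you are more careful on a point the paper treats loosely: rather than simply ``pairing'' the $h_\pi$ surplus/deficit queues and applying the lemmas wholesale, you re-verify existence at every stage by noting that deleting the used walk preserves the invariant (net in-degree equals the updated $D_i$), which makes the induction over the $h_\pi$ interchanges airtight; the paper's version, in exchange, is more directly algorithmic and needs no auxiliary graph structure.
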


Lemma \ref{lem:I4} can be used to identify queues $f_i$ and $t_i$ during time slot $n$ such that the interchange $\mathbf I(f_i,t_i)$ is balancing. Lemma  \ref{lem:I6} shows that performing a sequence of such interchanges, determines a policy that has the MB property for one more time slot. Both lemmas are crucial  for the proof of our main result, since they indicate how a given policy can be improved  using one balancing interchange at a time.

\section{Optimality of MB Policies}
\label{YV:main-result}

In this section, we present the main result of this article, that is, the optimality of the Most Balancing (MB) policies. We will establish  optimality for a range of performance criteria, including the minimization of the total number of packets in the system. We introduce the following definition.

\subsection{Definition of Preferred Order}
\label{PreferredOrder-section}
\noindent Let's define the relation $\preceq$ on $\mathcal{Z}_+^{(L+1)}$ first;
we say $\mathbf{\tilde{x}} \,\preceq  \, \mathbf{x}$ if:
\begin{enumerate}
\item[R1-]    $\tilde{x}_i \leq x_i$    for all $i$ (i.e., point wise comparison), 
\item[R2-] $\mathbf{\tilde{x}}$ is obtained from $\mathbf{x}$ by permuting two of its components; the two vectors differ only in two components $i$ and $j$, such that $\tilde{x}_i = x_j$ and $\tilde{x}_j = x_i$, or
\item[R3-]	$\mathbf{\tilde{x}}$ is obtained from $\mathbf{x}$ by performing a \textsl{``balancing interchange"}, in the sense of Equation (\ref{eq:I001YV}), i.e.,
the two vectors differ in two components  $i>0$ and $j\ge 0$ only, where $x_i \geq x_j +1$,
such that: $ \tilde{x}_i = x_i-1$ and $\tilde{x}_j = x_j+1$.
\end{enumerate}

To prove the optimality of MB policies, we will need a methodology that enables comparison of the queue lengths under different policies. Towards this end,  we define a ``preferred order'' as follows:

\noindent \textbf{Definition: (Preferred Order)}. The transitive closure of the relation $\preceq$ defines a partial order (which we call \textit{preferred order} and use the symbol $\prec_p$ to represent) on  the set $\mathcal{Z}_+^{(L+1)}$.~\endremark

The transitive closure \cite{Lidl}, \cite{ganti} of $\preceq$ on the set $\mathcal{Z}_+^{(L+1)}$ is the smallest transitive relation on $\mathcal{Z}_+^{(L+1)}$ that contains the relation $\preceq$. From the engineering point of view,
$\mathbf{\tilde{x}} \prec_p \mathbf{x}$  if  $\mathbf{\tilde{x}} $ is obtained from $\mathbf{x}$ by performing a sequence of \textit{reductions}, \textit{permutations of two components} and/or \textit{balancing interchanges}.

For example, if $\mathbf{\tilde{x}}  = (3,4,5)$ and $\mathbf{x}  = (4,5,3)$ then $\mathbf{\tilde{x}} \prec_p \mathbf{x}$ since $\mathbf{\tilde{x}} $ can be obtained from $ \mathbf{x}$ by performing the following two consecutive two-component permutations: first swap the second and third components of $ \mathbf{x}$, yielding $ \mathbf{x}^1 = (4,3,5)$ then swap the first and  second components of $ \mathbf{x}^1$, yielding $ \mathbf{x}^2 = (3,4,5) = \mathbf{\tilde{x}} $.

Suppose that $\mathbf{\tilde{x}}, \mathbf{x}$ represent queue size vectors for our model.
Statement R3 in this case describes moving a packet from one real, large queue $i$ to another  smaller one $j$ (note that the queue with index $ j = 0$  is not excluded since a balancing interchange may represent the allocation of an idled server).
We say that $\mathbf{\tilde{x}}$ is \emph{more balanced} than $\mathbf{x}$ when R3 is satisfied. For example, if $L=2$ and $\mathbf{x}=(0,5,2)$ then a balancing interchange (where $i=1$ and $j=2$) will result in $\mathbf{\tilde{x}}=(0,4,3)$.

\subsection{The class $\mathcal{F}$ of cost functions }\label{YV:cost-functions}
Let $\mathbf{\tilde{x}}, \mathbf{x} \in \mathcal{Z}_+^{(L+1)}$ be two vectors representing queue lengths. Then we denote by $\mathcal{F}$ the class of real-valued functions on $\mathcal{Z}_+^{(L+1)}$ that are monotone, non-decreasing with respect to the partial order $\prec_p$; that is, $f \in \mathcal{F}$ if and only if
\begin{equation} \label{func_class}
    \mathbf{\tilde{x}} \prec_p \mathbf{x} \, \Rightarrow f(\mathbf{\tilde{x}}) \leq f(\mathbf{x}).
\end{equation}

From  (\ref{func_class}) and the definition of preferred order, it can be easily seen that the function $f(\mathbf{x})=x_1+x_2+\cdots +x_L$ belongs to $\mathcal{F}$. This function corresponds to the total number of queued packets in the system\footnote{Another example is the function $f'(\mathbf{x})=\max\{x_1, \ldots ,x_L\}$ which also belongs to the class $\mathcal{F}$.}.





For two real-valued random variables $A$ and $B$, $A \leq_{st} B$ defines the usual stochastic ordering \cite{Stoyan}. In the remainder of this paper, we say that a policy $\sigma$ \textit{dominates} another policy $\pi$ if

\begin{equation} \label{policy_dominance}
 f(\mathbf{X}^{\sigma}(t)) \leq_{st}  f(\mathbf{X}^{\pi}(t)) , \quad \forall~t=1,2,\dots
\end{equation}

\noindent for all cost functions $f \in \mathcal{F}$.

We will need the following lemma to complete the proof of our main result presented in Theorem \ref{thm:3}.

\begin{lem}\label{lem:6}
Consider an arbitrary policy $\pi \in \Pi_{\tau}^h$, where $h>0$. Then, there exists a policy $\tilde{\pi} \in \Pi_{\tau}^{h-1}$, such that $\tilde{\pi} $ dominates $\pi$.
\end{lem}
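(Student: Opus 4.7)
The plan is a sample-path coupling argument: I will construct $\tilde\pi$ together with a coupling of the connectivity and arrival sequences under the two policies, on a common probability space, so that $\mathbf{X}^{\tilde\pi}(n) \prec_p \mathbf{X}^\pi(n)$ holds for every $n$ on every sample path. Since every $f \in \mathcal{F}$ is monotone non-decreasing with respect to $\prec_p$, this pointwise ordering immediately yields the stochastic dominance required by Equation~(\ref{policy_dominance}). The case $h_\pi = 0$ is trivial: here $\pi \in \Pi_\tau \subseteq \Pi_\tau^{h-1}$, so $\tilde\pi := \pi$ works. Hence assume $h_\pi \ge 1$.

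For $n < \tau$, let $\tilde\pi$ see exactly the same connectivities and arrivals as $\pi$ and copy $\pi$'s withdrawal vector, so that $\mathbf{X}^{\tilde\pi}(n) = \mathbf{X}^\pi(n)$ throughout the prefix and $\tilde\pi$ inherits the MB property on $\{1, \ldots, \tau-1\}$ from $\pi \in \Pi_{\tau-1}$. At time $\tau$, since $h_\pi \ge 1$, Lemma~\ref{lem:I1}(a) together with the identity $\sum_i D_i = 0$ yields queue indices $f$ and $t$ with $D_f \ge 1$ and $D_t \le -1$; Lemma~\ref{lem:I4}, used in the constructive manner underlying Lemma~\ref{lem:I6}, supplies a feasible balancing interchange $\mathbf{I}(f,t)$. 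Define $\mathbf{y}^{\tilde\pi}(\tau) := \mathbf{y}^\pi(\tau) + \mathbf{I}(f,t)$. A short calculation shows $h_{\tilde\pi} = h_\pi - 1 \le h - 1$, so $\tilde\pi \in \Pi_\tau^{h-1}$, and Equation~(\ref{eq:I30000yv}) combined with the balancing condition~(\ref{eq:I001YV}) gives $\hat{\mathbf{x}}^{\tilde\pi}(\tau) \prec_p \hat{\mathbf{x}}^\pi(\tau)$ via relation R3.

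To propagate this ordering through the addition of arrivals, I would couple $\mathbf{Z}^{\tilde\pi}(\tau)$ to $\mathbf{Z}^\pi(\tau)$ by the identity on coordinates outside $\{f, t\}$, while on the event $A := \{\hat{x}_f^\pi(\tau) = \hat{x}_t^\pi(\tau) + 1\}$ one swaps the $f$ and $t$ coordinates. Because $A$ is determined by data revealed before the arrivals at $\tau$ and the $Z_i(\tau)$ are i.i.d.\ Bernoulli across queues, the swap preserves the joint distribution of $\mathbf{Z}^{\tilde\pi}(\tau)$, so $\tilde\pi$ remains a bona fide policy. A four-case check on $(Z_f^\pi(\tau), Z_t^\pi(\tau))$ then confirms that $\mathbf{X}^{\tilde\pi}(\tau+1) \prec_p \mathbf{X}^\pi(\tau+1)$, with the witness being R3, R2, or plain equality depending on the case.

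For $n > \tau$, extend $\tilde\pi$ by induction to maintain the invariant $\mathbf{X}^{\tilde\pi}(n) \prec_p \mathbf{X}^\pi(n)$. At each step $\tilde\pi$ copies $\pi$'s withdrawal vector up to a relabeling of the (at most two) queues at which the two states currently differ, and one couples $\mathbf{G}(n)$ and $\mathbf{Z}(n)$ via the identity with a possible swap on that distinguished pair of queues, again legitimate by the i.i.d.\ Bernoulli symmetry of the connectivities and arrivals across queue indices. The main obstacle is precisely this inductive step: one must verify, case by case according to whether the two states currently differ by an instance of R1, R2, or R3, that some feasible $\tilde\pi$-action and some valid coupling of the connectivities and arrivals exist and preserve the invariant. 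This case analysis is the technical heart of the proof, analogous in spirit to the coupling arguments of Tassiulas--Ephremides and Ganti--Modiano--Tsitsiklis, but complicated by the fact that our model permits multiple servers per queue and independent per-queue--per-server connectivities; I expect this verification will require a lengthy enumeration best deferred to the appendix.
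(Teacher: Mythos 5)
Your strategy coincides with the paper's own proof (Appendix E): copy $\pi$ up to time $\tau$, apply at $\tau$ one feasible balancing interchange supplied by the machinery of Lemmas \ref{lem:I4}, \ref{lem:I6} (feasibility actually comes from Lemma \ref{lem:I11} inside the proof of Lemma \ref{lem:I6}, not from Lemma \ref{lem:I4} itself, but your phrasing points at the right place), conclude $\tilde{\pi}\in\Pi_{\tau}^{h-1}$ and that the updated states are related by R3, couple the arrivals at $\tau$ by a swap on the two affected queues, and then propagate $\prec_p$ inductively for $t>\tau$, invoking monotonicity of $f\in\mathcal{F}$ to get (\ref{policy_dominance}). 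Up to and including time $\tau$ your construction is essentially Part 1 of the paper's proof (your swap rule, triggered by the event $\hat{x}_f(\tau)=\hat{x}_t(\tau)+1$ alone, differs slightly from the paper's but is equally legitimate since it is measurable with respect to pre-arrival information).

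The genuine gap is the induction for $t>\tau$, which you defer as a "lengthy enumeration": that enumeration is the substantive content of the lemma, and the one-line recipe you give for it --- "$\tilde{\pi}$ copies $\pi$'s withdrawal vector up to a relabeling of the (at most two) queues at which the two states currently differ" --- fails in precisely the hard subcases. Suppose the states differ by R3, with $\tilde{x}_i(n)=x_i(n)-1$, $\tilde{x}_j(n)=x_j(n)+1$ and $x_i(n)>x_j(n)+1$, and $\pi$ empties queue $i$, i.e. $y_i(n)=x_i(n)$. Copying is infeasible because $y_i(n)>\tilde{x}_i(n)$, and relabeling queues $i$ and $j$ (even with the connectivity rows swapped) is also infeasible because it would require withdrawing $y_i(n)$ packets from a queue of length $x_j(n)+1<x_i(n)$. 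The paper's case (3.c)(i) resolves this by swapping the connectivity of a \emph{single} server $r$ with $q_r(n)=i$ between rows $i$ and $j$ and reallocating only that server, so that $\tilde{y}_i(n)=y_i(n)-1$, $\tilde{y}_j(n)=y_j(n)+1$ and the next states coincide. Likewise, when $\pi$ empties the shorter queue $j$ while idling a server connected to $j$ (cases (3.b)(ii) and (3.c)(ii)), $\tilde{\pi}$ must withdraw one \emph{more} packet from queue $j$ than $\pi$ does, which is not a relabeling; this is where the strict improvement (R1 with strict inequality) is generated, and it must then be carried through subsequent slots where naive copying can again be infeasible. These feasibility checks, under multiple servers and independent per-queue/per-server connectivities, are exactly what distinguishes this lemma from the single-server couplings of \cite{Tassiulas} and the one-server-per-queue coupling of \cite{ganti}; without carrying them out, dominance of $\tilde{\pi}$ over $\pi$ for $t>\tau$ remains unproven.
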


The full details of the proof for Lemma \ref{lem:6} are given in Appendix \ref{appendixB}.
The proof involves two parts. First, we construct a policy $\tilde{\pi} $ by applying a balancing interchange to $\pi$; using Lemmas \ref{lem:I4} and \ref{lem:I6}, we show that $\tilde{\pi} \in \Pi_{\tau}^{h-1}$. 
Second, we prove that $\tilde{\pi} $  dominates  policy $\pi$ (see Equation (\ref{policy_dominance})); this part employs coupling arguments.

\subsection{The main result}\label{sec:main-result}

In the following, $\mathbf{X}^{MB}$ and $\mathbf{X}^{\pi}$ represent the queue sizes under a MB and an arbitrary policy $\pi$. 

\begin{thm}\label{thm:3}
Consider a system of $L$ queues served by $K$ identical servers, as shown in Figure \ref{fig_1} with the assumptions of Section \ref{YV:intro_section}. A Most Balancing (MB) policy dominates any arbitrary policy when applied to this system, i.e.,
\begin{equation} \label{dominance_theorem}
 f(\mathbf{X}^{MB}(t)) \leq_{st}  f(\mathbf{X}^{\pi}(t)) , \quad \forall~t=1,2,\dots
\end{equation}
\noindent for all $\pi\in \Pi$ and all cost functions $f \in \mathcal{F}$.
\end{thm}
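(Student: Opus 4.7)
The plan is to bootstrap from Lemma \ref{lem:6} by iterating it in two nested loops: an outer loop in the time index $n$ and an inner loop in the MB-distance index $h$, exploiting the monotone structure $\Pi_n = \Pi_n^0 \subseteq \Pi_n^h \subseteq \Pi_n^K = \Pi_{n-1}$ of (\ref{Pinh-sequence}). Given any $\pi \in \Pi$ and any horizon $t \geq 1$, the output will be a policy that has the MB property on every slot in $\{1, \ldots, t\}$ and that dominates $\pi$ in the sense of (\ref{policy_dominance}).

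Fix $t$ and $\pi \in \Pi$. By (\ref{Pinh-sequence1}), $\pi \in \Pi_1^{h_0}$ for some $h_0 \in \{0, \ldots, K\}$. Apply Lemma \ref{lem:6} at most $h_0$ times to march down the chain $\Pi_1^{h_0} \supseteq \Pi_1^{h_0-1} \supseteq \cdots \supseteq \Pi_1^0 = \Pi_1$; at each step the new policy dominates the previous one, so by transitivity of $\leq_{st}$ we obtain $\pi^{(1)} \in \Pi_1$ dominating $\pi$. Since $\Pi_1 = \Pi_2^K$ by (\ref{Pinh-sequence}), the same inner loop (with at most $K$ invocations of Lemma \ref{lem:6}) yields $\pi^{(2)} \in \Pi_2$ dominating $\pi^{(1)}$, hence $\pi$. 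After $t$ outer iterations we have $\pi^{(t)} \in \Pi_t$ satisfying
\[
f(\mathbf{X}^{\pi^{(t)}}(s)) \leq_{st} f(\mathbf{X}^{\pi}(s)), \qquad s = 1, 2, \ldots, \quad f \in \mathcal{F}.
\]
The policy $\pi^{(t)}$ minimizes the imbalance index $\kappa_s$ at every $s \in \{1, \ldots, t\}$, so it can differ from any genuine MB policy $\sigma \in \Pi^{MB}$ only through tie-breaking among the $\kappa_s$-minimizers. Because $f \in \mathcal{F}$ is permutation-invariant (by R2 applied in both directions, followed by transitive closure) and because the arrival and connectivity processes are IID across queue and server indices, the random variables $f(\mathbf{X}^{\pi^{(t)}}(t))$ and $f(\mathbf{X}^{\sigma}(t))$ have the same distribution. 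Substituting at $s = t$ gives (\ref{dominance_theorem}) at that instant, and since $t$ is arbitrary, the theorem follows for every $t$.

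The principal obstacle is concealed inside Lemma \ref{lem:6}, whose proof (deferred to the appendix) carries out the actual sample-path coupling: it constructs $\tilde{\pi}$ from $\pi$ via a single balancing interchange selected through Lemma \ref{lem:I4}, uses Lemma \ref{lem:YV1} to control the imbalance index, and invokes the ordering R1--R3 to conclude pathwise $\mathbf{X}^{\tilde{\pi}}(s) \prec_p \mathbf{X}^{\pi}(s)$ for every $s$. With that single-step reduction in hand, Theorem \ref{thm:3} itself is essentially a finite double induction controlled by the doubly-indexed family $\{\Pi_n^h\}$, plus the tie-breaking symmetry remark above, so all of the combinatorial and coupling content is localized in the Lemma \ref{lem:6} step.
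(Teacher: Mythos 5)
Your double induction is exactly the paper's argument: the paper's proof of Theorem \ref{thm:3} performs the same march $\Pi_1^K \supseteq \Pi_1^{K-1} \supseteq \cdots \supseteq \Pi_1 = \Pi_2^K \supseteq \cdots$, invoking Lemma \ref{lem:6} once per step and transitivity of domination, to produce policies $\pi_1, \pi_2, \ldots$ with $\pi_n \in \Pi_n$ each dominating the original $\pi$. Up to the construction of your $\pi^{(t)} \in \Pi_t$, the two proofs coincide.

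The divergence is in the end-game, and that is where your argument has a genuine gap. The paper finishes by continuing the recursion for all $n$ and passing to the limiting policy $\pi^*$, which by construction has the MB property at every slot and inherits the domination over $\pi$. You instead stop at the finite horizon $t$ and try to transfer the time-$t$ bound from $\pi^{(t)}$ to an arbitrary genuine MB policy $\sigma$ by asserting that $f(\mathbf{X}^{\pi^{(t)}}(t))$ and $f(\mathbf{X}^{\sigma}(t))$ are equal in distribution because the two policies ``differ only through tie-breaking among the $\kappa_s$-minimizers.'' That assertion is not self-evident and is not proved. It is true that from a \emph{common} state any two imbalance-minimizing actions yield updated vectors that are permutations of each other (this is the content of Lemma \ref{lem:I1}(b) and Lemma \ref{lem:alg1c}), but after the first slot at which the tie-breaking differs, $\pi^{(t)}$ and $\sigma$ no longer see the same state --- only permutation-equivalent ones --- and the subsequent minimizing actions, which may themselves be history-dependent, are taken in different states. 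Turning ``permutation-equivalent at every step'' into ``identically distributed $f(\mathbf{X}(t))$'' requires its own inductive coupling: one must carry a random, history-dependent permutation of queue indices and exchange the future arrival and connectivity streams of the permuted queues (legitimate only because of the i.i.d.\ symmetry assumptions), exactly as in the R2 case of the Lemma \ref{lem:6} coupling in Appendix E. As written, the one-sentence symmetry appeal leaves the proof incomplete precisely at the step where it departs from the paper. Either supply that auxiliary coupling lemma (any two policies having the MB property on $\{1,\ldots,t\}$ produce identically distributed sorted queue-length vectors at time $t$), or follow the paper and conclude via the limiting policy $\pi^*$, which sidesteps the need to compare against an arbitrarily chosen MB policy.
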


\begin{proof}  
From (\ref{func_class}) and the definition of stochastic dominance,
it is sufficient to show that $\mathbf{X}^{MB}(t) \prec_p \mathbf{X}^{\pi}(t)$ for all $t$ and all sample paths in a suitable sample space. The sample space is the standard one used in  stochastic coupling methods \cite{Lindvall}; see Appendix \ref{appendixB} for more details.

To prove the optimality of an MB policy, $\pi^{MB}$, we start with an arbitrary policy $\pi$ and apply a series of modifications that result in a sequence of policies ($\pi_1,\pi_2,\ldots$). The modified policies have the following properties:
\begin{enumerate}
\item[(a)] $\pi_1 $ dominates the given policy $\pi$,
\item[(b)] $\pi_i \in \Pi_{i}$, i.e.,  policy $\pi_i $ has the MB property at time slots $t=1,2,\ldots,i$, and,
\item[(c)] $\pi_j $ dominates $\pi_i $ for $j>i$ (i.e., $\pi_j$ has the MB property for a longer period of time than $\pi_i$).
\end{enumerate}

Let $\pi$ be any arbitrary policy; then $\pi \in \Pi_0 = \Pi_1^K$.
Using Lemma \ref{lem:6} we can construct a policy $\tilde{\pi} \in  \Pi_1^{K-1}$ that dominates the original policy $\pi$. Repeating this operation we can construct policies that belong to $\Pi_1^{K-1}, \Pi_1^{K-2}, \ldots, \Pi_1^0 = \Pi_1$ such that all dominate the original policy $\pi$. This sequence of construction steps will result in a policy $\pi_1$ that has the MB property at $t=1$, i.e., $\pi_1 \in \Pi_1 $, and dominates $\pi$. Therefore, by construction $\pi_1 \in \Pi_2^K $. We repeat the construction steps above for time slot $t=2$, by improving on $\pi_1$, to obtain a policy $\pi_2 \in \Pi_2$ that dominates  $\pi_1$, and recursively for $t=3,4,\ldots$ to obtain policies $\pi_3 , \pi_4  , \ldots$ .
%
From the construction of $\pi_n, n=1,2,\ldots$, we can see that it satisfies properties (a), (b) and (c) above.

Denote the limiting policy as $n\longrightarrow \infty$ by $\pi^*$. One can see that $\pi^*$ is an MB policy. Furthermore, $\pi^*$ dominates $\pi_i$, for all $i < \infty$, as well as the original policy $\pi$.
%
\end{proof}\vspace{3mm}

\begin{rem} The optimal policy may not be unique.  Our  main objective is to  prove the optimality of the MB policy not its uniqueness.
The optimality of MB policies makes intuitive sense;  any such policy will tend to reduce the chance that any server idles. This is because an MB policy distributes the servers among the connected queues in the system such that it  keeps packets spread  among all the queues in a ``uniform'' manner.
\endremark
\end{rem}

\section{The Least Balancing Policies}
The \emph{Least Balancing} (LB) policies are the scheduling policies, among all  work-conserving (non-idling) policies, that at every time slot ($n=1,2,\ldots$), choose a packet withdrawal vector $\mathbf{y}(n) \in \mathcal{Y}(\mathbf{x},\mathbf{g})$ that ``maximizes the  differences'' between queue lengths in the system (i.e., maximizes $\kappa_n(\pi)$ in Equation (\ref{eq:kappa})).
In other words, if $\Pi^{LB}$ is the set of all LB policies and $\Pi^{WC}$ is the set of all work conserving policies then,
\begin{equation} \label{eq:LB}
 \Pi^{LB} = \big\{\pi:\argmax_{\mathbf y(n) \in \mathcal{Y}(\mathbf x, \mathbf g)}  \kappa_n(\pi), \pi \in \Pi^{WC}, \quad \forall n \big \}
\end{equation}

Maximizing the imbalance among the queues in the system will result in
maximizing the number of empty queues at any time slot, thus maximizing the chance that servers are forced to idle in future time slots. This intuitively suggests that LB policies will be outperformed by any work conserving policy.
The next theorem  states this fact. Its proof  is analogous to that of Theorem \ref{thm:3} and will not be given here.

\begin{rem}
A non-work conserving policy can by constructed such that it will perform worse than  LB policies, e.g., a policy that  idles all servers.
\endremark
\end{rem}


\begin{thm}\label{thm:4}
Consider a system of $L$ queues served by $K$ identical servers, under the assumptions described in Sections \ref{YV:intro_section}. A Least Balancing (LB) policy is dominated by any arbitrary work conserving policy when applied to this system, i.e.,
\begin{equation} \label{LB-dominance_theorem}
 f(\mathbf{X}^{\pi}(t)) \leq_{st}  f(\mathbf{X}^{LB}(t)) , \quad \forall~t=1,2,\dots
\end{equation}
for all $\pi\in \Pi^{WC}$ and all cost functions $f \in \mathcal{F}$.
\end{thm}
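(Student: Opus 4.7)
The plan is to prove Theorem~\ref{thm:4} by mirroring the proof of Theorem~\ref{thm:3} with three systematic reversals: replace MB by LB, balancing interchanges by unbalancing ones (those with $\hat{x}_f(n) \leq \hat{x}_t(n)$, cf.\ Equation~(\ref{eq:I002YV})), and the direction of dominance. All policies in the construction must remain work-conserving, since, as the remark following the theorem observes, a non-WC policy (e.g.\ always-idle) can be strictly worse than LB. The crux of the duality is an observation dual to Lemma~\ref{lem:YV1}: if $\hat{\mathbf x}^* = \hat{\mathbf x} - \mathbf I(f,t)$ arises from an unbalancing interchange between two real queues $f,t>0$, then $\hat{\mathbf x}$ is obtained from $\hat{\mathbf x}^*$ by the \emph{balancing} interchange with the roles of $f$ and $t$ swapped, so relation R3 of Section~\ref{PreferredOrder-section} yields $\hat{\mathbf x} \prec_p \hat{\mathbf x}^*$. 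Thus an unbalancing interchange produces a more-LB-like policy whose updated queue-size vector is \emph{larger} in the preferred order, which is exactly what is needed to show LB is the worst work-conserving policy.

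Next I would define LB analogues $\tilde\Pi_n^h \subseteq \Pi^{WC}$ of the sets $\Pi_n^h$, and prove LB analogues of Lemmas~\ref{lem:I4},~\ref{lem:I6}, and~\ref{lem:6} by the same case-analysis used in the MB proof. Setting $\tilde{\mathbf D} = \mathbf y^{LB}(n) - \mathbf y(n)$, the analogue of Lemma~\ref{lem:I4} states that any feasible interchange $\mathbf I(f,t)$ with $\tilde D_f \geq +1$ and $\tilde D_t \leq -1$ is unbalancing, because LB, maximizing $\kappa_n$, prefers to withdraw from the shorter of any two interchangeable queues. Work-conservation of both $\pi$ and any LB policy forces $\tilde D_0 = 0$, so every nonzero entry of $\tilde{\mathbf D}$ corresponds to a real queue, guaranteeing that all interchanges in the construction occur between real queues. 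The analogue of Lemma~\ref{lem:I6} bounds the number of unbalancing interchanges needed to reach $\tilde\Pi_n$ by $\tilde h_\pi = \sum_i |\tilde D_i|/2 \leq K$. The analogue of Lemma~\ref{lem:6} then reads: for $\pi \in \tilde\Pi_\tau^h$ with $h > 0$, applying one unbalancing interchange at time $\tau$ yields a policy $\tilde\pi \in \tilde\Pi_\tau^{h-1}$ such that $\mathbf X^\pi(t) \prec_p \mathbf X^{\tilde\pi}(t)$ pathwise on the coupled probability space for all $t$. Iterating exactly as in Theorem~\ref{thm:3} but with dominance reversed produces a sequence of policies each dominated by its predecessor and converging to an LB policy, which establishes Equation~(\ref{LB-dominance_theorem}).

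The main obstacle will be the feasibility half of the LB analogue of Lemma~\ref{lem:6}: one must show that whenever a work-conserving $\pi$ has not yet attained the LB imbalance at time $\tau$, some unbalancing interchange between real queues can be realized as a feasible chain of single-server reallocations satisfying conditions~(\ref{eq:I101})--(\ref{eq:I102.1}), \emph{without} any link in the chain idling a server. This is the essential use of the work-conservation hypothesis; relaxing it would permit degenerate ``unbalancing'' moves that idle servers and would break the dominance chain, which is consistent with the remark that always-idle policies can be worse than LB. The forward propagation of the $\prec_p$-relation through the shared arrivals $\mathbf Z(n)$ and connectivities $\mathbf G(n)$ is identical to that in Theorem~\ref{thm:3}, requires no new machinery, and completes the coupling half of the argument.
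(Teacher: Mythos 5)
The paper itself omits this proof (it merely asserts the argument is ``analogous'' to Theorem~\ref{thm:3}), and your overall plan --- dualize the MB machinery with unbalancing interchanges and reversed dominance --- is exactly the intended route. However, two of the claims you lean on are false as stated. First, work conservation does \emph{not} force $D_0=0$: two non-idling policies can be forced to idle different numbers of servers. Take $L=2$, $K=2$, server $1$ connected only to queue $2$, server $2$ connected to both queues, $\mathbf x=(10,1)$. The policy $\pi$ with $q_1=2,\,q_2=1$ serves two packets (updated lengths $(9,0)$, $\kappa=18$), while the allocation $q_2=2$ forces server $1$ to idle, serves one packet (updated $(10,0)$, $\hat x_0=-1$, $\kappa=22$) and is still non-idling in the paper's sense (this is precisely the forced idling MCSF/SCQ exhibits); the latter is the $\kappa$-maximizer, so $D_0=+1\neq 0$. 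Unless you impose a strictly stronger definition of work conservation (maximum number of packets served per slot), interchanges involving the dummy queue cannot be excluded, and the direction $\mathbf I(f,0)$ (serving an extra packet) moves the constructed policy the \emph{wrong} way in $\prec_p$, so this case must be ruled out or handled, not assumed away.

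Second, and more seriously, your LB analogue of Lemma~\ref{lem:I4} is false. Let $L=2$, $K=2$, both servers connected to both queues, $\mathbf x=(2,2)$. The feasible work-conserving withdrawals are $(2,0)$, $(1,1)$, $(0,2)$ with imbalance indices $4,2,4$; so both $(2,0)$ and $(0,2)$ are LB choices. Let $\pi$ choose $\mathbf y=(0,2)$ and compare it to the LB representative $\mathbf y^{LB}=(2,0)$: then $D_1=+2$, $D_2=-2$, and the feasible interchange $\mathbf I(1,2)$ at $\pi$ has $\hat x_1=2\ge \hat x_2+2=2$, i.e.\ it is \emph{balancing}, not unbalancing; applying it produces updated lengths $(1,1)$, which are $\prec_p (2,0)$ and have strictly smaller $\kappa$ --- the opposite of what your induction step needs. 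The failure is structural: in the proof of Lemma~\ref{lem:I4} the contradiction hypothesis and the sign conditions on $\mathbf D$ combine in the same direction ($\hat x^{MB}_f\le \hat x_f-1\le\hat x_t-1\le\hat x^{MB}_t-2$), whereas in the LB version they point oppositely, and, unlike $\kappa$-minimizers, distinct $\kappa$-maximizers need not be reachable from one another through $\kappa$-nondecreasing interchanges. So ``the same case-analysis'' does not carry over; a correct proof must pick the LB representative (or the pairing of queues in $F\times T$) adaptively --- e.g.\ show that from any work-conserving $\mathbf y(n)$ some $\kappa$-maximizing vector is reachable by feasible interchanges that never decrease $\kappa$ and never violate the $\prec_p$ direction --- and this existence statement is the missing idea; the forward-coupling half (your Part~2 analogue, including preservation of work conservation by the constructed dominated policy under cases R1--R3) also needs its own case analysis rather than a citation of symmetry.
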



An LB policy has no practical significance, since it maximizes the cost functions presented earlier.  Intuitively, it should also worsen the system stability region and hence the system throughput. However, it is interesting to study the worst possible policy behavior and to measure its performance. The LB and MB policies provide lower and upper limits to the performance of any work conserving policy. The performance of any policy can be measured by the deviation of its behavior from that of the MB and LB policies.

\section{Heuristic Implementation Algorithms For MB and LB Policies} \label{heurestic_implem}

In this section, we present two heuristic policies that approximate the behavior of the MB and LB policies respectively. We present an implementation algorithm for each  one of them.

\subsection{Approximate Implementation  of MB Policies} \label{LCSFLCQ}
We introduce the \textsl{Least Connected Server First/Longest Connected Queue} (LCSF/LCQ) policy, a low-overhead approximation of MB policy, with $O(L\times K)$   computational complexity. The policy is stationary and depends only on the current state $(\mathbf{X}(n),\mathbf{G}(n))$ during time slot $n$.

The LCSF/LCQ implementation during a given time slot is described as follows: The least connected server is identified and is allocated to its longest connected queue. The queue length is updated (i.e., decremented). We proceed accordingly to the next least connected server until all servers are assigned.
In algorithmic terms, the LCSF/LCQ policy can be described/implemented as follows:

\noindent Let  $\mathds{Q}_j=\{i: i=1,2,\ldots,L; g_{i,j}(t)=1 \}$ denote the set of queues that are connected to server $j$ during time slot $t$; we omit the  dependence on $t$ to simplify notation.
Let $\mathds{Q}_{[i]}$ be the $i^{th}$ element in the sequence $(\mathds{Q}_1, \dots, \mathds{Q}_K)$, when ordered in ascending manner according to their size (set cardinality), i.e., $|\mathds{Q}_{[l]}| \geq |\mathds{Q}_{[m]}|$ if $l>m$. Ties are broken arbitrarily. Then under the LCSF/LCQ policy, the $K$ servers are allocated according to the following algorithm:

\begin{alg}[LCSF/LCQ Implementation]\label{eq3}
\begin{flalign*}
	1.& \hspace{2mm} for \quad t  = 1,2,\ldots \quad do \quad  \Big\{ \nonumber\\
	2.& \hspace{6mm} \text{Input:} \,\, \mathbf{X}(t),\mathbf{G}(t).\,\, \text{Calculate} \quad \mathds{Q}_{[l]},\, l = 1,\dots,K.   \nonumber\\
    3.& \hspace{6mm} \mathbf{X}'   \longleftarrow  \mathbf{X}(t), \,\,\mathbf{Y} \longleftarrow  \mathbf{0}, \,\,\mathbf{Q} \longleftarrow  \mathbf{0}\nonumber\\
	4.& \hspace{6mm}  for \,\,  j \,= \,  1 \quad to \quad K  \quad \Big\{\,\,  \text{; allocate servers sequentially} \nonumber\\
	5.& \hspace{9.0mm}	 Q_{[j]} =  \min \left(  \,l:l\in                                                                                                                                                                                                                                                                                                                                                           \!\left\lbrace \argmax_{k:k\in\mathds{Q}_{[j]}} (X'_k|X'_k>0)\right\rbrace  \right)   \nonumber\\	
	6.& \hspace{9mm}	for \,\, i \,=\,  1 \quad to \quad L \quad \Big\{ \nonumber\\
	7.& \hspace{12mm}	 Y_{i} = Y_{i}+ \mathds{1}_{ \left\lbrace i= Q_{[j]} \right\rbrace  } \nonumber\\
	8.& \hspace{12mm}		 X'_i =X_i(t)-Y_i \quad \Big\} \quad \Big\}  \nonumber \\
	9.&  \hspace{6mm} \text{Output:} \,\,\mathbf{y}(t) \longleftarrow \mathbf{Y}, \mathbf{q}(t) \longleftarrow \mathbf{Q}\quad \text{; report outputs}  \nonumber\\
	10.&  \hspace{6mm} \Big\} \qquad \text{; End of Algorithm \ref{eq3}.}  \nonumber
\end{flalign*}
\end{alg}

Note that in line 5 of Algorithm \ref{eq3}, if the set $\mathds{Q}_{[j]}$ is empty, then the $\argmax$ returns the empty set. In this case, the $j^{th}$ order server will not be allocated (i.e., will be idle during time slot $t$). Algorithm \ref{eq3} produces two outputs, when it is run at $t=n$: $\mathbf{y}(n) $ and $\mathbf{q}(n)$ as shown in line 9 of the algorithm. In accordance to the definition of a policy in Equation (\ref{eq:policy}), the LCSF/LCQ policy can be formally defined as the sequence of time-independent mappings $u(\mathbf{x}(n),\mathbf{g}(n) )$ that produce the withdrawal vector $\mathbf{y}(n)$ described in line 9 above.

\begin{lem}\label{lem:LCSF_not_MB}
LCSF/LCQ is not an MB policy.
\end{lem}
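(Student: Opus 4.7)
The plan is to prove Lemma \ref{lem:LCSF_not_MB} by exhibiting a single state $(\mathbf x,\mathbf g)$ at which Algorithm \ref{eq3} produces a withdrawal vector that is not a minimizer of $\kappa_n(\cdot)$ over $\mathcal{Y}(\mathbf x,\mathbf g)$. Since the definition in Equation (\ref{eq:MBpolicy}) requires an MB policy to minimize $\kappa_n$ at every time slot and in every state, a single failing state is enough to conclude that LCSF/LCQ $\notin \Pi^{MB}$.

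The counterexample is driven by the following observation. LCSF/LCQ commits the least connected server to its longest connected queue, breaking queue-length ties by the smallest queue index (line~5 of Algorithm \ref{eq3}). When the least connected server has two long queues tied for longest, and one of them is also the only long queue reachable by a more connected server, the $\min$-index rule can force the less connected server into the shared long queue; the more connected server then doubles up on the same queue, leaving the other tied-long queue untouched. Swapping the two server assignments would have reduced both long queues and produced a strictly smaller imbalance index, which is exactly what is needed to rule out the greedy rule.

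Concretely, I would take $L=4$, $K=2$, non-dummy queue lengths $(x_1,x_2,x_3,x_4)=(4,4,1,1)$, and connectivity $\mathds{Q}_1=\{1,2\}$, $\mathds{Q}_2=\{1,3,4\}$. Since $|\mathds{Q}_1|=2<3=|\mathds{Q}_2|$, Algorithm \ref{eq3} processes server~$1$ first without ambiguity; the $\min$-index tie-break on $\{X'_1,X'_2\}=\{4,4\}$ allocates it to queue~$1$, and server~$2$ is then also allocated to queue~$1$ because $X'_1=3>X'_3=X'_4=1$. The resulting updated non-dummy sizes are $\hat{\mathbf{x}}=(2,4,1,1)$, and Equation (\ref{eq:kappa}) evaluated on the ordered tuple $[4,2,1,1,0]$ gives
\[
  \kappa_n(\mathrm{LCSF/LCQ}) = (2+3+3+4) + (1+1+2) + (0+1) + 1 = 18.
\]
In contrast, the feasible alternative $\mathbf{y}^{MB}=(0,1,1,0,0)$---server~$1$ to queue~$2$, server~$2$ to queue~$1$, both respecting the connectivity matrix and satisfying \eqref{YVeq:cons1}--\eqref{YVeq:cons2}---yields $\hat{\mathbf{x}}=(3,3,1,1)$, and
\[
  \kappa_n(\mathbf{y}^{MB}) = (0+2+2+3) + (2+2+3) + (0+1) + 1 = 16 < 18.
\]
Thus the vector produced by Algorithm \ref{eq3} is not in $\argmin_{\mathbf{y}\in\mathcal{Y}(\mathbf x,\mathbf g)}\kappa_n$, which proves the lemma. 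No inductive or probabilistic machinery is needed; the only subtlety is designing the connectivity so that $|\mathds{Q}_1|\ne|\mathds{Q}_2|$---making the server-ordering in Algorithm \ref{eq3} unambiguous---while keeping the two tied long queues so that the $\min$-index rule in line~5 is forced into the suboptimal choice.
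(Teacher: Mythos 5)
Your proposal is correct and follows essentially the same approach as the paper: exhibit a single state in which the withdrawal vector produced by Algorithm \ref{eq3} has a strictly larger imbalance index ($18$) than a feasible alternative ($16$), so LCSF/LCQ fails the minimization in Equation (\ref{eq:MBpolicy}); your arithmetic for both values of $\kappa_n$ checks out and a single failing state indeed suffices. The only notable difference is that your counterexample hinges on the min-index tie-break in line 5 of Algorithm \ref{eq3}, whereas the paper's example ($L=4$, $K=7$, $\mathbf x(n)=(5,5,5,4)$, servers $1$--$6$ connected to queues $1,2,3$ and server $7$ to queues $1,4$) is suboptimal under any tie-breaking rule; since the lemma concerns the policy exactly as defined by Algorithm \ref{eq3}, this does not affect the validity of your proof.
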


To prove lemma \ref{lem:LCSF_not_MB} we present the following counter example. Consider a system with $L=4$ and $K=7$. At time slot $n$ the system has the following
configuration:

The queue state at time slot $n$ is $\mathbf x(n) = (5,5,5,4)$. Servers 1 to 6 are connected to  queues 1, 2 and 3 and server 7 is connected to queues 1 and 4 only.

Under this configuration, we can show that the LCSF/LCQ algorithm will result in $\mathbf{\hat x}(n)= (0,2,3,3,4)$ (where the first element represents the dummy queue that by assumption holds no real packets) and $  \kappa_n(LCSF/LCQ) = 18$. A policy $\pi$ can be constructed that selects the feasible server allocation $\mathbf q=(1,2,3,1,2,3,4)$ which yields the state $\mathbf{\hat x}(n)= (0,3,3,3,3)$ and $\kappa_n(\pi) = 12 < \kappa_n(LCSF/LCQ)$.   Therefore,  LCSF/LCQ is not an MB policy.

The LCSF/LCQ policy is of particular interest for the following reasons: (a) It follows a particular server allocation ordering (LCSF) to their longest connected queues (LCQ) and thus it can be implemented using simple sequential server allocation with low computation complexity,
(b) the selected server ordering (LCSF) and allocation (LCQ) intuitively attempt to reduce the size of the longest connected queue thus reducing the imbalance among queues, and, (c) as  we will see in Section \ref{sec:SimRes}, the LCSF/LCQ performance is statistically indistinguishable from that of an MB policy (implying that the counterexamples similar to the one in Lemma \ref{lem:LCSF_not_MB} proof have low probability of occurrence under LCSF/LCQ system operation). Therefore, LCSF/LCQ can be proposed as an approximate heuristic for the implementation of MB policies.

\subsection{Approximate Implementation of LB Policies} \label{MCSFSCQ}
In this section, we present the MCSF/SCQ policy as a low complexity approximation of LB policies. We also provide an implementation algorithm for MCSF/SCQ using the same sequential server allocation principle that we used in  Algorithm \ref{eq3} above.

The \textsl{Most Connected Server First/Shortest Connected Queue} (MCSF/SCQ) policy is the server allocation policy that allocates each one of the $K$ servers to its shortest connected queue (not counting the packets already scheduled for service) starting with the most connected server first.
The MCSF/SCQ implementation algorithm is analogous to Algorithm \ref{eq3} except for lines 4 and 5 which are  described next:

\begin{alg}[MCSF/SCQ Implementation]\label{eq:alg2}
\begin{flalign*}
	1.&  \hspace{2mm} for \quad t = 1,2,\ldots \quad do \quad   \nonumber\\
\hspace{5mm}\vdots \nonumber\\
	4.& \hspace{5mm}  for \,\,  j  =    K \quad to \quad 1  \quad \,\, \text{; Servers in descending order} \nonumber\\
	5.& \hspace{9mm}	 Q_{[j]} =  \min \left(  \,l:l\in                                                                                                                                                                                                                                                                                                                                                           \!\left\lbrace \argmin_{k:k\in\mathds{Q}_{[j]}} (X'_k|X'_k>0)\right\rbrace  \right)   \nonumber\\	
\hspace{5mm}\vdots \nonumber\\
	10.& \hspace{5mm}   \text{; End of Algorithm \ref{eq:alg2}.} \nonumber
\end{flalign*}
\end{alg}

Comments analogous to the ones valid for Algorithm \ref{eq3} are also valid for Algorithm \ref{eq:alg2}.

\section{Performance Evaluation and Simulation Results} \label{sec:SimRes}

We used simulation to study the performance of the system under the MB/LB  policies and to compare against the system performance under several other policies.
The  metric we used in this study is $EQ \triangleq E(\sum_{i=1}^L X_i)$, the  average of the total number of packets in the system.

We focused on two groups of simulations. In the first, we evaluate the system performance with respect to number of queues ($L$) and servers ($K$) as well as channel connectivity (Figures \ref{fig:SR} to \ref{figSR3}). Arrivals  are assumed to be i.i.d. Bernoulli.
In the second group  (Figures \ref{figSR41} to \ref{figSR43}) we consider batch arrivals with random (uniformly distributed) burst size.

The policies used in this simulation are: LCSF/LCQ,
as an approximation of an MB policy; MCSF/SCQ,
as an approximation of an LB policy. An MB policy was implemented using full search, for the  cases specified in this section, and its performance was indistinguishable from that of the LCSF/LCQ. Therefore, in the simulation graphs the MB and LCSF/LCQ are represented by the same curves. This statement is also true for  LB and MCSF/SCQ policies performances. Other policies that were simulated include the  randomized, Most Connected Server First/Longest Connected Queue (MCSF/LCQ), and Least Connected Server First/Shortest Connected Queue (LCSF/SCQ) policies. The randomized policy is the one that, at each time slot, allocates each server  randomly and with equal probability  to one of its connected queues.
The MCSF/LCQ policy differs from the LCSF/LCQ policies in the order that it allocates the servers. It uses the exact reverse order, starting the allocation with the most connected server and ending it with the least connected one. However, it resembles the LCSF/LCQ policies in that it allocates each server to its longest connected queue. The  LCSF/SCQ policy allocates each server, starting from the one with the least number of connected queues, to its shortest connected queue. The difference from an LCSF/LCQ policy is obviously the allocation to the shortest connected queue. This policy will result in greatly unbalanced queues and hence a performance that is closer to the LB policies.

Figure \ref{fig:SR} shows the average total queue occupancy versus arrival rate under the five different policies. The system in this simulation is a symmetrical system with 16 parallel queues ($L=16$), 16 identical servers ($K=16$) and i.i.d. Bernoulli queue-to-server (channel) connectivity with parameter $p=P[G_{i,j}(t)=1]=0.2$.

\begin{figure}[t]
\centering
\includegraphics [width=2.8in] {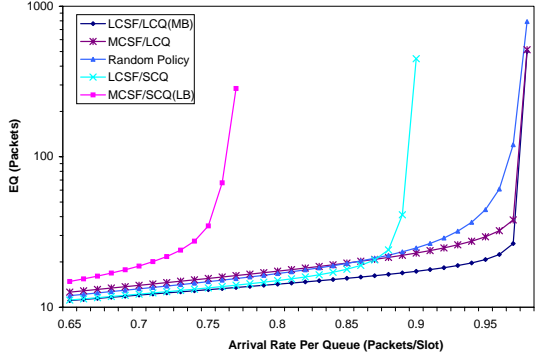}
\caption{Average total queue occupancy, $EQ$, versus load under different policies, $L=16, K=16$ and $p=0.2$.}
\label{fig:SR}
\end{figure}

The curves in Figure \ref{fig:SR} follow a  shape that is initially almost flat  and ends with a rapid increase. This abrupt increase happens at a point where the system becomes unstable. In this case, the queue lengths in the system will grow fast and the system becomes unstable. The graph shows that LCSF/LCQ, the MB policy approximation outperforms\footnote{99\% confidence intervals are very narrow and would affect the readability of the graphs. Therefore they are not included.}
all other policies. It minimizes $EQ$ and hence the queuing delay. We also noticed that it maximizes the system stability region and hence the system throughput as well.
The MCSF/SCQ performed the worst.
As expected, the performance of the other three policies lies within the performance of the MB and LB policies.


The MCSF/LCQ and LCSF/SCQ policies are variations of the MB and LB policies respectively. The performance of MCSF/LCQ policy is close to that of the MB policy. The difference in performance is due to the order of server allocation. On the other hand, the LCSF/SCQ  policy shows a large performance improvement on that of the LB policy. This improvement is a result of the reordering of server allocations.

Figure \ref{fig:SR} also shows that the randomized policy performs reasonably well. Moreover, its performance improves as the number of servers in the system decreases, as the next set of experiments shows.

\subsection{The Effect of The Number of Servers}
In this section, we study the effect of the number of servers on policy performance. Figure \ref{fig:SR11} ($K=8$) and Figure \ref{fig:SR12} ($K=4$) show $EQ$ versus arrival rate per queue under the five policies, in a symmetrical system with $L=16$ and $p=0.2$.
Comparing these two graphs to the one in Figure \ref{fig:SR}, we notice the following:

First, the performance advantage of the LCSF/LCQ (and hence of an MB policy) over the other policies increases as the number of servers in the system increases. The presence of more servers implies that the server allocation action space is larger. Selecting the optimal (i.e., MB) allocation, over any arbitrary policy, out of a large number of options will  result in a better performance as compared to the case when the number of server allocation options is less.

Second, the stability region of the system becomes narrower when less servers are used. This is true because fewer resources (servers) are available to be allocated by the working policy in this case.

Finally, we notice that the MCSF/LCQ performs very close to the LCSF/LCQ  policy in the case of $K=4$. Apparently, when $K$ is small, the order of server allocation does not have a big impact on the policy performance.

\begin{figure}[t]
\centering
\includegraphics [width=2.8in] {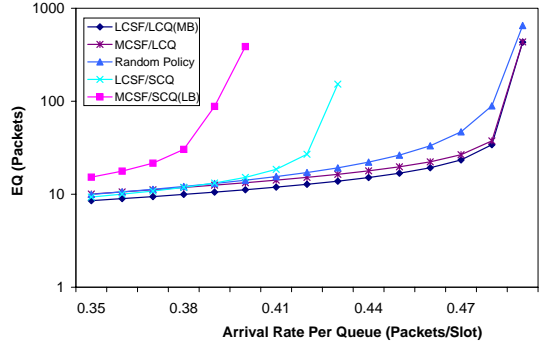}
\caption{Average total queue occupancy, $EQ$,  versus load, $L=16, K=8$ and $p=0.2$.}
\label{fig:SR11}
\end{figure}

\begin{figure}[t]
\centering
\includegraphics [width=2.8in] {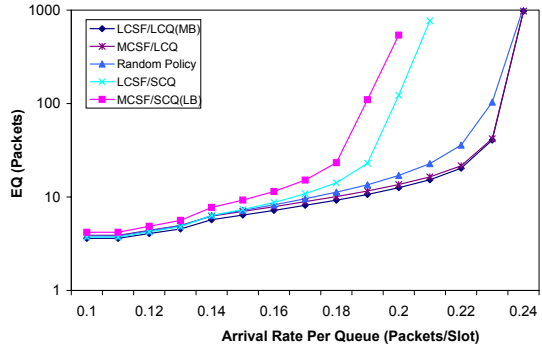}
\caption{Average total queue occupancy, $EQ$,  versus load, $L=16, K=4$ and $p=0.2$.}
\label{fig:SR12}
\end{figure}

\subsection{The Effect of Channel Connectivity}
In this section we investigate the effect of channel connectivity on the performance of the previously considered policies. Figures \ref{figSR2} and \ref{figSR3} show this effect for two choices of $L$ and $K$. We observe the following:

First, we notice that for larger channel connection probabilities ($p\geq 0.9$), the effect of the policy behavior on the system performance becomes less significant. Therefore, the performance difference among the various policies is getting smaller. The LCSF/LCQ  policy still has a  small advantage over the rest of the policies, even though it is statistically difficult to distinguish. MCSF/SCQ continues to have the worst performance. As $p$ increases, the probability that a server will end up connected  to a group of empty queues will be very small regardless of the policy in effect. In fact, when the servers have full connectivity to all queues (i.e., $p=1.0$) we expect that any work conserving policy will minimize the total number of packets in a symmetrical homogeneous system of queues since, any (work-conserving) policy will be optimal in a system with full connectivity.

Second, from all graphs we observe that there is a maximum input load that results in a stable system operation (maximum stable throughput)\footnote{The last point in every curve corresponds to an overloaded system operating beyond its stability region. As a result the simulation is permanently in a ``transient'' state. Such points are shown in the presented graphs for illustrative purposes in order to show the trend of the queue size.}. An upper bound (for stable system operation) for the arrival rate per queue $\alpha$ is given by
\begin{equation} \label{eq:stability}
    \alpha < \frac{K}{L} \big(1-(1-p)^L \big),
\end{equation}
i.e., the average number of packets entering the system ($\alpha  L$) must be less than the rate they are being served.
When $p=1.0$, the stability condition in Inequality (\ref{eq:stability}) will be reduced to $\alpha L < K$, which makes intuitive sense in such a system.

Finally, we observe that the MCSF/LCQ policy performance is very close to that of LCSF/LCQ. However, its performance deteriorates in systems with higher number of servers and lower probabilities for queue-server connectivity. It is intuitive that with more servers available, the  effect of the order of server allocations  on the policy performance will increase. Since MCSF/LCQ differs from LCSF/LCQ  only by the order of server allocation, therefore, more servers implies larger performance difference. Also, the lower the connectivity probability, the higher the probability that a server will end up with no connectivity to any non-empty queue, and hence be forced to idle.

\begin{figure*}[t]
  \renewcommand{\Diamond}{\times}
  \begin{center}
    \subfigure[$p=0.3$]{\label{figSR21}\includegraphics[width=5.8cm] {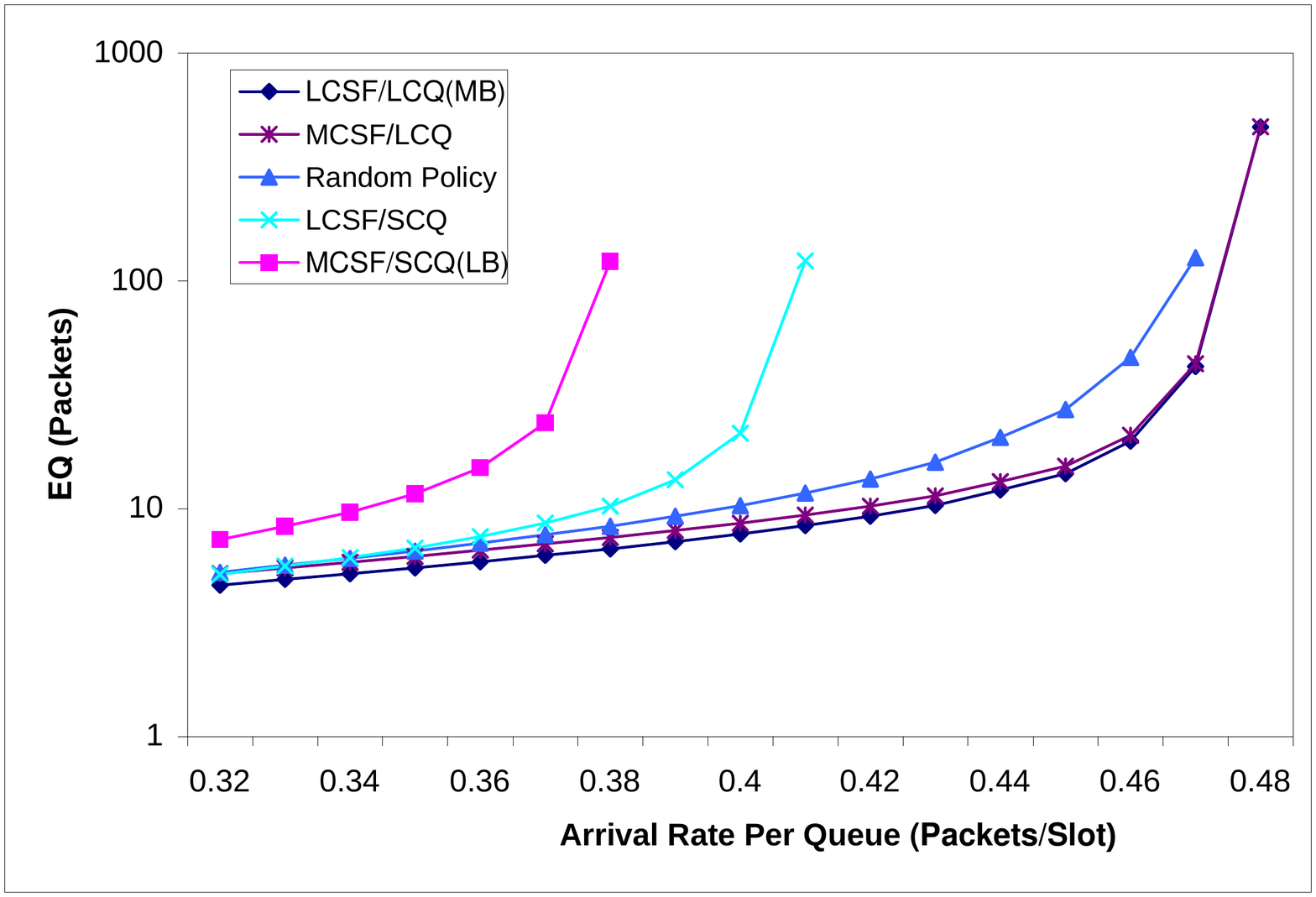}} \hspace{-.1cm}
    \subfigure[$p=0.5$]{\label{figSR22} \includegraphics[width=5.8cm]{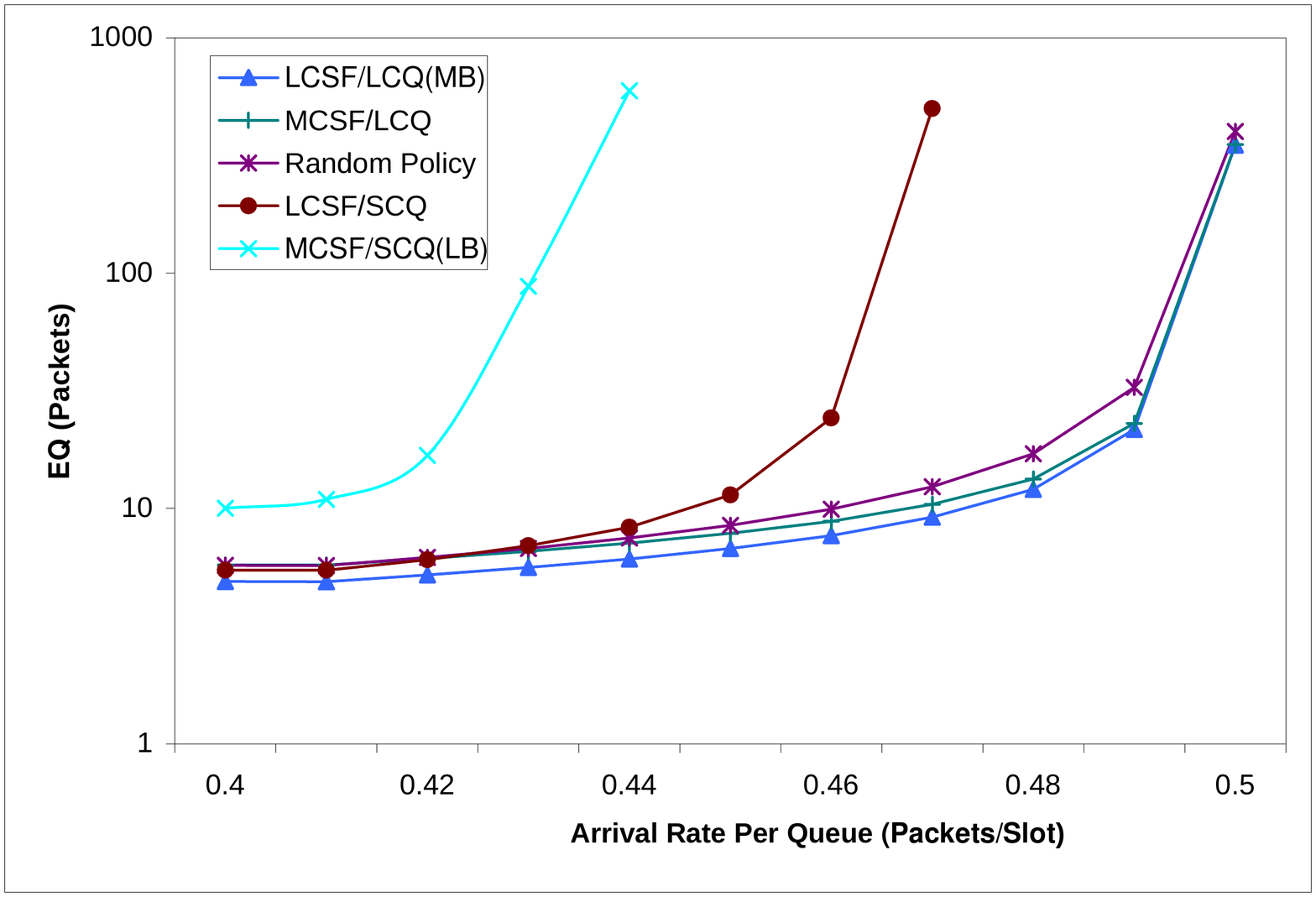}} \hspace{-.1cm} 
    \subfigure[$p=0.9$]{\label{figSR23} \includegraphics[width=5.8cm]{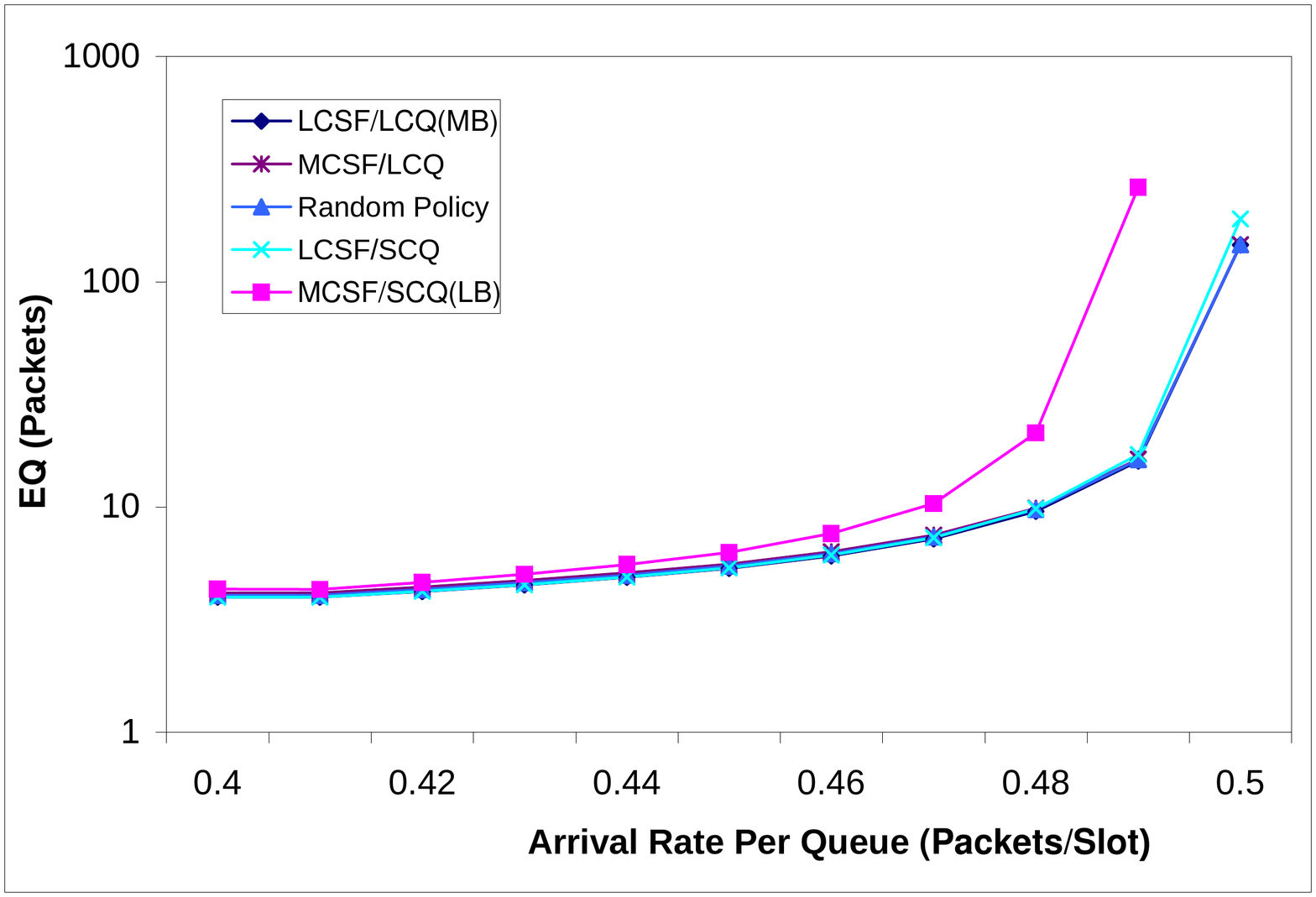}}\\
    \end{center}
  \caption{Average total queue occupancy, $EQ$,  versus load under different policies, $L=8$ and $K=4$.}
\label{figSR2}
  \vspace{-1 mm}
\end{figure*}

\begin{figure*}[t]
  \renewcommand{\Diamond}{\times}
  \begin{center}
    \subfigure[$p=0.3$]{\label{figSR31}\includegraphics[width=5.8cm] {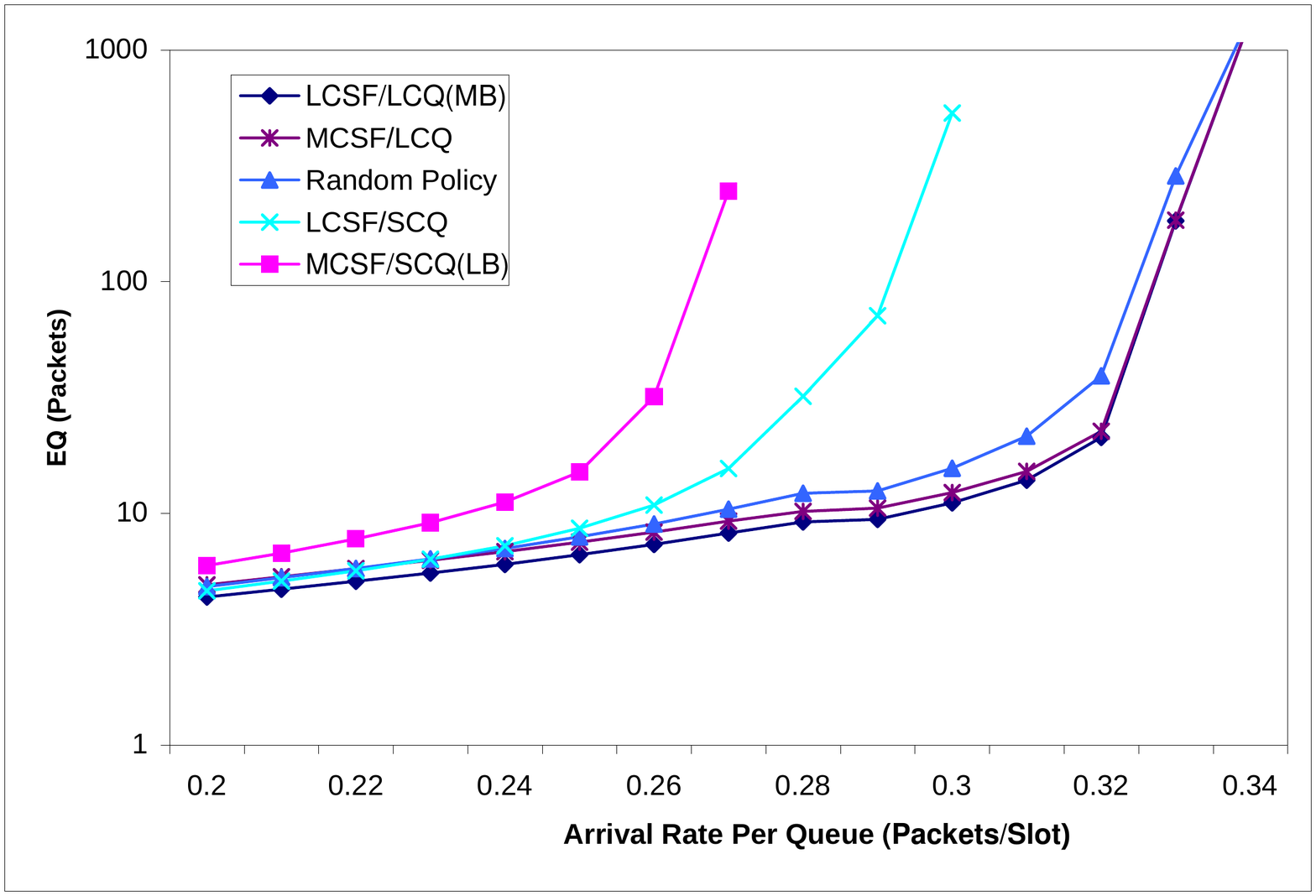}} \hspace{-.1cm}
    \subfigure[$p=0.5$]{\label{figSR32} \includegraphics[width=5.8cm]{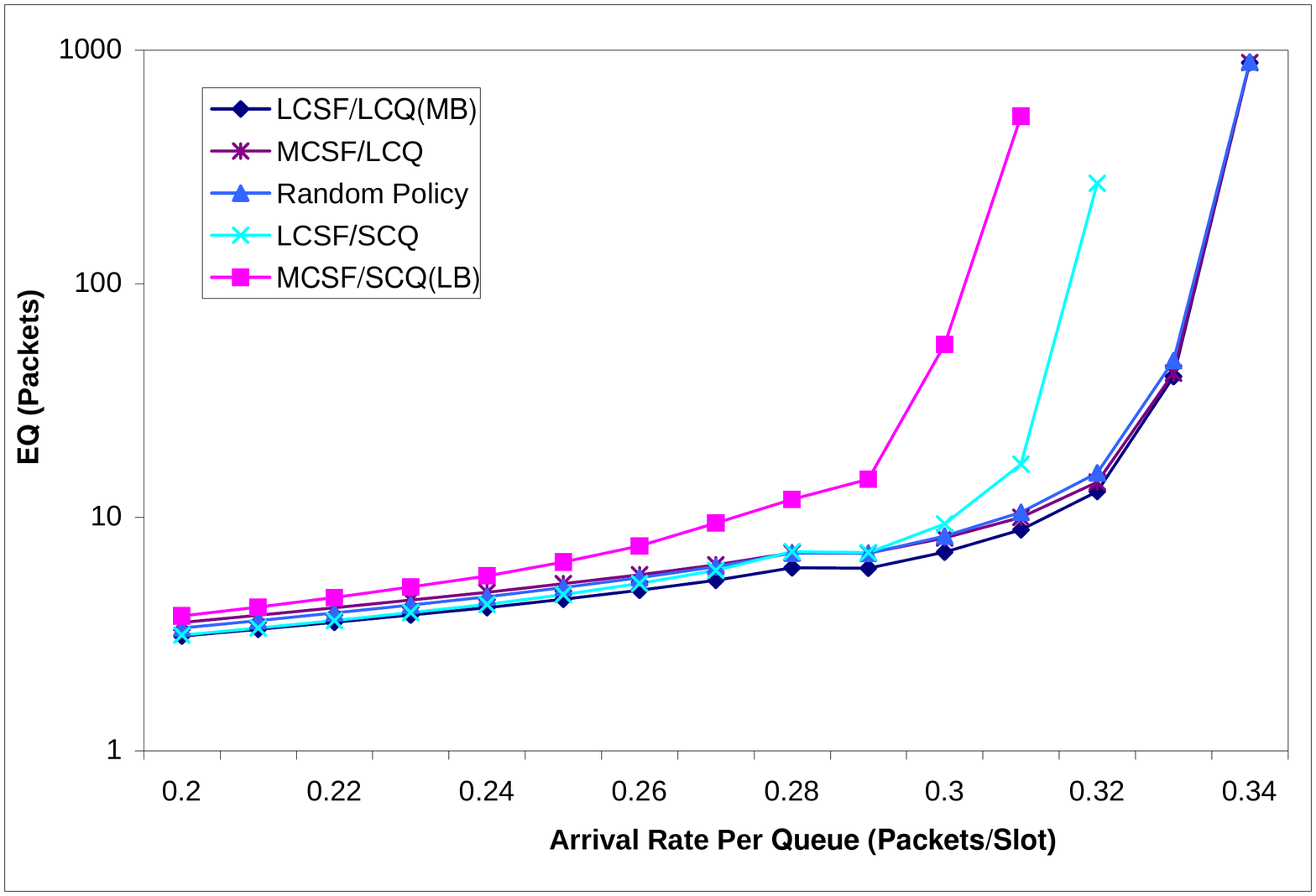}} \hspace{-.1cm} 
    \subfigure[$p=0.9$]{\label{figSR33} \includegraphics[width=5.8cm]{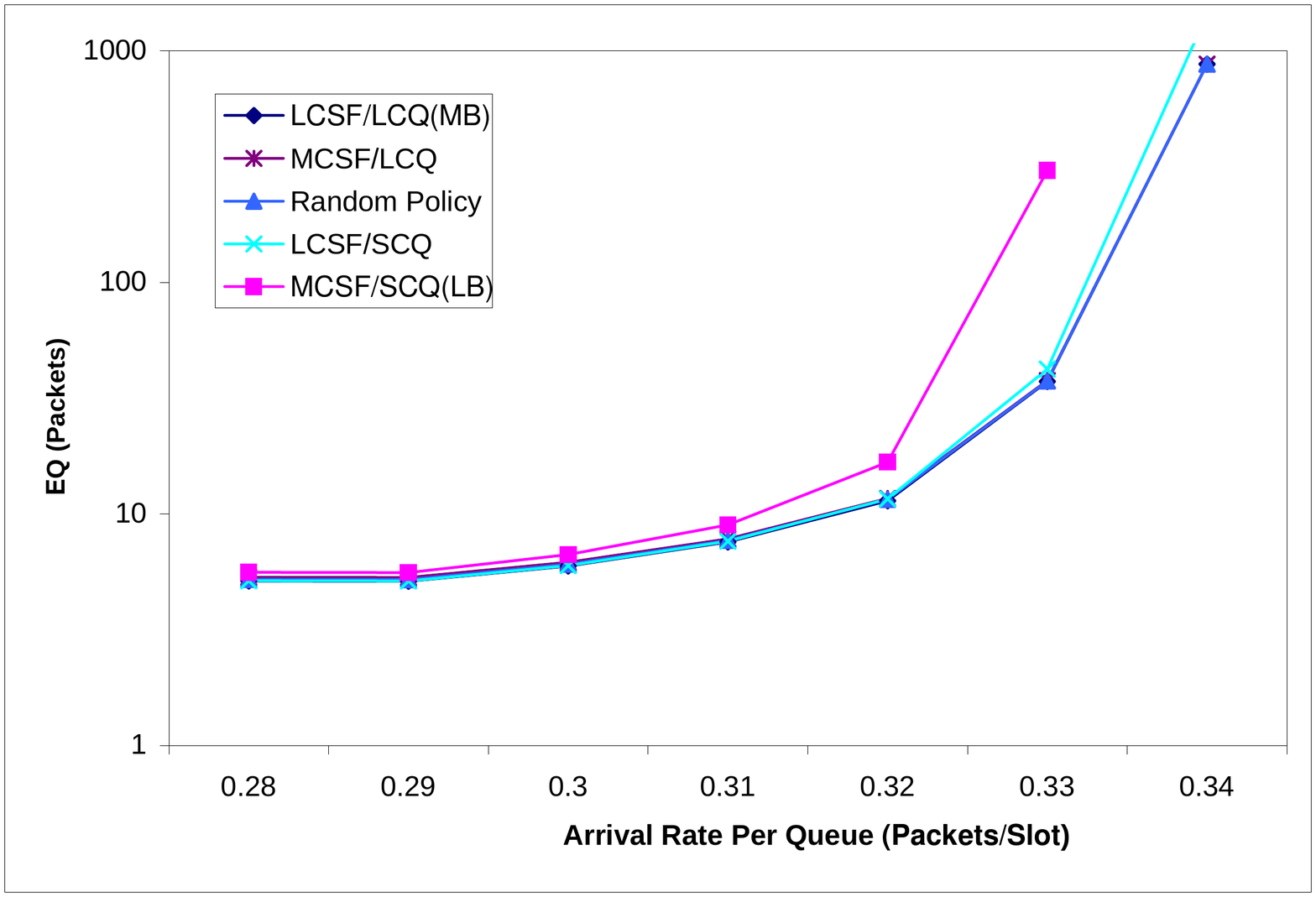}}\\
 \end{center}
 \caption{Average total queue occupancy, $EQ$,  versus load under different policies, $L=12$ and $K=4$.}
 \label{figSR3}
\end{figure*}

\subsection{Batch Arrivals With Random Batch Sizes}
We studied the performance of the presented policies in the case of batch arrivals with uniformly distributed batch size, in the range $\{1,\dots,U\}$. Figure \ref{figSR4} shows $EQ$ versus load for three cases with $U=2,5,10$, and hence average batch sizes 1.5, 3, and 5.5. The LCSF/LCQ policy clearly dominates all the other policies. However, the performance of the other policies, including MCSF/SCQ (LB approximation) approaches that of the LCSF/LCQ policy as the average batch size increases. The performance of all the policies deteriorates when the arrivals become burstier, i.e., the batch size increases. 

\begin{figure*}[th]
  \renewcommand{\Diamond}{\times}
  \begin{center}
    \subfigure[$p=0.5$, average batch size = 1.5.]{\label{figSR41}\includegraphics[width=5.8cm] {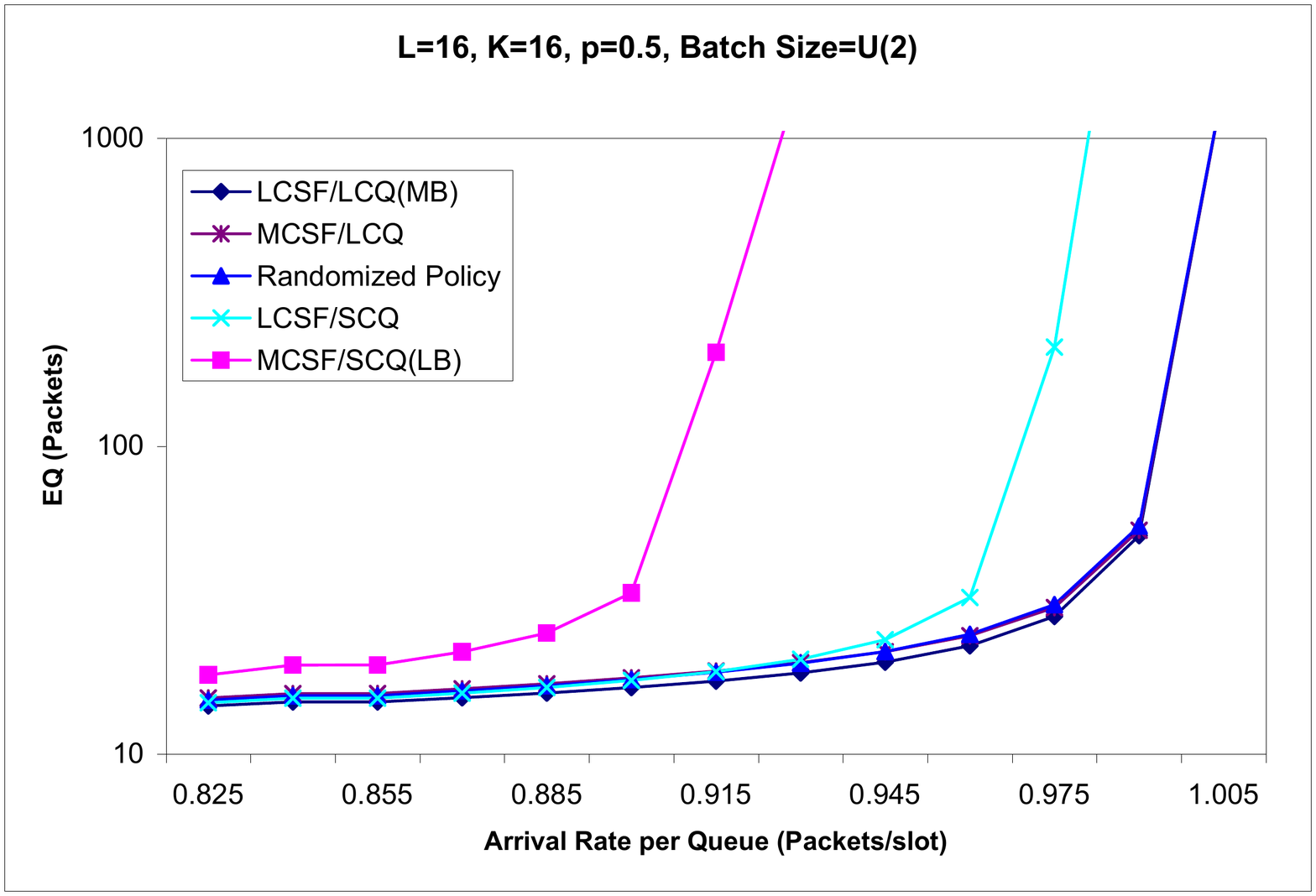}} \hspace{-.1cm}
    \subfigure[$p=0.6$, average batch size = 3.]{\label{figSR42} \includegraphics[width=5.8cm]{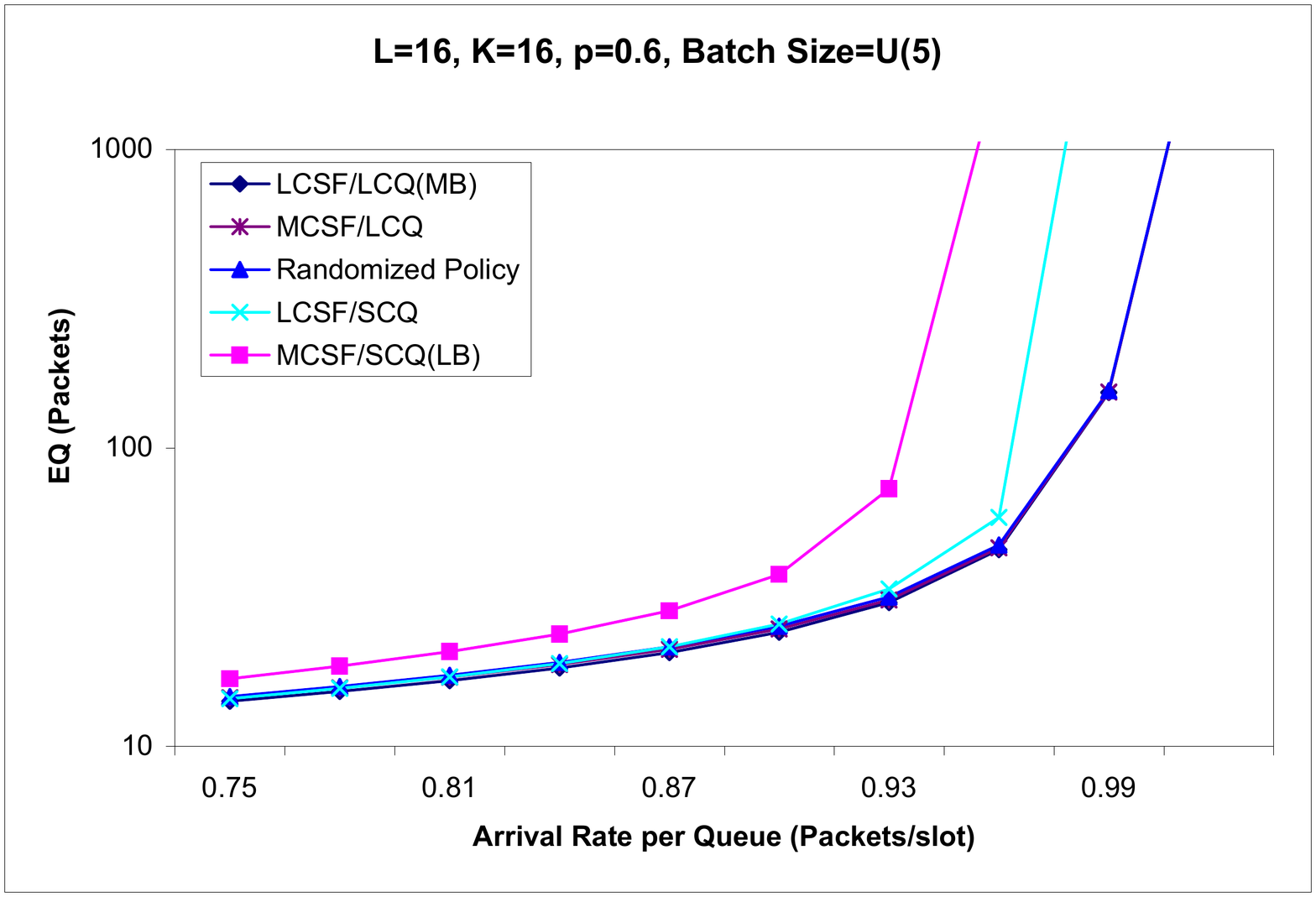}} \hspace{-.1cm} 
    \subfigure[$p=0.8$, average batch size = 5.5.]{\label{figSR43} \includegraphics[width=5.8cm]{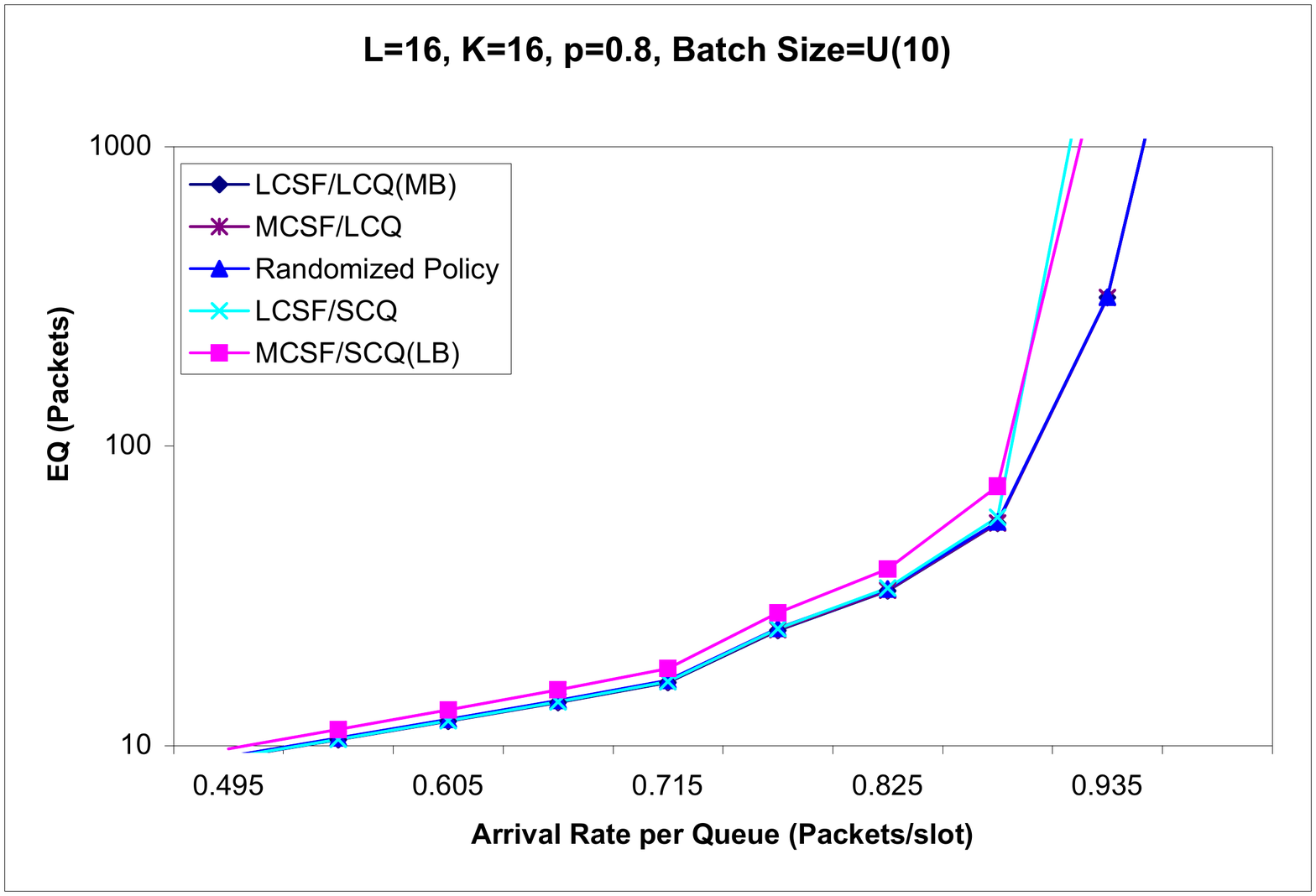}}\\
   \end{center}
  \caption{Average total queue occupancy, $EQ$,  versus load, batch arrivals, $L=16$ and $K=16$.}
\label{figSR4}
\end{figure*}


\section{Final Remarks}\label{sec:final-remarks}
The model and the results presented in this article can be regarded as a generalization (with the obvious added complexity as well as  utility of our model) of the models and results reported by \cite{Tassiulas}, \cite{ganti}, and  \cite{Kittipiyakul}.  

In  \cite{Tassiulas}, the authors investigated the optimal scheduling policy for a model of $L$ parallel queues  and one randomly connected server. This model is a special case of the model we presented in this article, i.e., when $K=1$. Using stochastic dominance techniques, they proved that LCQ is optimal in that it minimizes the total number of packets in the system.  In our work, we also use stochastic dominance  techniques to prove the optimality of MB policies for a wide range of cost functions (cost functions that are monotone, non-decreasing with respect to the partial order $\prec_p$) including the total number of packets in the system. It can be easily shown that for the case of a single server (i.e., $K=1$) the LCQ policy minimizes the imbalance index and therefore, LCQ belongs to the set of MB policies.

In   \cite{ganti}, the authors investigated the optimal policy for a model of $L$ parallel queues with a stack of $K$ servers. Each queue is  randomly connected to the  entire server stack. Only one server can be allocated to a queue at any time slot. In contrast, our model assumes independent queue-server connectivity, i.e., a queue can be connected to a subset of the $K$ servers and not connected to the rest at any given time slot. We also allow for multiple  servers to be allocated (when connected) to any queue. Therefore, the model in  \cite{ganti} can also be considered as a special case of our model, i.e., by letting $g_{i,j}(t) = g_i(t), \forall i,j,t$ and by adding the feasibility constraint $y_i(t) \leq 1$. They proved that a policy  that allocates the $K$ servers to the $K$ longest connected queues (LCQ) is optimal. Under the constraints above, this policy would also minimize the imbalance index among all feasible policies, i.e., this policy belongs to the set of MB policies.

In  \cite{Kittipiyakul} the authors proved that, in a model  of two parallel queues ($L=2$) and multiple randomly connected servers, a MTLB (maximum throughput/load balancing) policy minimizes the expected total cost. They defined the cost as a class of functions of the queue lengths for the two queues in the system.  In our work, we generalize the model in \cite{Kittipiyakul} as follows: (a) we extend the model to   $L>2$, (b) we optimize the cost function in the  stochastic order sense which implies the expected total cost used in  \cite{Kittipiyakul}, and (c) we relax the   supermodularity and  convexity constraints that they enforced on the cost function, i.e., we prove our results for a larger set of cost functions that includes theirs.

The authors of  \cite{Kittipiyakul} defined the MTLB policy as the one that minimizes the lexicographic order of the queue length vector while maximizing the instantaneous throughput. We can show that MTLB policy belongs to the set of MB policies.
To  do that, we have to show that a policy which minimizes the lexicographic order: (a) also minimizes the imbalance index, i.e., it belong to the set of MB policies, and (b) is a work-conserving policy. A work-conserving policy minimizes the number  of idling servers and hence, it maximizes instantaneous throughput (by the definition of instantaneous throughput).
Lemma \ref{lem:lexIsMB} states these results formally. 

\begin{lem}\label{lem:lexIsMB} 
Given the state $(\mathbf x(n), \mathbf g(n)) $ during time slot $n$. Let $\mathbf{\lambda}^*$ be a vector resulted from the feasible withdrawal vector $\mathbf y^*(n)\in \mathcal Y(\mathbf x(n), \mathbf g(n)) $. 
Suppose that $\mathbf{\lambda}^* \leq_{lex} \mathbf{\lambda}$ for all feasible $\mathbf{\lambda}$. Then: a) the  vector $\mathbf {\lambda}^*$ achieves  the minimum imbalance index among all feasible  vectors, and b) A policy that selects $\mathbf y^*(n)$ is a work-conserving policy.
\end{lem}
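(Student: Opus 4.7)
I will prove both parts by contradiction, leveraging the balancing-interchange machinery of Section \ref{YV:Balancing-Interchanges}. The common thread is a dichotomy for any feasible balancing interchange $\mathbf{I}(f,t)$: if $\hat x_f(n) = \hat x_t(n) + 1$, the interchange merely swaps the values of queues $f$ and $t$, leaving both the sorted vector $\mathbf{\lambda}$ and $\kappa_n$ unchanged; if $\hat x_f(n) \ge \hat x_t(n) + 2$, then $\kappa_n$ strictly decreases (by Lemma \ref{lem:YV1}) and $\mathbf{\lambda}$ becomes strictly lex-smaller. Verifying the second half of this dichotomy is the main technical step: choose $l$ as the last position in $\mathbf{\lambda}$ at which the value $\hat x_f(n)$ appears, note that all entries strictly above position $l$ are unaffected by the interchange and resort, and observe that the new entry placed at position $l$ is at most $\max(\hat x_f(n) - 1, \lambda_{l+1}) < \lambda_l$ (the strict inequality coming from $\lambda_{l+1} < \lambda_l$ by the tie-breaking choice). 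Much of this bookkeeping is already present inside Lemma \ref{lem:alg1c}, which I would cite to keep the exposition short.

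For part (a), suppose $\mathbf{y}^*$ is lex-minimizing yet fails to be MB, so there is a feasible $\mathbf{y}^{MB}$ with $\kappa_n(\mathbf{y}^{MB}) < \kappa_n(\mathbf{y}^*)$. By Lemmas \ref{lem:I4} and \ref{lem:I6}, $\mathbf{y}^*$ can be transformed into $\mathbf{y}^{MB}$ via a sequence of feasible balancing interchanges, each step yielding a feasible withdrawal vector. Along this sequence $\kappa_n$ is monotone non-increasing and drops strictly by the end; hence at least one interchange falls on the second branch of the dichotomy above. The feasible intermediate withdrawal vector produced immediately after that step has sorted vector strictly lex-smaller than $\mathbf{\lambda}^*$, contradicting the lex-minimality of $\mathbf{y}^*$.

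For part (b), suppose $\mathbf{y}^*$ is lex-minimizing but not work-conserving, so that some server $j$ idles under $\mathbf{y}^*$ while being connected to a non-empty queue $i$ with $\hat x_i(n) \ge 1$ and $g_{i,j}(n) = 1$. Reassigning server $j$ to serve queue $i$ gives a feasible $\mathbf{y}'$ with $y'_0 = y^*_0 - 1$ and $y'_i = y^*_i + 1$. The updated vector accordingly has $\hat x'_0 = -y^*_0 + 1 \le 0$ (so the dummy queue keeps its conventional order $L+1$) and $\hat x'_i = \hat x_i(n) - 1 \ge 0$. The first $L$ entries of $\mathbf{\lambda}'$ then correspond to the sorted multiset of real-queue values with exactly one entry decreased by one; the same sorted-multiset argument used in part (a) shows this list is strictly lex-smaller at some position in $\{1,\dots,L\}$, which is reached before the dummy slot in the full lex comparison. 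Hence $\mathbf{\lambda}' <_{lex} \mathbf{\lambda}^*$, again contradicting lex-minimality.

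The main obstacle I anticipate is the careful tie-breaking in the dichotomy of part (a): when several entries of $\hat{\mathbf{x}}(n)$ share the value $\hat x_f(n)$ or $\hat x_t(n)$, one must specify \emph{which} occurrence is being tracked through the resort. My plan sidesteps much of this by picking $l$ as the last index at which $\lambda_l = \hat x_f(n)$ and $s$ as the first index at which $\lambda_s = \hat x_t(n)$, matching the convention already adopted in Lemma \ref{lem:YV1}; under this choice the ``position $l$ strictly decreases'' statement becomes immediate. Once the dichotomy is granted, parts (a) and (b) collapse into the contradiction arguments above.
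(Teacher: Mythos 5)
Your proof is correct, and it uses the same underlying machinery as the paper (the balancing-interchange lemmas), but the way you deploy it differs in two respects worth noting. For part (a), the paper asserts directly that a single feasible balancing interchange with a gap of at least two exists on $\mathbf y^*(n)$ itself and that it yields a lex-smaller vector; you instead walk through the whole sequence of interchanges guaranteed by Lemmas \ref{lem:I4} and \ref{lem:I6}, and use the swap/strict dichotomy to argue that the lex value is non-increasing along the sequence and strictly drops at the first strict step. This is slightly longer but actually more careful: it covers the possibility that the first available interchange is only a value swap ($\hat x_f = \hat x_t + 1$), a case the paper's one-shot assertion glosses over, and your bookkeeping of the last occurrence of $\hat x_f$ in the sorted vector correctly handles ties and the pinned dummy position. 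For part (b), the paper routes the argument through the imbalance index: an avoidable idle admits the balancing interchange $\mathbf I(f,0)$ (Lemma \ref{lem:I003}), which strictly reduces $\kappa_n$, so a non-work-conserving policy cannot be imbalance-minimal, and then part (a) is invoked. You argue directly on the lexicographic order by reallocating the idle server and showing the sorted real-queue prefix strictly decreases at a position ahead of the dummy slot; this makes part (b) independent of part (a), at the mild cost of relying on the non-idling (directly connected, $\hat x_i(n)\ge 1$) reading of work conservation, which is the reading the paper itself uses, so no gap results. Both routes are valid; yours is marginally tighter in (a) and more self-contained in (b).
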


\begin{proof}
a) 
Assume to the contrary that $\mathbf {\lambda}^*$ does not minimize the imbalance index. Then there must exist a $\mathbf y'(n)\in \mathcal Y(\mathbf x(n), \mathbf g(n)), \mathbf y'(n) \neq \mathbf y^*(n) $ such that the imbalance index of the vector $\mathbf {\lambda}'$ is strictly less than that of $\mathbf {\lambda}^*$. This implies that a policy $\pi^*$ that results in the withdrawal vector $\mathbf y^*(n) $, and therefore the vector  $\mathbf {\lambda}^*$, belongs to the set $\Pi_n^h \setminus \Pi_n$ for some $h>0$, i.e., it does not have  the MB property during time slot $n$. For any given state, a policy that minimizes the imbalance index must exist, `minimization on a finite set'. According to Lemma 4,   $h_{\pi^*}$ balancing interchanges (BIs are feasible interchanges) are required to make any policy in $\Pi_n^h$ belongs to $ \Pi_n$. Lemma D-1 shows that such balancing interchanges  are feasible. 
Therefore, the following \textit{balancing interchange}  is both feasible and enhancing (it reduces the imbalance index):
\begin{equation}\label{eq:1}
		\mathbf {\lambda}' =\mathbf {\lambda}^* - \mathbf I(l,s),
\end{equation}
for any $l,s \in \{0,1,\ldots, L\}$ and $\lambda^*_s < \lambda^*_l -1$. 

In other words,  we perform a feasible server reallocation from the $s^{th}$ longest queue to the $l^{th}$ longest queue in the system during time slot $n$. The resulted leftover vector  $\mathbf {\lambda}'$ is related to the vector $\mathbf {\lambda}^*$ as follows:
\begin{equation}
    \lambda_l' = \lambda^*_l -1, \quad \lambda_s' = \lambda^*_s +1, \quad \lambda_i' = \lambda^*_i, \quad \forall i \neq l,s.
\end{equation}

Since $l<s$ by definition, then it is clear that $\mathbf{\lambda}' \leq_{lex} \mathbf{\lambda^*}$. This contradicts the initial assumption. Therefore, $\mathbf {\lambda}^*$ must have  the minimum imbalance index.

b) A feasible interchange $\mathbf y'(n) = \mathbf y(n) + \mathbf I(f,0)$ is a balancing one, since by definition of queue 0 and the interchange feasibility  conditions, we have $\hat x_f(n)> \hat x_0(n)+1$. Queue 0 is permanently connected to all servers by assumption. According to Lemma B-1 this interchange will definitely reduce the imbalance index. Therefore, any policy that intentionally idles servers can always be improved (i.e., reduce its  imbalance index) by using the balancing interchange $\mathbf I(f,0)$ for some queue $f \in \{1,\ldots, L \}$.

From part a) of this lemma, we showed that a policy that minimizes the lexicographic order will also minimize the imbalance index. We also showed that a  policy that idles servers intentionally  can not achieve the minimum imbalance index. Therefore, only a work-conserving policy can minimize the lexicographic order.
\end{proof}



From the above, we conclude that the MTLB belongs to the class of MB policies.


\section{Conclusion}
In this work, we presented a model for dynamic packet scheduling in a multi-server systems with random connectivity. This model can be used to study packet scheduling  in emerging wireless systems. We modeled such systems via symmetric queues with random server connectivities and and Bernoulli arrivals. We introduced the class of Most Balancing policies. These policies distribute the service capacity between the  connected queues in the system in an effort to ``equalize'' the queue occupancies. A theoretical proof of the optimality of MB policies using stochastic coupling arguments was presented. Optimality was defined as minimization, in stochastic ordering sense, of a range of cost functions of the queue lengths. The LCSF/LCQ policy was proposed as good, low-complexity approximation for MB policies.

A simulation study was conducted to study the performance of five different policies. The results verified that the MB approximation outperformed all other policies (even when the arrivals became bursty). However, the performance of all policies deteriorate as the mean burst size increases. Furthermore, we observed (through simulation) that the performance gain of the optimal policy over the other policies is reduced greatly in this case.
Finally, we observed that a randomized policy can perform very close to the optimal one in several cases.


\appendix
\renewcommand{\thesection}{Appendix \Alph{section}}
\renewcommand{\theequation}{A-\arabic{equation}}
 \setcounter{equation}{0}  
 \renewcommand{\thefigure}{A-\arabic{figure}}
 \setcounter{figure}{0}  

\subsection{Proof of Lemma  \ref{lem:I1}}\label{appendixB1iii}
\begin{proof} 
To prove part (a), assume that $\mathbf D=0$; then, using Equation (\ref{eq:I1}), we have:
\begin{eqnarray}
	  \mathbf y(n) &=& \mathbf y^{MB}(n) \nonumber \\
	  \mathbf x(n) - \mathbf y(n) &=& \mathbf x(n) - \mathbf y^{MB}(n) \nonumber \\
	  \mathbf {\hat x}(n) &=& \mathbf {\hat x}^{MB}(n) \label{eq:I11}
\end{eqnarray}
From Equations (\ref{eq:I11}) and (\ref{eq:kappa}), we have that  $\kappa_n(\pi) = \kappa_n(\pi^{MB})$
 and thus $\pi$ has the MB property during time slot $n$.

To prove  part (b), assume that $\pi$ has the MB property at time slot $n$. Therefore, $\kappa_n(\pi) = \kappa_n(\pi^{MB})$. From Lemma \ref{lem:alg1c} this is only possible if either: (i)  $\mathbf {\hat x}(n) = \mathbf {\hat x}^{MB}(n)$, or (ii) $\mathbf {\hat x}(n)$ is obtained  by performing a  balancing interchange between the pair of the $l^{th}$ and the $s^{th}$ longest queues ($l<s$) in $ \mathbf {\hat x}^{MB}(n)$ such that $\hat x_{[l]}(n) = \hat x_{[s]}(n) +1$, is satisfied; note that there may be multiple such queue pairs. The  balancing interchange in case (ii) will affect the length of two queues only (call them $i$ and $j$) such that $\hat x_i(n) = \hat x^{MB}_i(n) -1$ and $\hat x_j(n) = \hat x^{MB}_j(n) +1$, where $i=[l]$ and $j=[s]$ (for each given pair). Therefore,
\begin{eqnarray}
 y_i(n)&=& x_i(n) - \hat x_i(n)  = x_i(n) - (\hat x^{M\!B}_i(n) -1)  \nonumber \\
 	&= &y^{M\!B}_i(n) +1, \label{eq:I12}
\end{eqnarray}
and,
\begin{eqnarray}
 y_j(n) & =& x_j(n) - \hat x_j(n)  = x_j(n) - (\hat x^{M\!B}_j(n) +1)  \nonumber \\
 	    & =& y^{M\!B}_j(n) -1,  \label{eq:I13}
\end{eqnarray}
while withdrawals from the remaining queues will be the same, i.e.,
\begin{eqnarray}
 y_b(n) & =& y^{MB}_b(n), \forall b \neq i,j.  \label{eq:I13.1}
\end{eqnarray}

From Equations (\ref{eq:I12}) through (\ref{eq:I13.1}),  we conclude that the vector $\mathbf D$ has components that are $0, +1, $ or $-1$ only.
\end{proof}

\subsection{Balancing Interchanges and the Imbalance Index} \label{appendixB1}

 
In this section, we present  a lemma   that quantifies the effect of performing a balancing interchange on the imbalance index $\kappa_n(\pi)$. 

\begin{lemA}\label{lem:alg1c}
Let  $\mathbf x$ and $\mathbf x^*$  be two $L+1$-dimensional ordered vectors  (in descending order); suppose that  $\mathbf x^*$  is obtained from   $\mathbf x$ by performing a balancing interchange $ \mathbf I(l,s)$ between two components, $l$ and $s$, of $\mathbf x$, where $x_l > x_s$,
such that, $ s>l; x_l> x_a, \forall a>l$ and $x_s < x_b, \forall b<s$.
Then 
\begin{eqnarray}\label{eq:lemB1_1}
\sum_{i'=1}^{L} \sum_{j'=i'+1}^{L+1} ( x^*_{i'} - x^*_{j'} ) = \sum_{i=1}^{L} \sum_{j=i+1}^{L+1} ( x_{i} -x_{j} ) \nonumber \\
 -2(s-l) \cdot \mathds{1}_{\{ x_{l} \geq x_{s} +2  \}}
\end{eqnarray}
%
\end{lemA}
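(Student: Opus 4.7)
The plan is to split into two cases matching the indicator in the conclusion: the ``degenerate'' case $x_l = x_s + 1$ and the ``nondegenerate'' case $x_l \geq x_s + 2$. In both, the sums are to be read with respect to the descending-sorted version of each vector, so I first need to check whether the balancing interchange $\mathbf{I}(l,s)$ preserves the descending order of $\mathbf{x}$.

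In the degenerate case, I would first deduce from the extremal hypotheses that $s = l+1$. If instead $s \geq l+2$, then $x_l > x_{l+1}$ together with integer-valued components gives $x_{l+1} \leq x_l - 1 = x_s$, while $x_{s-1} > x_s$ gives $x_{s-1} \geq x_s + 1$; but the descending order forces $x_{l+1} \geq x_{s-1}$, a contradiction. With $s = l+1$, the interchange merely swaps the values at positions $l$ and $l+1$, so after re-sorting one recovers $\mathbf{x}$ itself; the multiset of components is preserved, the double sum is unchanged, and the claimed identity holds with its indicator set to $0$.

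In the nondegenerate case, I would first verify that $\mathbf{x}^*$ is already in descending order, so no re-sorting is needed. Integrality plus $x_l > x_{l+1}$ gives $x^*_l = x_l - 1 \geq x_{l+1} = x^*_{l+1}$, and symmetrically $x^*_s = x_s + 1 \leq x_{s-1} = x^*_{s-1}$; all other consecutive comparisons are unchanged. I would then compute the difference of the two sums via the one-line manipulation
\[
\sum_{i<j} (x^*_i - x^*_j) - \sum_{i<j} (x_i - x_j) \;=\; \sum_{i<j} (x^*_i - x_i) \;-\; \sum_{i<j} (x^*_j - x_j),
\]
using that $x^*_k - x_k$ equals $-1$ at $k = l$, $+1$ at $k = s$, and $0$ elsewhere. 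The first sum on the right collapses to $-(L+1-l) + (L+1-s) = l - s$ (counting pairs in which the perturbed index occupies the left slot), the second to $-(l-1) + (s-1) = s - l$ (right slot), and their difference is $-2(s-l)$, as claimed.

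The only real obstacle is the case analysis in the degenerate case together with the verification that $\mathbf{x}^*$ is already sorted in the nondegenerate one; the arithmetic itself is a clean collapse via the identity above rather than a term-by-term pair enumeration.
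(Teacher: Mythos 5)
Your argument is correct, and it takes a genuinely different route from the paper's. The paper's proof lets the interchange break the descending order: it re-sorts the perturbed vector, introduces the new positions $l'$ and $s'$ of the two modified components, and then enumerates five mutually exclusive families of index pairs whose differences change by $-2,-1,-1,+1,+1$ with multiplicities $1$, $L-l$, $s-2$, $l-1$, $L-s+1$, which sum to $-2(s-l)$ when $x_l \geq x_s+2$; the case $x_l = x_s+1$ is disposed of exactly as you do, by noting that the sorted result is a permutation of $\mathbf{x}$. You instead observe that under $x_l \geq x_s+2$ no re-sorting is needed, and obtain the change in one stroke by linearity: with $\delta_k = x^*_k - x_k \in \{-1,0,+1\}$, position $k$ occurs in $L+1-k$ pairs as the left index and in $k-1$ pairs as the right index, so the total change is $(l-s)-(s-l)=-2(s-l)$. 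This replaces the paper's five-case bookkeeping with a two-line computation and makes transparent why the extremal choice of $l$ and $s$ (last index at value $x_l$, first at value $x_s$) matters: it is exactly what keeps the perturbed vector sorted. Two small remarks. First, when $s=l+1$ your displayed check $x^*_l = x_l-1 \geq x_{l+1} = x^*_{l+1}$ is not literally valid, since then $x^*_{l+1}=x_{l+1}+1$; the comparison of the two perturbed entries needs the full case hypothesis, $x^*_l - x^*_s = x_l - x_s - 2 \geq 0$, which holds, so this is a one-line fix rather than a gap. Second, your deduction that $s=l+1$ in the degenerate case is correct but unnecessary: when $x_l=x_s+1$ the interchange replaces the values $\{x_l,x_s\}$ by $\{x_s,x_l\}$, so the multiset---hence the sorted vector and the double sum---is unchanged regardless of the positions involved.
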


\begin{proof}

We generate the vector $\mathbf x^*$  by performing
a \textit{balancing interchange} of two components, $l$ and $s$ (i.e., the $l^{th}$ and the $s^{th}$ largest components), in the vector $\mathbf x$ and reorder the resulted vector in descending manner. The resulted vector $\mathbf x^*$ is characterized by the following:
\begin{eqnarray}\label{eq:lemB1_2}
	&& x^*_{l'}= x_{l}-1, \quad x^*_{s'}= x_{s}+1, \quad  x_{l}> x_{s} \nonumber \\
	 && x^*_{k}= x_{k}, \quad  \forall k \neq l,s, l',s'
\end{eqnarray}
where $l'$ (respectively $s'$) is the new index  (i.e., the order in the new vector $\mathbf x^*$) of   component $l$ (respectively $s$) in the original vector $\mathbf x$.



From Equation (\ref{eq:lemB1_2}) we can identify $L-2$ elements that have the same magnitude in the two  vectors $\mathbf x$ and $\mathbf x^*$. Therefore, the sum of differences between these $L-2$ elements in both vectors will also be the same, i.e.,

\begin{equation}
\sum_{\substack{i'=1\\ i' \notin  \{l',s'\}} }^{L} \sum_{\substack{j'=i'+1 \\ j' \notin  \{l',s'\}} }^{L+1} ( x^*_{i'} - x^*_{j'} ) = \sum_{\substack{i=1 \\i  \notin  \{l,s\}} }^{L} \sum_{\substack{j=i+1 \\ j  \notin  \{l,s\}} }^{L+1} ( x_{i} -x_{j} )
 \label{eq:lemB1_3}
\end{equation}

We calculate the sums for the remaining terms  (i.e., when at least one of the indices $i,j$ belongs to $ \{l,s\}$ and/or $i',j'$ belongs to $ \{l',s'\}$) next.  
We first assume that  $x_{l} \geq x_{s}+2 $; in this case, we can easily show that $l' \leq s'$. Then, we have the following five, mutually exclusive, cases to consider:

\begin{enumerate}
\item  When $i'=l', i=l, j'=s'$ and $j=s$. This case occurs only once, i.e., when decomposing the double sum in Equation (\ref{eq:lemB1_1}) we can find only one term that satisfies this case. From Equation (\ref{eq:lemB1_2}) we have
\begin{equation}\label{eq:lemB1_4}
 x^*_{l'} - x^*_{s'}  = x_{l} -x_{s} -2
\end{equation}

\item When $i'=l', i=l,j'\neq s'$ and $ j \neq s$. There are $L-l$ terms that satisfy this case. Analogous to  case 1) we determined that
\begin{equation}\label{eq:lemB1_5}
 x^*_{l'} - x^*_{j'}  = x_{l} -x_{j} -1
\end{equation}

\item When $i' \neq l', i \neq l, j'=s' $ and $ j = s$. There are $s-2$ terms that satisfy this case. In this  case we can show that
\begin{equation}\label{eq:lemB1_6}
 x^*_{i'} - x^*_{s'}  = x_{i} -x_{s} -1
\end{equation}

\item When $i'\neq l',s', i \neq l,s,  j'=l'$ and $ j = l$. There are $l-1$ terms that satisfy this case. In this  case we can show that
\begin{equation}\label{eq:lemB1_7}
 x^*_{i'} - x^*_{l'}  = x_{i} -x_{l} +1
\end{equation}

\item When $i'=s', i =s, j'\neq l',s' $ and $ j \neq l,s$. There are $L-s+1$ terms that satisfy this case. In this  case we have
\begin{equation}\label{eq:lemB1_8}
 x^*_{s'} - x^*_{j'}  = x_{s} -x_{j} +1
\end{equation}
\end{enumerate}

The above cases (i.e., Equations (\ref{eq:lemB1_3})-(\ref{eq:lemB1_8})) cover all the terms in Equation (\ref{eq:lemB1_1}) when $x_{l} \geq x_{s}+2 $. Combining all these terms yields:
\begin{eqnarray}
 \sum_{i'=1}^{L} \sum_{j'=i'+1}^{L+1} ( x^*_{i'} - x^*_{j'} ) = \sum_{i=1}^{L} \sum_{j=i+1}^{L+1} ( x_{i} -x_{j} ) \hspace{1.0cm}  \nonumber \\
 -2\cdot (1) \!-\!1 \!\cdot\! (L\!-\!l) \!-\!1 \!\cdot\! (s\!-\!2) \!+\! 1\! \cdot\! (l\!-\!1)\! +\! 1 \!\cdot\! (L\!-\!s\!+\!1) \nonumber \\
 = \sum_{i=1}^{L} \sum_{j=i+1}^{L+1} ( x_{i} -x_{j} ) -2 (s-l) 
 \label{eq:lemB1_9}
\end{eqnarray}

Furthermore, if $x_{l} =x_{s}+1 $, then from Equation (\ref{eq:lemB1_2}) it is  clear that $x^*_{l'} =x_{s}$ and $x^*_{s'} =x_{l}$, i.e., the resulted vector is a permutation of the original one. Therefore,  the sum of differences will be the same in both vectors and Equation (\ref{eq:lemB1_1}) will be reduced to

\begin{equation}\label{eq:lemB1_91}
\sum_{i'=1}^{L} \sum_{j'=i'+1}^{L+1} ( x^*_{i'} - x^*_{j'} ) = \sum_{i=1}^{L} \sum_{j=i+1}^{L+1} ( x_{i} -x_{j} )
\end{equation}
Equation (\ref{eq:lemB1_1}) follows from Equations (\ref{eq:lemB1_9}) and (\ref{eq:lemB1_91}). 
\end{proof}

\subsection{Proof of Lemma  \ref{lem:I4}}\label{appendixB1i}

%
 
We first  introduce a few intermediate lemmas that describe  properties of  $\mathbf I(f,t)$ and  $\mathbf D$.

\begin{lemA} \label{lem:I003}
The feasible interchange $\mathbf I(f,0), f>0$ is a balancing interchange.
\end{lemA}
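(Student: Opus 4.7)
The plan is to verify the defining inequality $\hat x_f(n) \geq \hat x_0(n) + 1$ of a balancing interchange (Equation (\ref{eq:I001YV})) directly, by unpacking the feasibility of the perturbed control $\mathbf y^*(n) = \mathbf y(n) + \mathbf I(f,0)$ together with the conventions imposed on the dummy queue. No coupling, iterated interchange, or induction is needed; the lemma is essentially a compatibility check between feasibility and the definition of queue $0$.

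First I would record the two consequences of the assumed feasibility of $\mathbf y^*(n)$. Since $f > 0$, constraint (\ref{YVeq:cons1}) applied to $y_f^*(n) = y_f(n) + 1$ yields $y_f(n) + 1 \leq x_f(n)$, and hence
\[
    \hat x_f(n) \;=\; x_f(n) - y_f(n) \;\geq\; 1.
\]
On the dummy side, $y_0^*(n) = y_0(n) - 1$ must be nonnegative, so $y_0(n) \geq 1$. Combined with the paper's conventions $x_0(n) \equiv 0$ and $z_0(n) = y_0(n)$, this gives
\[
    \hat x_0(n) \;=\; 0 - y_0(n) \;\leq\; -1.
\]
Chaining the two bounds yields $\hat x_f(n) \geq 1 > 0 \geq \hat x_0(n) + 1$, which is precisely the balancing condition of Equation (\ref{eq:I001YV}) with $t = 0$, so $\mathbf I(f,0)$ is balancing.

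I do not anticipate any real obstacle. The only thing worth flagging is that the entire content of the lemma is forced once one recognizes the built-in asymmetry between the dummy queue, whose updated size is driven strictly negative by any withdrawal (because $x_0 \equiv 0$ but $y_0 \geq 1$ is required), and a real queue $f$, whose updated size remains nonnegative after any feasible withdrawal. This asymmetry, which is the design motivation for introducing queue $0$ in the first place, is exactly what trivializes the balancing check in this boundary case and explains why no further case analysis is necessary.
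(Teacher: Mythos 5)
Your proof is correct and follows essentially the same route as the paper's: both verify the balancing inequality (\ref{eq:I001YV}) directly by noting that feasibility forces $\hat x_f(n) \geq 1$ while the dummy-queue convention ($x_0(n)=0$, $y_0(n)\geq 0$) forces $\hat x_0(n) \leq 0$, so $\hat x_f(n) \geq \hat x_0(n)+1$. The only cosmetic difference is that you extract $\hat x_f(n)\geq 1$ from the necessary condition (\ref{YVeq:cons1}) applied to $\mathbf y^*(n)$ and sharpen $\hat x_0(n)\leq -1$ via $y_0(n)\geq 1$, whereas the paper cites (\ref{eq:I102.1}) and is content with $\hat x_0(n)\leq 0$; neither difference changes the argument.
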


\begin{proof}
By definition, $x_0(n) =0$. Since $y_0(n) \geq 0$, therefore $\hat x_0(n) = x_0(n)- y_0(n) \leq 0$.  According to the feasibility constraint (\ref{eq:I102.1}), the interchange $\mathbf I(f,0)$ is  feasible only when $\hat x_f(n) \geq 1$. Therefore, $\hat x_f(n) \geq \hat x_0(n)+1$, and it follows that $\mathbf I(f,0)$ is a balancing interchange.
\end{proof}

\begin{lemA} \label{lem:I2}
 For a given policy $\pi \in \Pi_{n-1}$ and a time slot $n$,
\begin{equation}\label{eq:I2}
	\sum_{i=0}^L D_i \cdot \mathds 1_{\{D_i>0 \}} = - \sum_{j=0}^L D_j \cdot \mathds 1_{\{D_j < 0 \}}
\end{equation}
i.e., the sum of all positive elements of  $\mathbf D$ equals the sum of all negative elements of  $\mathbf D$. Moreover, $\sum_{i=0}^L  |D_i| /2$ is an integer between 0 and $K$.
\end{lemA}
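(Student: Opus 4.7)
The plan is to exploit the feasibility constraint (\ref{YVeq:cons2}), which forces any feasible withdrawal vector to sum to exactly $K$. Since both $\mathbf y(n)$ and $\mathbf y^{MB}(n)$ are feasible, summing the components of $\mathbf D = \mathbf y^{MB}(n) - \mathbf y(n)$ gives
\[
\sum_{i=0}^L D_i \;=\; \sum_{i=0}^L y_i^{MB}(n) - \sum_{i=0}^L y_i(n) \;=\; K - K \;=\; 0.
\]
Splitting this sum into its positive part, its negative part, and the zero entries immediately yields Equation (\ref{eq:I2}).

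For the integrality and bounds on $h_\pi = \sum_{i=0}^L |D_i|/2$, I would first observe that each $D_i$ is integer-valued (being a difference of integer withdrawals), so $\sum_i |D_i|$ is a nonnegative integer. Using the identity just proved, the positive part and the negative part of $\mathbf D$ have equal magnitudes, hence
\[
\sum_{i=0}^L |D_i| \;=\; 2\sum_{i: D_i>0} D_i,
\]
which shows $h_\pi$ is an integer and makes the lower bound $h_\pi \ge 0$ immediate.

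For the upper bound $h_\pi \le K$, I would use the nonnegativity of the $y_i(n)$'s: for every $i$ with $D_i > 0$, we have $D_i = y_i^{MB}(n) - y_i(n) \le y_i^{MB}(n)$, and therefore
\[
h_\pi \;=\; \sum_{i: D_i>0} D_i \;\le\; \sum_{i: D_i>0} y_i^{MB}(n) \;\le\; \sum_{i=0}^L y_i^{MB}(n) \;=\; K,
\]
again invoking constraint (\ref{YVeq:cons2}). I do not anticipate a serious obstacle: the lemma is essentially a bookkeeping statement that follows directly from feasibility, and nothing about the MB property of $\pi^{MB}$ or the hypothesis $\pi \in \Pi_{n-1}$ is needed beyond the fact that both policies produce feasible withdrawal vectors at time $n$.
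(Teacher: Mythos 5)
Your proposal is correct and follows essentially the same route as the paper: both derive $\sum_{i=0}^L D_i = 0$ from the feasibility constraint (\ref{YVeq:cons2}) and split the sum into positive and negative parts to get (\ref{eq:I2}) and the integrality of $h_\pi$. The only cosmetic difference is in the upper bound, where you use $D_i \le y_i^{MB}(n)$ for the positive components while the paper applies the triangle inequality $\sum_i |D_i| \le \sum_i |y_i^{MB}(n)| + \sum_i |y_i(n)| = 2K$; both give $h_\pi \le K$ immediately.
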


\begin{proof}
For any feasible withdrawal vector $\mathbf y(n)$, we have from Equation (\ref{YVeq:cons2}) that
\[
\sum_{i=0}^L  y_i(n) = K,
\]
where $K$ is the number of servers. From equation (\ref{eq:I1}), we have then:
\begin{eqnarray*}
\sum_{i=0}^L  D_i &= & \sum_{i=0}^L  y^{MB}_i(n) - \sum_{i=0}^L  y_i(n) \\
&=& K-K = 0,
\end{eqnarray*}

\noindent and Equation (\ref{eq:I2}) follows. The last assertion of the lemma follows from

\begin{eqnarray*}
\sum_{i=0}^L  |D_i| &= & \sum_{i=0}^L  D_i \cdot \mathds 1_{\{D_i>0 \}} - \sum_{j=0}^L D_j \cdot \mathds 1_{\{D_j < 0 \}}  \\
&=& 2 \cdot  \sum_{i=0}^L  D_i \cdot \mathds 1_{\{D_i>0 \}}
\end{eqnarray*} 
and
\begin{eqnarray*}
\sum_{i=0}^L  |D_i| &= & \sum_{i=0}^L |y^{MB}_i(n) - y_i(n)|  \\
&\leq& \sum_{i=0}^L |y^{MB}_i(n)| +  \sum_{i=0}^L |y_i(n)| = 2K.
\end{eqnarray*}

\end{proof}

\begin{lemA} \label{lem:I41}
Consider a given state $(\mathbf x(n), \mathbf g(n))$ during time slot $n$. Let $f,t\in \{0,1,\ldots, L\}$ be any two queues such that $\mathbf I(f,t)$ is feasible. A policy $\pi \in \Pi$ that results in $ \hat x_t(n) \leq \hat x_f(n)-2$ does  not have the MB property at time $n$.
\end{lemA}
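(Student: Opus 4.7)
The plan is to prove the contrapositive constructively: assuming the hypothesis, I will exhibit a feasible withdrawal vector that strictly reduces the imbalance index $\kappa_n(\pi)$, which shows $\pi$ cannot be a minimizer.

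First I would use the feasibility of $\mathbf I(f,t)$ to define an alternative policy $\pi^*$ via the updated withdrawal vector $\mathbf y^*(n) = \mathbf y(n) + \mathbf I(f,t)$. By Equation (\ref{eq:I30000yv}), this produces new updated queue sizes $\hat x_f^*(n) = \hat x_f(n) - 1$ and $\hat x_t^*(n) = \hat x_t(n) + 1$, leaving all other components unchanged.

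Next I would verify that $\mathbf I(f,t)$ is in fact a \emph{balancing} interchange in the sense of Equation (\ref{eq:I001YV}). This is immediate from the hypothesis: $\hat x_f(n) \ge \hat x_t(n) + 2 \ge \hat x_t(n) + 1$, so the balancing condition is satisfied (with slack). In particular, letting $l$ denote the descending order of queue $f$ and $s$ the descending order of queue $t$ in $\hat{\mathbf x}(n)$, the strict inequality $\hat x_f(n) > \hat x_t(n)$ forces $l < s$, so $s - l \ge 1$.

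Finally, Lemma \ref{lem:YV1} applies directly. Since $\hat x_{[l]}(n) = \hat x_f(n) \ge \hat x_t(n) + 2 = \hat x_{[s]}(n) + 2$, the indicator in Equation (\ref{eq:I001}) equals $1$, and thus
\begin{equation*}
\kappa_n(\pi^*) = \kappa_n(\pi) - 2(s - l) < \kappa_n(\pi).
\end{equation*}
Because $\pi^*$ is a feasible policy (its withdrawal vector lies in $\mathcal Y(\mathbf x(n),\mathbf g(n))$ by the assumed feasibility of the interchange) with strictly smaller imbalance index, $\pi$ cannot attain $\min_{\mathbf y \in \mathcal Y(\mathbf x, \mathbf g)} \kappa_n$, so by the definition in Equation (\ref{eq:MBpolicy}) it does not have the MB property at time $n$.

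I do not anticipate any real obstacle: the result is essentially a one-step consequence of Lemma \ref{lem:YV1} once it is observed that the gap of $2$ forces the indicator to fire and that $f$ being strictly longer than $t$ forces a strict order gap $s - l \ge 1$. The only mildly delicate point is to handle tiebreaking in the ordering $[\cdot]$ consistently with the footnote convention in Lemma \ref{lem:YV1}, but since $\hat x_f$ and $\hat x_t$ differ by at least $2$, no ties arise between these two specific queues and the convention chooses valid values of $l$ and $s$ with $l < s$ unambiguously.
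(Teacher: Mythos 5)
Your proposal is correct and follows essentially the same route as the paper: the interchange is balancing by the gap-of-two hypothesis, and Lemma \ref{lem:YV1} (Equation (\ref{eq:I001})) then yields a strict reduction of $\kappa_n$ by $2(s-l)$, contradicting minimality. Your version is merely a bit more explicit than the paper's about why $s-l\ge 1$ and about tie-breaking in the ordering.
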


\begin{proof}
The interchange $\mathbf I(f,t)$  is a balancing interchange by definition. Since $ \hat x_t(n) \leq \hat x_f(n)-2$, then the balancing interchange $\mathbf I(f,t)$ reduces the imbalance index by a factor of two according to Equation (\ref{eq:I001}). Therefore, $\pi$ does not achieve the minimum imbalance index during time slot $n$, q.e.d.
\end{proof}


\begin{lemA} \label{lem:001}
Given the state $(\mathbf x(n),  \mathbf g(n) )$ and a feasible withdrawal vector $\mathbf y(n)$ then a  withdrawal vector $\mathbf y'(n)$ that results from performing any sequence of feasible, single-server reallocations on $\mathbf y(n)$ is feasible.
\end{lemA}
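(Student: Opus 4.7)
The plan is to prove this by a straightforward induction on $m$, the length of the sequence of feasible, single-server reallocations applied to $\mathbf y(n)$. The key observation that makes the argument essentially bookkeeping is that the very definition of a \emph{feasible} single-server reallocation in the excerpt requires the \emph{resulting} scheduling control to be feasible; once this single-step fact is in hand, iterating it and translating between scheduling and withdrawal controls via Equation (\ref{Yi_qi}) yields the claim.

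First, I would fix a scheduling control $\mathbf q(n)$ that implements the given feasible withdrawal $\mathbf y(n)$ (such a $\mathbf q(n)$ exists because $\mathbf y(n)$ is feasible). Denote the sequence of single-server reallocations by pairs $(t_1 \to f_1, k_1), (t_2 \to f_2, k_2), \dots, (t_m \to f_m, k_m)$, and let $\mathbf q^{(0)} = \mathbf q(n)$, $\mathbf y^{(0)} = \mathbf y(n)$. Recursively define $\mathbf q^{(\ell)}$ as the scheduling control obtained from $\mathbf q^{(\ell-1)}$ by reallocating server $k_\ell$ from queue $t_\ell$ to queue $f_\ell$, and let $\mathbf y^{(\ell)}$ be the corresponding withdrawal vector obtained via Equation (\ref{Yi_qi}) applied to $\mathbf q^{(\ell)}$. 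The updated queue sizes $\hat{\mathbf x}^{(\ell)}(n) = \mathbf x(n) - \mathbf y^{(\ell)}(n)$ are computed with respect to the current stage $\ell$, since the feasibility conditions for the next reallocation refer to the updated queue sizes produced by previous reallocations.

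Next, I would carry out the induction. The base case $\ell = 0$ holds by hypothesis, since $\mathbf y^{(0)} = \mathbf y(n)$ is feasible and $\mathbf q^{(0)}$ is a feasible implementation. For the inductive step, assume $\mathbf q^{(\ell)}$ (and hence $\mathbf y^{(\ell)}$) is feasible. By hypothesis, the $(\ell+1)$-th reallocation is a \emph{feasible} single-server reallocation applied to $\mathbf q^{(\ell)}$, meaning that the connectivity condition $g_{f_{\ell+1},k_{\ell+1}}(n)\cdot g_{t_{\ell+1},k_{\ell+1}}(n)\cdot \mathds{1}_{\{q^{(\ell)}_{k_{\ell+1}}(n)=t_{\ell+1}\}}=1$ and the queue size condition $\hat x^{(\ell)}_{f_{\ell+1}}(n) \ge 1$ both hold. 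By the very definition of feasible single-server reallocation, the resulting scheduling control $\mathbf q^{(\ell+1)}$ is feasible; translating through Equation (\ref{Yi_qi}) then certifies that $\mathbf y^{(\ell+1)}$ satisfies the necessary and sufficient constraints (\ref{YVeq:cons1})--(\ref{YVeq:cons2}). Taking $\ell = m$ completes the proof.

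There is essentially no difficult step: the only subtlety worth highlighting in the write-up is that the feasibility conditions for the $(\ell+1)$-th reallocation must be interpreted relative to the intermediate state produced by the first $\ell$ reallocations, not relative to the original $\mathbf q(n)$. In particular, the condition $\hat x_{f_{\ell+1}}(n) \ge 1$ uses the updated queue size after previous server reallocations have already removed or added packets to queue $f_{\ell+1}$. Once this bookkeeping is made explicit, the lemma reduces to a one-line application of the single-step definition composed $m$ times.
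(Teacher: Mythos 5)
Your proposal is correct: the paper itself omits the proof of this lemma as ``straightforward,'' and your induction on the number of reallocations --- using the definitional fact that a feasible single-server reallocation produces a feasible scheduling control $\mathbf q^{(\ell+1)}$, then passing to the withdrawal vector via Equation (\ref{Yi_qi}) --- is exactly the routine argument intended, including the correct bookkeeping that each step's conditions are checked against the intermediate state $\hat{\mathbf x}^{(\ell)}(n)$. One tiny nit: the paper states constraints (\ref{YVeq:cons1})--(\ref{YVeq:cons2}) only as \emph{necessary} conditions, so rather than invoking them you should simply note that feasibility of $\mathbf y^{(\ell+1)}$ follows by definition from the existence of its feasible implementation $\mathbf q^{(\ell+1)}$; this does not affect the validity of your argument.
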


The proof of Lemma \ref {lem:001} is straightforward and therefore it is not included here.

\begin{lemA} \label{lem:002}
Consider the state $(\mathbf x(n),  \mathbf g(n) )$ and any two feasible withdrawal vectors $\mathbf y(n)$ and   $\mathbf y'(n)$. Then,  starting from $\mathbf y(n)$, the vector $\mathbf y'(n)$ can be obtained by performing a sequence of feasible, single-server reallocations.
\end{lemA}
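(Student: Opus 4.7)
The plan is to interpret the two feasible withdrawal vectors as two feasible assignments of the $K$ servers to queues, and to transform one into the other through single-server reallocations guided by the symmetric-difference decomposition of the two assignments. The key observation is that both $\mathbf{y}(n)$ and $\mathbf{y}'(n)$ correspond to feasible scheduling controls $\mathbf{q}(n)$ and $\mathbf{q}'(n)$, viewed as (semi-)matchings in the bipartite server-queue connectivity graph, and the edges of these two matchings differ only on a set of alternating cycles and alternating paths.

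First, fix any implementations $\mathbf{q}$ of $\mathbf{y}$ and $\mathbf{q}'$ of $\mathbf{y}'$; in the bipartite server--queue graph, every server $k$ with $q_k\neq q'_k$ has degree two in the symmetric difference $\mathbf{q}\triangle\mathbf{q}'$, so the difference decomposes into alternating cycles and alternating paths. For each alternating cycle $i_0-k_0-i_1-\cdots-i_{m-1}-k_{m-1}-i_0$ with $q_{k_j}=i_j$ and $q'_{k_j}=i_{j+1\bmod m}$, I would replace $\mathbf{q}'$ by the scheduling control obtained by re-assigning every $k_j$ back to $i_j$. This new control is still feasible and still implements $\mathbf{y}'$ (the $\pm 1$ contributions at each $i_j$ cancel), while deleting the cycle from $\mathbf{q}\triangle\mathbf{q}'$. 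Iterating, we may take the symmetric difference to consist entirely of alternating paths; further, by concatenating any pair of paths that share a queue $v$ with $v$ loss-endpoint of one and gain-endpoint of the other, we can assume each queue appears among the path endpoints only as a loss endpoint, only as a gain endpoint, or not at all.

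Next, process each alternating path $i_0-k_0-i_1-\cdots-k_{m-1}-i_m$ (with $q_{k_j}=i_j$ and $q'_{k_j}=i_{j+1}$) by single-server reallocations performed \emph{in reverse}: at step $t=1,\ldots,m$, reallocate server $k_{m-t}$ from $i_{m-t}$ to $i_{m-t+1}$. Connectivity $g_{i_{m-t+1},k_{m-t}}(n)=1$ is automatic because this is the server's assignment in $\mathbf{q}'$. Intermediate feasibility is the crucial check: at step~$1$ the count at the gain endpoint $i_m$ rises to $y_{i_m}+1\le y'_{i_m}\le x_{i_m}$, where the first inequality uses that $i_m$ is purely a gain endpoint so $\Delta_{i_m}=y'_{i_m}-y_{i_m}\ge 1$; at any subsequent step $t\ge 2$ the newly gaining queue $i_{m-t+1}$ simply returns to $y_{i_{m-t+1}}\le x_{i_{m-t+1}}$, since it was just decremented at step $t-1$; all other queues along the path either stay unchanged or are temporarily one below their initial (feasible) value. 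When a queue is the gain endpoint of several paths, processing them sequentially remains feasible because each path increments the count by exactly one and the running total never exceeds the final value $y'_{i_m}\le x_{i_m}$; loss endpoints are handled symmetrically.

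The main obstacle is exactly controlling this intermediate feasibility. A naive forward traversal can over-allocate an internal queue that is already at capacity in $\mathbf{q}$, and a direct induction on the $\ell^1$-distance $\sum_i|y_i-y'_i|/2$ can stall in configurations where no server on a surplus queue is connected to any deficit queue. The two reductions (cycle cancellation and path concatenation), combined with reverse-order path processing, simultaneously circumvent both difficulties and yield a sequence of at most $K$ feasible single-server reallocations connecting some implementation of $\mathbf{y}(n)$ to some implementation of $\mathbf{y}'(n)$, which is what the lemma asserts.
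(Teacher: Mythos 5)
Your proposal is correct, and it takes a more careful route than the paper does. The paper's proof of Lemma \ref{lem:002} is essentially a one-liner: it fixes implementations $\mathbf q(n)$ and $\mathbf q'(n)$, writes $\mathbf y'(n)=\mathbf y(n)+\sum_{k=1}^K \mathbf I(q'_k(n),q_k(n))$ (Equation (\ref{eq:0021})), and asserts that each nonzero term is a feasible single-server reallocation simply because server $k$ is connected to both $q_k(n)$ and $q'_k(n)$; no execution order is specified and the condition $\hat x_f(n)\ge 1$ at the moment of each move is never checked. You start from the same per-server decomposition but then supply exactly the bookkeeping the paper skips: you discard the alternating cycles of $\mathbf q\triangle\mathbf q'$, which leave $\mathbf y'(n)$ unchanged but are precisely where a naive execution can deadlock (e.g.\ two length-one queues swapping their servers, where neither individual move is feasible first), and you traverse each remaining alternating path from its gain endpoint backwards, which keeps every intermediate allocation within the queue-length constraints. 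What each approach buys: the paper's version is short and is all that is actually used downstream (the proof of Lemma \ref{lem:I11} only invokes the algebraic identity and then selects terms from it to build its own feasible chain), whereas your cycle-cancellation plus reverse path processing is what makes the literal claim of the lemma --- that $\mathbf y'(n)$ is reachable from $\mathbf y(n)$ by a sequence of \emph{feasible} single-server reallocations --- rigorous, at the cost of the symmetric-difference machinery.
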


\begin{proof}
To prove this lemma we  construct one such sequence next.

Let $\mathbf q(n),\mathbf q'(n) $ denote two server allocations for the implementation of $\mathbf y(n),\mathbf y'(n) $ respectively. Then we can relate $\mathbf y(n)$ and $\mathbf y'(n) $ as follows:

\begin{equation}\label{eq:0021}
 \mathbf y'(n) = \mathbf y(n) + \sum_{k=1}^K \mathbf I(q_k'(n),q_k(n))
\end{equation}

 \noindent since we assume both $\mathbf y(n)$ and $\mathbf y'(n) $ to be feasible,  server $k$ must be connected to both queues $q_k(n)$ and $q_k'(n)$. Therefore, each interchange $\mathbf I(q_k'(n),q_k(n))$ is equivalent to a feasible, single-server reallocation. Note that $q_k(n)=q_k'(n)$ is possible, for some $k$, in which case $\mathbf I(q_k'(n),q_k(n))=0$. By construction, all the interchanges in the right hand side of Equation (\ref{eq:0021}) are feasible.
\end{proof}

We are now ready to prove Lemma \ref{lem:I4} of Section \ref{section-lem:I4}.

\begin{proof}[Proof (Lemma \ref{lem:I4})]
We consider the following three cases assuming $f \neq t$:

Case 1: $f=0$.
This case is not possible by contradiction. By assumption, $D_0\geq +1$, which means that $y^{MB}_0(n) \geq y_0(n) +1$. This case states that an MB policy idled at least one more server than $\pi$. Therefore, $\hat x^{MB}_0(n) \leq -1$. This makes queue $0$ the shortest queue. Allocating the idled server to queue $t$, i.e., the  interchange $\mathbf I(t,0)$, is both feasible (since $\mathbf y(n)$ is feasible by assumption) and balancing (by Lemma \ref{lem:I003}). The interchange $\mathbf I(t,0)$ will result in a withdrawal vector $\mathbf y'(n) = \mathbf y^{MB}(n) + \mathbf I(t,0)$.
Let $s$ be the order of queue $f=0$ when ordering the vector $\mathbf{\hat x}^{MB}(n)$ in a descending manner. Therefore, $s=L+1$. Furthermore, in order for $\mathbf I(t,0)$ to be feasible queue $t$ must not be empty (according to feasibility constraint (\ref{eq:I102.1})) which implies that $\hat x^{MB}_t(n) \geq 1$ and the order of queue $t$ is $l<s$. Therefore, $\hat x^{MB}_f(n) \leq \hat x^{MB}_t(n)-2$ and the interchange $\mathbf I(t,0)$  will reduce the imbalance index by  $2 (s-l)$ according to Equation (\ref{eq:I001}). This implies that the new policy has a smaller imbalance index than an MB policy. This contradicts the fact that any MB policy minimizes the imbalance index.

Case 2: $t=0$.
%
When $t=0$ then the interchange $\mathbf I(f,t)$ is the process of allocating an idled server to queue $f>0$. This, according to Lemma \ref{lem:I003}, is a balancing interchange.

Case 3: $t,f>0$.
We will show that this case will also result in a balancing interchange.
Let $\mathbf y(n)$ be the original withdrawal vector. Let $\mathbf y^*(n)$ be the withdrawal vector resulted from the feasible interchange $\mathbf I(f,t)$, i.e.,
\[
\mathbf y^*(n)= \mathbf y(n) + \mathbf I(f,t)
\]

Using the assumption $D_t \leq -1$ and Equation (\ref{eq:30003}), we arrive at the following:
\begin{eqnarray}\label{eq:I41}
	y^{MB}_t(n) - y_t(n)   \hspace{-1mm} & \leq  \hspace{-1mm} & -1     \nonumber \\
	 y^{MB}_t(n)  \hspace{-1mm} & \leq  \hspace{-1mm} &   y^*_t(n) = y_t(n) -1
\end{eqnarray}
hence,
\begin{eqnarray}\label{eq:I42}
	x_t(n) - y^{MB}_t(n)   \hspace{-3mm} & \geq  \hspace{-3mm} &  x_t(n) -  y^*_t(n) = x_t(n) - ( y_t(n) -1) \nonumber \\
	 \hat x^{MB}_t(n)   \hspace{-3mm} & \geq  \hspace{-3mm} &  \hat x^*_t(n) =  \hat x_t(n) +1, \quad t>0
\end{eqnarray}

Similarly, using the assumption $D_f \geq +1$ and Equation (\ref{eq:30002}), we have
\begin{eqnarray}\label{eq:I43}
	y^{MB}_f(n) - y_f(n)  \hspace{-1mm} & \geq  \hspace{-1mm} & +1     \nonumber \\
	 y^{MB}_f(n) \hspace{-1mm} & \geq  \hspace{-1mm} &   y^*_f(n) = y_f(n) +1
\end{eqnarray}
hence,
\begin{eqnarray}\label{eq:I44}
	x_f(n) - y^{MB}_f(n)  \hspace{-3mm} & \leq  \hspace{-3mm} &  x_f(n) \!-  \! y^*_f(n) =  x_f(n) - ( y_f(n) +1) \nonumber \\
	 \hat x^{MB}_f(n)  \hspace{-3mm} & \leq  \hspace{-3mm} &  \hat x^*_f(n) =  \hat x_f(n) - 1, \quad f>0
\end{eqnarray}

To show that $ \mathbf I(f,t)$ in this case is a balancing interchange, we have to show that $\hat x_f(n) \ge \hat x_t(n) + 1$.  Suppose to the contrary that  $\hat x_f(n) \leq \hat x_t(n)$; then, from Equations (\ref{eq:I42}) and (\ref{eq:I44}), we have
\begin{eqnarray}\label{eq:I45}
	  \hat x_f(n) &\leq & \hat x_t(n)  \nonumber \\
	  \hat x^*_f(n) +1 & \leq & \hat x^*_t(n) -1  \nonumber \\
	  \hat x^*_f(n) & \leq & \hat x^*_t(n) -2
\end{eqnarray}

From (\ref{eq:I42}) and (\ref{eq:I44}) we have,
\begin{eqnarray}\label{eq:I46}
	  \hat x^{MB}_f(n) & \leq &  \hat x^*_f(n) \leq   \hat x^*_t(n) -2 \leq  \hat x^{MB}_t(n) -2 \nonumber \\
	  \hat x^{MB}_f(n) & \leq & \hat x^{MB}_t(n) -2
\end{eqnarray}

The differences satisfy the inequalities $D_f \geq +1$ and $D_t \leq -1$  by assumption, i.e., there is at least one more (respectively one less) server allocated to queue $f$ (respectively queue $t$) under the MB policy. Therefore, $y^*(n)= y^{MB}(n) + \mathbf I(t,f)$ is feasible and Inequality (\ref{eq:I46}) is a contradiction according to Lemma \ref{lem:I41}.
We must have that $\hat x_f(n) \ge \hat x_t(n) + 1$ and by definition the interchange $ \mathbf I(f,t)$  is a balancing interchange.
\end{proof}

\subsection{Proof of Lemma \ref{lem:I6}}\label{appendixC1i}

 
We present the proof of Lemma \ref{lem:I6} of Section \ref{section-lem:I4} in this appendix. Lemma \ref{lem:I11}, stated below, guarantees the existence of a feasible interchange.

\begin{lemA} \label{lem:I11}
Consider a given state $(\mathbf x(n), \mathbf g(n))$ and a  policy   $\pi \in \Pi_{n-1}$ that selects a withdrawal vector $\mathbf y(n)$  during time slot $n$.
Let $F, T $  denote the (non-empty) sets of queues for which $D_f\geq +1$ and $D_t\leq -1$, respectively.
Then,  there exist at least two queues $f\in F$ and $t\in T$ such that  the interchange $\mathbf I(f,t)$ is feasible.
\end{lemA}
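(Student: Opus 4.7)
My plan is to encode the difference between $\pi$'s and an MB policy's server allocations as a directed multigraph, and then apply a reachability/flow-conservation argument to extract a path from some $t\in T$ to some $f\in F$. Translating that path into a chain of single-server reallocations will yield the required feasible interchange $\mathbf I(f,t)$.

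First, I would fix implementations $\mathbf q(n)$ and $\mathbf q^{MB}(n)$ of $\mathbf y(n)$ and $\mathbf y^{MB}(n)$ (both exist by definition of feasibility) and build a directed multigraph $G$ on vertex set $\{0,1,\ldots,L\}$: for every server $k$ with $q_k(n)\ne q_k^{MB}(n)$, add a directed edge from $q_k(n)$ to $q_k^{MB}(n)$ carrying the label $k$. Counting edge incidences at each vertex gives $D_i=\mathrm{in}_G(i)-\mathrm{out}_G(i)$, so $F$ is exactly the set of vertices with positive net in-degree in $G$ and $T$ is exactly the set of vertices with positive net out-degree; in particular, every $t\in T$ has at least one outgoing edge of $G$.

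Second, I would pick any $t\in T$, let $R$ be the set of vertices reachable from $t$ along directed edges of $G$ (with $t\in R$), and show $R\cap F\ne\emptyset$. Because $R$ is forward-closed, no edge of $G$ leaves $R$, so summing the identity $D_i=\mathrm{in}_G(i)-\mathrm{out}_G(i)$ over $i\in R$ yields
\[
\sum_{i\in R} D_i \;=\; \#\{\text{edges of }G\text{ from }\bar R\text{ to }R\} \;\geq\; 0.
\]
Since $t\in R$ contributes $D_t\leq -1$, some $f\in R$ must satisfy $D_f\geq 1$, i.e., $f\in F$. Fix a directed path $t=v_0\to v_1\to\cdots\to v_m=f$ in $G$ and let $\ell_j$ denote the label of its $j$-th edge $v_{j-1}\to v_j$, so that $q_{\ell_j}(n)=v_{j-1}$ and $q_{\ell_j}^{MB}(n)=v_j$.

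Third, I would match this path to the indexing convention in the paper's sufficient conditions for a feasible interchange by setting $r_i:=v_{m+1-i}$ and $k_i:=\ell_{m+1-i}$ for $i=1,\ldots,m$, so that $r_1=f$, $r_{m+1}=t$, and server $k_i$ satisfies $q_{k_i}(n)=r_{i+1}$ and $q_{k_i}^{MB}(n)=r_i$. The connectivity/assignment condition then holds for each $i$: $g_{r_{i+1},k_i}(n)=1$ by feasibility of $\pi$ at $r_{i+1}$, $g_{r_i,k_i}(n)=1$ by feasibility of the MB policy at $r_i$, and $\mathds 1_{\{q_{k_i}(n)=r_{i+1}\}}=1$ by construction. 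For the remaining condition $\hat x_f(n)\ge 1$ when $f>0$, combining $D_f\geq 1$ with the feasibility bound $y_f^{MB}(n)\leq x_f(n)$ gives
\[
\hat x_f(n) \;=\; x_f(n)-y_f(n) \;\geq\; y_f^{MB}(n)-y_f(n) \;=\; D_f \;\geq\; 1,
\]
while for $f=0$ that condition is vacuous. Hence $\mathbf I(f,t)$ is feasible. The main obstacle is the reachability/flow-conservation step of locating an $f\in F$ downstream of $t$; the rest is bookkeeping that translates a generic directed path in $G$ into the specific ``from $r_{i+1}$ to $r_i$'' sequence of single-server reallocations used by the paper's feasibility criteria.
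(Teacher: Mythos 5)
Your proposal is correct, and it reaches the conclusion by a genuinely different (more structural) route than the paper. The paper's proof of Lemma \ref{lem:I11} uses the decomposition of Lemma \ref{lem:002} and then argues iteratively: it starts from a queue $r_1\in F$, extracts one server whose allocation differs between $\pi$ and the MB policy, forms the single-server reallocation $\mathbf I(r_1,r_2)$, and keeps extending the chain through intermediate withdrawal vectors $\mathbf y^1(n),\mathbf y^2(n),\ldots$, closing with a contradiction if no queue of $T$ is reached within $K$ steps. You instead encode the same difference of allocations as a directed multigraph with one edge $q_k(n)\to q_k^{MB}(n)$ per disagreeing server, observe that $D_i$ is the net in-degree, and replace the paper's iteration-plus-contradiction by a static flow-conservation/cut argument on the reachability set of a queue $t\in T$; translating the resulting path into the sequence $(r_i,k_i)$ then verifies exactly the paper's sufficient conditions (\ref{eq:I101})--(\ref{eq:I102.1}), with the same connectivity justification (each server on the chain is feasibly assigned to its tail queue under $\pi$ and to its head queue under the MB policy) and the same bound $\hat x_f(n)\ge D_f\ge 1$. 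Each step checks out, including the index reversal $r_i=v_{m+1-i}$ and the vacuity of the $\hat x_f(n)\ge 1$ condition when $f=0$. What your approach buys is a cleaner, non-iterative argument in which the existence of a surplus queue downstream of any deficit queue is a one-line counting identity, and the fact that at most $K$ single-server reallocations are needed is immediate because the path uses distinct servers; the only small point worth making explicit is that you should take a \emph{simple} directed path (which reachability always provides), so that the servers $k_1,\ldots,k_m$ labeling its edges are indeed distinct and the chain of reallocations is consistent.
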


\begin{proof}
Let $\pi^* \in \Pi^{MB}$ be an MB policy that selects  the  withdrawal vector $\mathbf y^*(n)$ during time slot $n$. Let $\mathbf D= \mathbf y^*(n)- \mathbf y(n)$. Furthermore, let $\mathbf q^*(n)$ and $\mathbf q(n)$ be two implementations of $\mathbf y^*(n)$ and $\mathbf y(n)$ respectively. From Lemma \ref{lem:002} we have:
\begin{equation}\label{eq:I1101}
 \mathbf y^*(n) = \mathbf y(n) + \sum_{k=1}^K \mathbf I(q_k^*(n),q_k(n))
\end{equation}

The summation on the right-hand side of Equation (\ref{eq:I1101}) is composed of $K$ terms, each of which represents a reallocation of a server $k$ from queue $q_k(n)$ to queue $q^*_k(n)$. Such server reallocation can be formulated as an interchange $\mathbf I(q^*_k(n), q_k(n)) $.

In the following, we will selectively use $i$ out of the $K$ terms of the summation  in Equation (\ref{eq:I1101}) to construct a feasible interchange $\mathbf I(r_1, r_{i+1})= \mathbf I(r_1,r_2) +\mathbf I(r_2,r_3) + \cdots + \mathbf I(r_{i},r_{i+1}) $ for some $i \leq K$, with $r_1 \in F$ and $r_{i+1} \in T$.
We will show that such a queue $r_{i+1}, i \leq K $ that belongs to $T$ does exist and the interchange   $\mathbf I(r_1, r_{i+1})$ is feasible.

Let $r_1 \in F, r_1 \in \{1,2, \ldots ,L \}$ then using Equation (\ref{eq:I1101}) we can write
\begin{equation}\label{eq:I1102}
 y^*_{r_1}(n) =  y_{r_1}(n) + \sum_{k=1}^K ( \mathds 1_{\{q_k^*(n)={r_1}\}} - \mathds 1_{\{q_k(n)={r_1}\}} )
\end{equation}
Since ${r_1} \in F$ by our assumption, then we have $D_{r_1} \geq 1$ and
\begin{equation}\label{eq:I1103}
 	y^*_{r_1}(n) -  y_{r_1}(n) \geq 1.
\end{equation}

From Equations (\ref{eq:I1102}) and  (\ref{eq:I1103}) we conclude
\begin{equation}\label{eq:I1104}
 	\sum_{k=1}^K \mathds 1_{\{q_k^*(n)={r_1}\}} \geq  \sum_{k=1}^K \mathds 1_{\{q_k(n)={r_1}\}} +1
\end{equation}

In  words, there is at least one more server allocated to queue ${r_1}$ under $\pi^*$ than the servers allocated to queue ${r_1}$ under $\pi$. Let $k_1$ be one such server. From  (\ref{eq:I1104}) we conclude that  one of the $K$ terms in Equation (\ref{eq:I1101}) must be $\mathbf I(q_{k_1}^*(n),q_{k_1}(n))$  such that  $q^*_{k_1}(n) ={r_1}, q_{k_1}(n) ={r_2}, k_1 \in \{1,2, \ldots K \}$. In other words, a server $k_1$ and  two queues ${r_1}=q_{k_1}^*(n)$ and ${r_2}=q_{k_1}(n)$ must exist such that the interchange $\mathbf I(r_1,r_2)$ is feasible.

The feasibility of $\mathbf I(r_1,r_2)$ stems from the fact that server $k_1$ is allocated to queues $r_1$ and $r_2$ under two different policies, namely $\pi^*$ and $\pi$. This is possible only if
\begin{equation}\label{eq:I1105}
 	g_{r_1,k_1}(n) =  g_{r_2,k_1}(n) = 1
\end{equation}
Furthermore, using Equation (\ref{eq:I1103}) we can write
\begin{eqnarray}\label{eq:I1106}
	y^*_{r_1}(n) &\geq &  y_{r_1}(n) + 1 \nonumber \\
	\text{or }\, x_{r_1}(n) - y^*_{r_1}(n) &\leq &  x_{r_1}(n) - y_{r_1}(n) - 1 \nonumber \\
	\text{hence, } \quad 0 \leq \hat x^*_{r_1}(n) &\leq &  \hat x_{r_1}(n) - 1
\end{eqnarray}

From Equation (\ref{eq:I1106}) we conclude
\begin{equation}\label{eq:I1107}
 	\hat x_{r_1}(n) \geq 1
\end{equation}

Equations (\ref{eq:I1105}) and  (\ref{eq:I1107}) are sufficient for the feasibility of the interchange $\mathbf y(n)+ \mathbf I(r_1,r_2)$.

Consider queue $r_2$ above. One of the following two cases may apply:

Case (1) $r_2 \in T$: The proof of the lemma in this case is completed by letting $f = r_1$ and $t= r_2$. The resulted interchange $\mathbf I(f,t), f \in F, t \in T$ is feasible by construction and the lemma follows.

Case (2) $r_2 \notin T$: Define $\mathbf y^1(n)$ as follows:
\begin{equation}\label{eq:I1108}
 \mathbf y^1(n) = \mathbf y(n) +  \mathbf I(r_1, r_2)
\end{equation}

Using Lemma \ref{lem:001} we conclude that $\mathbf y^1(n)$ is a feasible withdrawal vector. From Equations (\ref{eq:I1108}) and (\ref{eq:newI4}) we can write
\begin{equation}\label{eq:I1109}
  y^1_{r_2}(n) =  y_{r_2}(n) - 1
\end{equation}
From Equation (\ref{eq:I1109}) and since $r_2 \notin T$ we conclude
\begin{equation}\label{eq:I11010}
 y^*_{r_2}(n) \geq  y_{r_2}(n) \geq y^1_{r_2}(n) + 1
\end{equation}

Therefore, one of the terms of the summation in Equation (\ref{eq:I1101}) must be a server reallocation of some server $k_2$ from queue $r_3 = q_{k_2}(n)$ to queue  $r_2 = q^*_{k_2}(n)$, i.e., the interchange $\mathbf I(r_2, r_3)$ is a feasible, single-server reallocation. It follows that $\mathbf y^2(n) $ is feasible, where
\begin{eqnarray}\label{eq:I11011}
	 \mathbf y^2(n) &=& \mathbf y^1(n) +  \mathbf I(r_2, r_3) \nonumber \\
	 	&=& \mathbf y(n) +  \mathbf I(r_1, r_2)+  \mathbf I(r_2, r_3) \nonumber \\
	 	&=& \mathbf y(n) +  \mathbf I(r_1, r_3)
\end{eqnarray}

If $r_3 \in T$ then we complete the proof using the argument in case (1) above. Otherwise, we repeat the argument in case (2) again.

Repeating the previous argument $i$ times, $ 1 \leq i \leq K$,  we arrive at the following relationship:
\begin{equation}\label{eq:I11012}
 \mathbf y^i(n) = \mathbf y(n) +  \sum_{j=1}^i \mathbf I(r_j, r_{j+1}),
\end{equation}
where by construction, each one of the $i$ terms in Equation (\ref{eq:I11012}) above corresponds  uniquely to one of the terms of the summation in Equation (\ref{eq:I1101}). For every $i$ we check to see whether $r_{i+1} \in T$ (in which case the lemma is proved) or not. If not then we have
\begin{equation}\label{eq:I11015}
 y^*_{r_{i+1}}(n) \geq y_{r_{i+1}}(n) \geq  y^i_{r_{i+1}}(n) + 1
\end{equation}

Repeating the  argument $K$ times (one for each term of the summation in Equation (\ref{eq:I1101})) we will show that a queue $r_{i+1} \in T$, where $\mathbf I(r_i, r_{i+1})$ is one of the terms in Equation (\ref{eq:I1101}), must exist.

In order to do that, we assume to the contrary that  $r_{i+1} \notin T, \forall i=1,2, \ldots K$. The $K^{th}$ (last) server reallocation $\mathbf I(r_K, r_{K+1}), r_K= q^*_{k_K}(n), r_{K+1}= q_{k_K}(n)$ will result in the withdrawal vector $\mathbf y^K(n)$, such that,
\begin{equation}\label{eq:I11013}
 \mathbf y^K(n) = \mathbf y(n) +  \sum_{j=1}^K \mathbf I(r_j, r_{j+1})
\end{equation}
Since there is one-to-one correspondence between the summation terms in Equation (\ref{eq:I11013}) and those in Equation (\ref{eq:I1101}) by construction, then we can write
\begin{equation}\label{eq:I11014}
 \mathbf y^K(n) = \mathbf y(n) +  \sum_{k=1}^K \mathbf I(q^*_{k}(n), q_{k}(n)) 
\end{equation}
hence $\mathbf y^K(n) = \mathbf y^*(n)$. However, since $r_{K+1} \notin T$ then
\begin{equation}\label{eq:I11016}
 y^*_{r_{K+1}}(n)\geq y_{r_{K+1}}(n) \geq  y^K_{r_{K+1}}(n) + 1
\end{equation}
have a contradiction.
We conclude that there must exist a queue $r_{i+1} \in T$ such that server $k_i$ reallocation $\mathbf I(r_i, r_{i+1}), r_i= q^*_{k_i}(n), r_{i+1}= q_{k_i}(n)$ is feasible. Let $f=r_{1}$ and $t=r_{i+1}$. It follows that the interchange $\mathbf I(f,t)= \mathbf I(r_1, r_{i+1})$ is feasible and the lemma follows.
\end{proof}

\begin{proof}[Proof for Lemma \ref{lem:I6}]
From its definition and Lemma \ref{lem:I2}, $h_{\pi}  =\sum_{i=0}^L |D_i|/2$ is an integer between 0 and $K$. 
If $h_{\pi} = 0$,  then $\pi$ has the  MB property during time slot $n$ according to Lemma \ref{lem:I1}. 

So, suppose that $h_{\pi} > 0$. From Equation \ref{eq:I2}, we can pair queues $f_i$ and $t_i, 1\leq i \leq h_{\pi}$, such that for every $i$, $D_{f_i}\geq +1$ and $D_{t_i}\leq -1$. 
From Lemmas \ref{lem:I11} and \ref{lem:I4}, the interchange $\mathbf I(f_i,t_i)$ is feasible and balancing.   

Since we have $h_{\pi} $ such pairs of queues, then applying the $h_{\pi} $  balancing interchanges $\mathbf I(f_i,t_i)$ described by Lemma \ref{lem:I4} to policy $\pi$ will result in a policy $\pi^*$ for which $\mathbf D^*=0$, i.e., $\mathbf y^*(n)=\mathbf y^{MB}(n)$. Hence the resulting policy $\pi^* \in \Pi_n$.

\end{proof}



\subsection{Coupling Method and the Proof of Lemma \ref{lem:6}} \label{appendixB}


\subsubsection{The Coupling Method}

 
If we want to compare probability measures on a measurable space, it is often possible to construct random elements  \cite{Lindvall}, with these measures as their distributions, on a common probability space, such that the comparison can be carried out in terms of these random elements rather than the probability measures. The term \emph{stochastic coupling} (or coupling) is often used to refer to any such construction.
In the notation of  \cite{Lindvall}, a formal definition of coupling of two probability measures on the measurable space $(E, \mathcal{E})$ (the state space, e.g., $E=\mathcal{R},\mathcal{R}^d, \mathcal{Z}_+, etc.$) is given below; further details regarding coupling method and its application can be found in \cite{Lindvall}.

A random element in $(E, \mathcal{E})$ is a quadruple $(\Omega,\mathscr{F},\mathbf{P},X)$, where  $(\Omega,\mathscr{F},\mathbf{P})$ is the sample space and $X$ is the class of measurable mappings from $\Omega$ to $E$ ($X$ is an $E$-valued random variable, s.t. $X^{-1}(B) \in \mathscr{F}$ for all $B \in \mathcal{E} $).

\vspace{2mm}
\noindent\textbf{Definition:} \textit{A coupling of the two random elements $(\Omega,\mathscr{F},\mathbf{P},\mathbf{X})$ and $(\Omega',\mathscr{F}',\mathbf{P}',\mathbf{X}')$ in $(E, \mathcal{E})$ is a random element
$(\hat{\Omega},\hat{\mathscr{F}},\hat{\mathbf{P}},(\hat{\mathbf{X}},\hat{\mathbf{X}} ') )$ in ($E^2, \mathcal{E}^2$) such that}

\begin{equation} \label{def2:coupling}
     \mathbf{X} \eqd \hat{\mathbf{X}} \quad \text{ and } \quad \mathbf{X}'\eqd \hat{\mathbf{X}}',
\end{equation}
\textit{where $\eqd$ denotes 'equal in distribution'}.

\begin{rem}
 The above definition makes no assumption about the distribution of the collection of random variables  $\mathbf{X}$; for example, $\mathbf{X}$ may be a sequence of non-i.i.d. random variables.
\endremark
\end{rem}

In the area of optimal control of queues, coupling arguments have been used extensively to prove characteristics of the optimal policies for many queuing systems, c.f. \cite{Walrand}, \cite{Nain}, \cite{Tassiulas}, \cite{ganti} and many others.

We apply the coupling method to our proof as follows: Let $\omega$ and $\pi$ be a given sample path of the system state process and scheduling policy.  The values of the  sequences $\{X(n)\}$ and $\{Y(n)\}$ can be completely determined by $\omega$ and $\pi$. We denote the  ensemble of all random variables as system $S$. A new sample path, $\tilde \omega$ and a new policy $\tilde{\pi}$ are constructed as we specify in the proof.
We denote the  ensemble of all random variables (in the new construction) as system $\tilde S$. Then, in the  coupling definition, $\hat{\omega}=(\omega, \tilde{\omega})$ and the ``coupled'' processes of interest in Equation (\ref{def2:coupling}) will be the queue sizes $\hat{\mathbf{X}} = \{X(n)\}$ and $\hat{\mathbf{X}}' = \{\tilde X(n)\}$.

We define $\omega$  as the sequence of sample values of the random variables $(\mathbf{X}(1),\mathbf{G}(1), \mathbf{Z}(1),\mathbf{G}(2),\mathbf{Z}(2), \ldots)$, i.e., $ \omega \equiv ( {\mathbf{x}}(1), {\mathbf{g}}(1),  {\mathbf{z}}(1), {\mathbf{g}}(2), {\mathbf{z}}(2), \ldots)$. The sample path $\tilde \omega \equiv (\tilde {\mathbf{x}}(1),\tilde {\mathbf{g}}(1), \tilde {\mathbf{z}}(1),\tilde {\mathbf{g}}(2),\tilde {\mathbf{z}}(2), \ldots)$  is constructed such that (a) $\tilde {\mathbf{x}}(1)={\mathbf{x}}(1)$, (b)  $\tilde {\mathbf{g}}(n)$ is the same as ${\mathbf{g}}(n)$, except for  two elements that are exchanged,  (c) $ \tilde {\mathbf{z}}(n)$ is the same as ${\mathbf{z}}(n)$, except for  two elements that are exchanged. The selection of the appropriate elements that are exchanged is detailed in the proof that follows\footnote{In the  system we are studying, $\{\tilde {\mathbf{G}}(n), \tilde {\mathbf{Z}}(n) \}$ has the same distribution as $\{ {\mathbf{G}}(n),  {\mathbf{Z}}(n) \}$, since the distributions of ${\mathbf{G}}(n)$ as well as ${\mathbf{Z}}(n)$ are i.i.d. and will not change when their elements are reordered. The mappings from ${\mathbf{G}}(n)$ to $\tilde {\mathbf{G}}(n)$ and from ${\mathbf{Z}}(n)$ to $\tilde {\mathbf{Z}}(n)$ are one-to-one.}.

The new policy $\tilde{\pi}$ is constructed (by showing how $\tilde{\pi}$ chooses the withdrawal vector $\tilde{\mathbf{y}}(\cdot)$) as detailed in the proof. 
Then using Equation (\ref{sys_evol}), the new states $\mathbf{x}(\cdot), \mathbf{\tilde{x}}(\cdot)$ are determined under $\pi$ and $\tilde \pi$. The goal is to prove that the relation
\begin{equation} \label{prefered1}
    \mathbf{\tilde{x}}(t)\prec_p \mathbf{x}(t)
\end{equation}
is satisfied at all times $t$. Towards this end, the  preferred order (introduced in Section \ref{PreferredOrder-section}) can be described by the following property:

\noindent \textbf{Property 1:} $\mathbf{\tilde{x}}$ is preferred over $ \mathbf{x}  $  ($ \mathbf{\tilde{x}} \prec_p \mathbf{x} $) if and only if one of the  statements R1, R2 or R3 (that we introduced in Section \ref{PreferredOrder-section}) holds. We restate these statements here for the sake of presentation:
\begin{enumerate}
    \item[(R1)] $\mathbf{\tilde{x}} \leq \mathbf{x}$: the two vectors are component-wise ordered;
    \item[(R2)] $\mathbf{\tilde{x}}$ is a two-component permutation of $\mathbf{x}$ as described  in Section \ref{PreferredOrder-section}.
	 \item[(R3)] $\mathbf{\tilde{x}}$ is obtained from $\mathbf{x}$ by performing a \textsl{``balancing interchange"} as described  in Section \ref{PreferredOrder-section}.
\end{enumerate}



\subsubsection{Proof of Lemma \ref{lem:6} of Section \ref{YV:main-result}} ~~~
\label{YV:pi-tilde}

\begin{proof}
Fix an arbitrary policy $\pi \in \Pi_{\tau}^h$ and a sample path $\omega=(\mathbf{x}(1),\mathbf{g}(1),\mathbf{z}(1),\ldots )$, where $\mathbf{x}(.), \mathbf{g}(.)$ and $\mathbf{z}(.)$ are sample values of the random variables $\mathbf{X}(.),\mathbf{G}(.)$ and $\mathbf{Z}(.)$. Let , $\pi^* \in \Pi^{MB}$ be an MB policy that works on the same system. The policy $\pi^*$ chooses a withdrawal vector $y^*(t), \forall t$.

 The proof has two parts; Part 1 provides constructions for $\tilde \omega$ and $\tilde{\pi}$ (as defined by Lemma \ref{lem:6} statement) for times up to $t=\tau$. Part 2 does the same for $t>\tau$.

\vspace{2mm}\noindent\emph{\textbf{Part 1:}}
For the construction of $\tilde \omega$, we let the arrivals and channel states be the same in both systems at all time slots before $\tau$, i.e., $\mathbf{\tilde{z}}(t)=\mathbf{z}(t)$ and $\mathbf{\tilde{g}}(t)=\mathbf{g}(t)$ for all $t<\tau$. We construct $\tilde{\pi}$ such that  it chooses the same withdrawal vector
as $\pi$, i.e., we set $\tilde{ \mathbf{y}}(t) = \mathbf{y}(t)$ for all $t<\tau$. In this case, at $t = \tau$, the resulting queue sizes are equal, i.e., $\mathbf{\tilde{x}}(\tau)=\mathbf{x}(\tau)$.
In the remainder of Part 1, we will construct the policy $\tilde{\pi}$ at time slot $\tau$ such that: (a)  $\tilde{\pi} \in \Pi_{\tau}^{h-1}$, i.e., $\tilde{\pi}$ is closer to $\pi^* \in \Pi^{MB}$ than $\pi$, and (b) the resulting queue length under $\tilde{\pi}$ is preferred over that under the original policy ${\pi}$, i.e., $\mathbf{\tilde{x}}(\tau+1) \prec_p  \mathbf{x}(\tau+1)$.  Condition (b) is  necessary for proving the second part of the lemma, i.e., the domination of policy $\tilde{\pi}$ over ${\pi}$, a result that will be shown in Part 2 of this proof.

At time slot $\tau$,  let $\tilde{ \omega}$ have the same channel connectivities and arrivals as $\omega $, i.e., let $\mathbf{\tilde{g}}(\tau)=\mathbf{g}(\tau)$ and $\mathbf{\tilde{z}}(\tau)=\mathbf{z}(\tau)$. Furthermore, let $\mathbf D = y^*(\tau) - y(\tau)$. Recall that $h = \sum_{i=0}^L |D_i|/2$.
Then one of the following two cases may apply:

1- During time slot $t=\tau$, the original policy $\pi$ differs from $\pi^*$, the MB policy, by \emph{strictly} less than $h$ balancing interchanges. Then    $ \pi \in \Pi_{\tau}^{h-1}$ as well, so we  set $\tilde{ \mathbf{y}}(\tau) = \mathbf{y}(\tau)$. In this case, the resulting queue sizes $\mathbf{\tilde{x}}(\tau+1), \mathbf{x}(\tau+1)$ will be equal, property (R1)  holds true  and (\ref{prefered1}) is satisfied at $t=\tau+1$.

2- During time slot $t=\tau$, $\pi$ differs from the MB policy $\pi^*$ by \emph{exactly} $h$ balancing interchanges. Since $\pi \in \Pi_{\tau}^h$ and $h>0$ and following Lemma \ref{lem:I6}, we can identify  two queues $l$ and $s$ such that: (a) $D_l\geq 1$, (b) $D_s \leq -1$, and (c) $\mathbf I(l,s)$ is feasible.

The construction of $\tilde{\pi}$ is completed in this case by performing the   interchange $\mathbf I(l,s)$, i.e.,
\begin{equation}\label{eq:lem6interchange}
 	\mathbf {\tilde y}(\tau)= \mathbf { y}(\tau)+ \mathbf I(l,s),
\end{equation}
or equivalently,
\begin{equation}\label{eq:lem6interchange1}
 	\mathbf{ \tilde{\hat x}}(\tau)= \mathbf {\hat x}(\tau)- \mathbf I(l,s)
\end{equation}

According to Lemma \ref{lem:I4}, this interchange is balancing.
To complete the construction of $\tilde \omega$, we examine the arrivals under $\omega$ during time slot $\tau$. We set   $  {\tilde{z}_i}(\tau)= {z_i}(\tau), \forall i \neq l,s$. For queues $l$ and $s$,  we do the following: (i) if $x_l(\tau) = x_s(\tau) +1$ and  $  {z}_s(\tau) >  {z_l}(\tau)$ then we swap the arrivals for queues $l$ and $s$, i.e., we let $ {\tilde{z}_l}(\tau)= {z_s}(\tau)$ and $ {\tilde{z}_s}(\tau)= {z_l}(\tau)$, (ii) otherwise, let   $  {\tilde{z}_l}(\tau)= {z_l}(\tau)$ and  $  {\tilde{z}_s}(\tau)= {z_s}(\tau)$. 
The queue lengths at the beginning of  time slot ($\tau+1$) under the two policies  satisfy property (R1) in case (i) and (R3) otherwise.   
In either case, (\ref{prefered1}) is satisfied at $t=\tau+1$.

Starting from a preferred state at $t=\tau+1$, we will show next, in Part 2 of the proof, that a feasible control at time slot $t> \tau$,  such that the constructed policy $\tilde{\pi}$ dominates the original one $\pi$, will always exist. We do that by showing one such construction.

\vspace{2mm}\noindent\emph{\textbf{Part 2:}}
In this part of the proof, we construct  $\tilde{\omega}, \tilde{\pi}$ for times $t>\tau$, such that the  preferred order $\mathbf{\tilde{x}}(t) \prec_p \mathbf{x}(t)$ is valid for all $t>\tau$. This will insure  $\tilde{\pi}$ domination over $\pi$.
We will use induction to complete our proof. We assume that $\tilde{\pi}$ and $\tilde{\omega}$ are defined up to time $n-1$ and  that $\mathbf{\tilde{x}}(n) \prec_p \mathbf{x}(n)$. We will prove that at time slot $n$, $\tilde{\pi}$ can be constructed so that $\mathbf{\tilde{x}}(n+1) \prec_p \mathbf{x}(n+1)$. Thus, we have to show that either R1, R2 or R3 holds at time slot $n+1$.

The following three cases,  corresponding to properties (R1), (R2) and (R3) are considered next.

Case (1) $\mathbf{\tilde{x}}(n) \leq \mathbf{x}(n)$.  The construction of $\tilde \omega$ is straightforward in this case. We set  $\mathbf{\tilde{z}}(n)=\mathbf{z}(n)$ and $\mathbf{\tilde{g}}(n)=\mathbf{g}(n)$.  We construct $\tilde{\pi}$ such that   $\tilde{ \mathbf{y} }(n) = \mathbf{y}(n)$. In this case, its obvious that $\mathbf{\tilde{x}}(n+1) \leq \mathbf{x}(n+1)$ and  (\ref{prefered1}) holds at $t=n+1$.

Case (2) $\mathbf{\tilde{x}}(n)$ is a permutation of $\mathbf{x}(n)$, such that $\mathbf{\tilde{x}}(n)$ can be obtained from $\mathbf{x}(n)$ by permuting components $i$ and $j$ (as described in property R2 of the preferred order). For the construction of $\tilde \omega$,   we set $\tilde{g}_{i,c}(n) = g_{j,c}(n)$ and $\tilde{g}_{j,c}(n) = g_{i,c}(n)$, for all $c=1,2,\ldots,K$; $\tilde{z}_i(n) = z_j(n)$ and $\tilde{z}_j(n) = z_i(n)$ (refer to footnote 7 or \cite{ganti}); the connectivities and arrivals for each one of the remaining queues are the same as in $\omega$.
We construct $\tilde{\pi}$ such that $\tilde{y}_i(n) = y_j(n)$, $\tilde{y}_j(n) = y_i(n)$ and $\tilde{y}_m(n) = y_m(n)$ for all $m\neq i,j$. As a result, $\mathbf{\tilde{x}}(n+1)$ and $\mathbf{x}(n+1)$ satisfy  property  (R2)  and (\ref{prefered1}) is satisfied at $t=n+1$.

Case (3) $\mathbf{\tilde{x}}(n)$ is obtained from $\mathbf{x}(n)$ by performing a balancing interchange for queues $i$ and $j$ as defined in property (R3). In this case $x_i(n) \geq x_j(n)+1$, by the definition in (R3)\footnote{By definition, we have $x_i(n)> x_j(n)$, $\tilde x_i(n)= x_i(n)-1$ and $\tilde x_j(n)= x_j(n)+1$.}. There are three cases to consider:

(3.a) $x_i(n)= x_j(n)+1$. Therefore, $\tilde x_i(n)= x_j(n)$ and $\tilde x_j(n)= x_i(n)$,
i.e., the vectors $\mathbf{x}(n)$ and $\mathbf{\tilde{x}}(n)$ have components $i$ and $j$ permuted and all other components are the same. This case corresponds to case (2) above.

(3.b) $x_i(n)> x_j(n)+1$ and $y_i(n) \leq y_j(n)$.  We construct $\tilde \omega$ as in case (1) above, and we let   $\tilde{y}_m(n) = y_m(n), \forall m \neq j$. Note that it is not feasible for policy $\pi$ to empty queue $i$ in this case. Depending on whether $\pi$ empties queue $j$ or not at $t=n$, the construction of $\tilde{\pi}$  will follow one of the following two cases:


(i)  $y_j(n)<x_j(n)$, i.e., \underline{$\pi$ \textit{does not empty queue} $j$ at $t=n$}, then let $ \tilde y_j(n)=y_j(n)$ (i.e., $\tilde \pi$ is identical to $\pi$ at $t=n$). In this case, property (R3) will be preserved regardless of the arrivals pattern\footnote{Note that if $x_i = x_j+1$ then property (R2) is a special case of (R3).}, hence (\ref{prefered1}) is satisfied at $t=n+1$.

(ii)  $y_j(n)= x_j(n)$, i.e., \underline{$\pi$ \textit{empties queue} $j$ at $t=n$}.
Then if under policy $\pi$ all the servers connected to queue $j$  are allocated, then let $ \tilde y_j(n)=y_j(n)$. As in case (i) above, property (R3) holds and (\ref{prefered1}) satisfied at $t=n+1$.

In the event that $\pi$ empties queue $j$ without exhausting all the servers connected to queue $j$, then $\tilde \pi$ will be constructed such that one of these idling servers is allocated to queue $j$, i.e., $\tilde{y}_j(n) = y_j(n)+1$, so that $\tilde \pi$ preserves the work conservation property at $t=n$.
%
%
%
Since  $\tilde{x}_j(n) = x_j(n)+1$ by property (R3) and $\tilde{z}_j(n)=z_j(n)$ by construction, then we have
$$\tilde{x}_j(n+1) = x_j(n+1)= z_j(n)$$
Since $\tilde{x}_i(n) = x_i(n)-1$  by property (R3), $\tilde{z}_i(n)=z_i(n)$  and $\tilde{y}_i(n)=y_i(n)$ by construction, we have
$$\tilde{x}_i(n+1) = x_i(n+1)-1$$
The rest of the queues will have the same lengths in both systems at $t=n+1$. Therefore, (R1) holds with strict inequality and (\ref{prefered1}) is satisfied at $t=n+1$.  This case shows that a ``more'' balancing policy results in a strict enhancement of the original policy.

Cases (i) and (ii)  are the only possible ones, since $\pi$ cannot allocate more servers to queue $j$ than its length.



(3.c)  $x_i(n)> x_j(n)+1$ and $y_i(n) > y_j(n)$. We consider the following two cases:

(i)  $y_i(n)=x_i(n)$, i.e., \underline{$\pi$ \textit{ empties queue} $i$ at $t=n$}. 
To construct $\tilde \omega$ for this case,   we set $\mathbf{\tilde{z}}(n)=\mathbf{z}(n)$, $\tilde{g}_{m,c}(n)=g_{m,c}(n)$ for all $ m \neq i,j$, and for all $c$.
For queues $i$ and $j$ we do the following:

Let server $r$ be a server that is connected to queue $i$ at time slot $n$ such that $q_{r}(n)=i$ (i.e., server $r$ is allocated to queue $i$ by policy $\pi$ at $t=n$). Now,  we switch the connectivity of server $r$ to queue $i$ and that of server $r$ to queue $j$, i.e., we set $\tilde{g}_{j,r}(n)=g_{i,r}(n)$ and $\tilde{g}_{i,r}(n)=g_{j,r}(n)$ (refer to footnote 7 or \cite{ganti}). The rest of the servers will have the same connectivities to queues $i$ and $j$ under both policies, i.e.,  we set $\tilde{g}_{i,c}(n)=g_{i,c}(n)$ and $\tilde{g}_{j,c}(n)=g_{j,c}(n)$ for all $c \neq r$.


We construct $\tilde{\pi}$ such that $\tilde{q_r}(n)=j$ and $\tilde{q_c}(n)=q_c(n), \forall c\neq r$. This means that $\tilde{\pi}$ differs from $\pi$, at $t=n$, by one server allocation (server $r$) that is allocated to queue $j$ (under $\tilde{\pi}$) rather than queue $i$ (under $\pi$). From equation (\ref{Yi_qi}), we can easily calculate that the resulting queue lengths at $t=n+1$ (for any arrivals pattern) will be:
\begin{eqnarray*}
 \tilde{x}_{m}(n+1)   &=& x_{m}(n+1), \qquad \forall\, m. \nonumber
\end{eqnarray*}

It follows that property (R1) is satisfied and therefore (\ref{prefered1}) is  satisfied at $t=n+1$.

(ii)  $y_i(n)< x_i(n)$, i.e., \underline{$\pi$ \textit{does not empty queue} $i$ at $t=n$}. Then consider the following:

If $\pi$ does not empty queue $j$ at $t=n$ or if  $\pi$ empties queue $j$ and in the process it exhausts all   servers connected to queue $j$, i.e., $\pi$ does not idle any server connected to queue $j$, then we construct $\tilde \omega$ and $\tilde \pi$ similar to case (3.c(i)) above and the same conclusion holds. 
  
If on the other hand,  $\pi$ empties queue $j$ without exhausting all its connected servers and therefore $\pi$ is forced to idle some of the servers connected to queue $j$, then let  $r'$ be one such server. We set $\mathbf{\tilde{z}}(n)=\mathbf{z}(n)$, $\tilde{\mathbf g}(n)= \mathbf g(n)$. We construct $\tilde \pi$ such that $\tilde y_j(n) = y_j(n) +1$, by allocating server $r'$ to queue $j$ under $\tilde \pi$, i.e., we set $\tilde q_{r'}(n)=j$. This is feasible since $\tilde x_j(n) = x_j(n) +1$ by property (R3).  We also have  $\tilde x_i(n) = x_i(n) -1$  (by property (R3)).   Since $\mathbf{\tilde z}(n)= \mathbf z(n)$ by construction, then similar to case (3.b(ii)), property (R1) holds with strict inequality  at $t=n+1$ for any arrivals pattern, and   (\ref{prefered1}) follows.

Since $\pi$ cannot allocate more servers to queue $j$ than its length, therefore,  (i) and (ii)  are the only possible cases.

Note that policy $\tilde{\pi}$ belongs to $\Pi_{\tau}^{h-1}$ by construction in Part 1; its dominance over $\pi$  follows  from relation (\ref{func_class}).
\end{proof}

\section*{acknowledgement}{The  authors wish to thank professor L. Tassiulas 
from University of Thessaly, Volos, Greece  for his insightful comments throughout the course of this work.}

\end{document}